\title{A structural operational semantics for interactions with a look at loops}
\titlerunning{A structural operational semantics for interactions}
\author{Erwan Mahe}{Laboratoire de Mathématiques et Informatique pour la Complexité et les Systèmes, CentraleSupélec, 9 rue Joliot-Curie, Gif-sur-Yvette, France}{}{https://orcid.org/0000-0002-5322-4337}{}
\author{Christophe Gaston}{Laboratory of Systems Requirements and Conformity Engineering, CEA-LIST, P.C. 174, Gif-sur-Yvette, France}{}{https://orcid.org/0000-0001-6865-5108}{}
\author{Pascale Le Gall}{Laboratoire de Mathématiques et Informatique pour la Complexité et les Systèmes, CentraleSupélec, 9 rue Joliot-Curie, Gif-sur-Yvette, France}{}{https://orcid.org/0000-0002-8955-6835}{}
\authorrunning{E.\,Mahe and C.\,Gaston and P.\,Le Gall}
\keywords{interactions, sequence diagrams, distributed \& concurrent systems, formal language, denotational semantics, operational semantics, loop}
\definecolor{bordeau}{rgb}{0.3515625,0,0.234375}
\definecolor{darkspringgreen}{rgb}{0.09, 0.45, 0.27}
\definecolor{CS_Red}{RGB}{150,2,60}
\definecolor{CS_LightRed}{RGB}{207,161,162}
\definecolor{CS_Grey}{RGB}{133,120,148}
\definecolor{hibou_col_lf}{RGB}{22, 22, 130}
\newcommand{\hlf}[1]{\textcolor{hibou_col_lf}{#1}}
\definecolor{hibou_col_pr}{RGB}{22, 130, 22}
\definecolor{hibou_col_ms}{RGB}{15, 86, 15}
\newcommand{\hms}[1]{\textcolor{hibou_col_ms}{#1}}
\newcommand{\shortColRed}[1]{\textcolor{red}{#1}}
\newcommand{\shortColBlue}[1]{\textcolor{blue}{#1}}
\newcommand{\shortColOrange}[1]{\textcolor{orange}{#1}}
\newcommand{\shortColViolet}[1]{\textcolor{violet}{#1}}
\DeclareRobustCommand\doubleVerticalTimesDefault{%
  \leavevmode
  {\sbox0{\ddag}%
   \ooalign{\raisebox{\ht0-\height}{$\times$}\cr
            \raisebox{\depth-\dp0}{\scalebox{1}[-1]{$\times$}}\cr}%
  }%
}
\newcommand{\doubleVerticalTimes}{\scalerel*{\doubleVerticalTimesDefault}{b}}
\newcommand{\globalInterleaving}{||}
\newcommand{\globalStrictSeq}{;}
\newcommand{\globalWeakSeq}{;_{\doubleVerticalTimes}}
\newcommand{\evadesLf}{\downarrow^{\doubleVerticalTimes}}
\newcommand{\collidesLf}{\not{\evadesLf}}
\newcommand{\isPruneBase}{\mathrlap{\raisebox{-.125\height}{\doubleVerticalTimes}}\longrightarrow}
\newcommand{\isPruneOf}[1]{\mathrlap{\raisebox{-.125\height}{\doubleVerticalTimes}}\xrightarrow{#1}}
\newtheorem*{lemma*}{Lemma}
\newtheorem*{theorem*}{Theorem}
\begin{document}

\maketitle

\begin{abstract}
Message Sequence Charts \& Sequence Diagrams are graphical models that represent the behavior of distributed and concurrent systems via the scheduling of discrete and local emission and reception events. We propose an Interaction Language (IL) to formalize such models, defined as a term algebra which includes strict and weak sequencing, alternative and parallel composition and four kinds of loops. This IL is equipped with a denotational-style semantics associating a set of traces (sequences of observed events) to each interaction.
We then define a structural operational semantics in the style of process algebras and formally prove the equivalence of both semantics.
\end{abstract}

\section{Introduction}

Specifying the behavior of distributed and concurrent systems is the object of a large literature. In particular modelling asynchronous communications between concurrent processes is possible under a variety of formalisms, including process algebras \cite{weak_sequential_composition_in_process_algebra}, Petri Nets \cite{dynamic_recursive_petri_nets}, series-parallel languages \cite{series_parallel_languages_and_the_bounded_width_property}, distributed automata \cite{distributed_timed_automata_with_independently_evolving_clocks}, or formalisms derived from Message Sequence Charts (MSC) \cite{operational_semantics_for_msc}. In the context of this paper, we focus on the latter. MSCs are graphical models which represent the exchange of information between sub-systems. Various offshoots of MSCs, which notably include UML Sequence Diagrams (UML-SD), have been proposed and we may call languages from that family "Interaction Languages" (IL). Interactions are particularly interesting due to their graphical nature and ease of understanding. 

In order to use interactions for formal verification, they have to be fitted with formal semantics. A major hurdle in defining those lies in the treatment of weak sequencing. In a few words, weak sequencing allows events taking place on different locations to occur in any order while strictly ordering those that take place on the same location.
The survey \cite{the_many_meanings_of_uml2_sd_a_survey} provides an overview of solutions found in the literature. 
Those can be roughly categorized as follows:
(a) direct denotational semantics, which define the trace semantics of interactions directly. They either rely on partial order sets \cite{semantics_of_interactions_in_uml_2_0,revisiting_semantics_of_interactions_for_trace_validity_analysis} or algebraic operators \cite{UML_interactions_meet_state_machines_an_institutional_approach}; (b) semantics by translation, which rely on translating interactions into some other models such as Petri Nets \cite{compositional_semantics_for_UML2_sequence_diagrams_using_petri_nets} or automata \cite{model_checking_of_uml2_interactions} and then use (generally operational-style) semantics of those other formalisms and; (c) direct operational-style semantics, which notably include that of \cite{operational_semantics_for_msc}.

While approaches in (c) are highly valuable for defining algorithms for formal verification, approaches in (a) are adapted to reason about interactions and their properties. Our contribution provides a framework encompassing (a) and (c) based on an IL which includes strict and weak sequencing, alternative and parallel composition and four semantically distinct loops. We propose a denotational-style semantics extending that of \cite{UML_interactions_meet_state_machines_an_institutional_approach} with the addition of repetition operators in the form of variants of the algebraic Kleene closure. We also define a structural operational semantics in the fashion of process algebras \cite{process_algebra_with_explicit_termination} with a particular care for the handling of weak sequencing and which uses dedicated rules to address the various loops. The equivalence of both semantics is formally proven (with an automated proof made with Coq available in \cite{coq_hibou_label_semantics_equivalence}).
Languages such as MSC or UML-SD only propose a single loop construct. By contrast, our contribution highlights the use of distinct loops to express nuances in the repetition of behaviors.

This paper is organized as follows: Sec.\ref{sec:basic_interactions} introduces the concept of interactions and traces. 
Sec.\ref{sec:semantic_domain} defines repetition operators on sets of traces. 
Sec.\ref{sec:syntax_denotational_semantics} presents the syntax of our IL and defines a trace semantics in denotational-style using operators introduced in the previous sections. 
Sec.\ref{sec:operational_semantics} defines a structural operational semantics on interactions with loops in the style of process calculus. In Sec.\ref{sec:proof_equivalence}, we demonstrate the equivalence of both semantics.
Finally, in Sec.\ref{sec:related_works} we discuss some related works and we conclude in Sec.\ref{sec:conclusion}. Detailed demonstrations are included in the appendix.

\section{Basic interactions \& intuition of their meaning\label{sec:basic_interactions}}

Interactions describe the behavior of distributed and concurrent systems through the perspective of their internal and external communications. They are defined up to a signature $\Omega = (L,M)$ where $L$ is a set of lifelines and $M$ is a set of messages.
Lifelines are abstractions of sub-systems, or groups of sub-systems while messages represent data sent and received by sub-systems.

\subsection{Preliminaries}

The executions of systems are characterized by sequences of events called communication actions (actions for short) which are of two kinds:
either the emission of a message $m \in M$ from a lifeline $l \in L$, denoted by $l!m$, or the reception of a message $m \in M$ by a lifeline $l \in L$, denoted by $l?m$.
We note $\mathbb{A}_\Omega$ the set of actions defined up to $\Omega$. For any such action $a$, $\theta(a)$ denotes the lifeline on which $a$ occurs.

Sequences of actions from $\mathbb{A}_\Omega$, called traces, are words in $\mathbb{A}_\Omega^*$, with "$.$" denoting the classical concatenation law and $\epsilon$ being the empty word (empty trace). We denote by $\mathbb{T}_\Omega = \mathbb{A}_\Omega^*$ the set of traces. Thus, for any two traces $t_1$ and $t_2$, $t_1.t_2$ is the trace composed of the sequence of actions of $t_1$ followed by the sequence of actions of $t_2$. 

We now introduce operators\footnote{We use notations from \cite{UML_interactions_meet_state_machines_an_institutional_approach}.} to compose traces. Those operators model different notions of scheduling: 
the interleaving ($\globalInterleaving$) and the weak sequencing ($\globalWeakSeq$).

The set $t_1 \globalInterleaving t_2$ of interleavings of traces $t_1$ and $t_2$ is defined by:
\[
\begin{array}{lcl}
\epsilon \globalInterleaving t_2 &=& \{ t_2 \}
\\
t_1 \globalInterleaving \epsilon &=& \{ t_1 \}
\\
(a_1.t_1) \globalInterleaving (a_2.t_2) &=& 
\{a_1.t ~|~ t \in t_1 \globalInterleaving (a_2.t_2)\}
\cup \{a_2.t ~|~ t \in (a_1.t_1) \globalInterleaving t_2\}
\end{array}
\]

Interleaving allows elements of distinct traces to be reordered w.r.t. one another while preserving the order that is specific to each trace.
By contrast, weak sequencing only allows such permutations when actions do not occur on the same lifeline. Defining weak sequencing requires introducing a predicate:
for any trace $t$ and lifeline $l$, the predicate $t \doubleVerticalTimes l$ means that $t$ contains at least an action occurring on $l$ (we say that $t$ has conflicts w.r.t. $l$). It is defined as follows:
\[
\begin{array}{lcl}
\epsilon \doubleVerticalTimes l &=& \bot\\
(a.t) \doubleVerticalTimes l &=& (\theta(a) = l) \vee (t \doubleVerticalTimes l)
\end{array}
\]

The set $t_1 \globalWeakSeq t_2$ of weak sequencing of traces $t_1$ and $t_2$ is defined by:
\[
\begin{array}{lcl}
\epsilon \globalWeakSeq t_2 &=& \{ t_2 \}
\\
t_1 \globalWeakSeq \epsilon &=& \{ t_1 \}
\\
(a_1.t_1) \globalWeakSeq (a_2.t_2) &=& 
\{ a_1.t ~|~ t \in t_1 \globalWeakSeq (a.t_2) \}
\cup 
\left\{ a_2.t ~\middle|~
\begin{array}{c}
t \in (a_1.t_1) \globalWeakSeq t_2,\; \neg (a_1.t_1 \doubleVerticalTimes \theta(a_2)) 
\end{array}
\right\}
\end{array}
\]

The previous binary operators ("$.$", "$\globalWeakSeq$" \& "$\globalInterleaving$") defined on traces are canonically extended to sets of traces as follows. 
Given $T_1$ and $T_2$ two sets of traces, $T_1 \globalStrictSeq T_2$ denotes the set of all traces $t_1.t_2$ s.t. $t_1 \in T_1$ and $t_2 \in T_2$. Likewise, with $\diamond \in \{\globalWeakSeq,~\globalInterleaving\}$, $T_1 \diamond T_2$ is the union of all the sets $t_1 \diamond t_2$ s.t. $t_1 \in T_1$ and $t_2 \in T_2$.
By choosing "$\globalStrictSeq$" for denoting the extension of "." to sets of traces, we adopt the same notation as in~\cite{UML_interactions_meet_state_machines_an_institutional_approach}. "$\globalStrictSeq$" is called the strict sequencing operator.
The use of the strict sequencing ("$;$"), weak sequencing ("$\globalWeakSeq$") and interleaving ("$\globalInterleaving$") operators is illustrated on the right of Fig.\ref{fig:basic_interaction_example} so as to compute the semantics of an example interaction. For instance, weak sequencing allows $\shortColRed{l_1!m_2}$ to be reordered before $\shortColOrange{l_3?m_1}$ but not before $\shortColOrange{l_1!m_1}$ while interleaving allows $\shortColBlue{l_1!m_4}$ to be placed anywhere w.r.t. $\shortColViolet{l_1!m_3}$ and $\shortColViolet{l_2?m_3}$.

\subsection{Basic interactions\label{subsec:basic_interactions}}

Interactions can be drawn as "sequence diagrams" of which an example is given in the left of Fig.\ref{fig:basic_interaction_example}. Lifelines are drawn as vertical lines. Emission and reception actions are drawn as horizontal arrows carrying the transmitted message and which respectively exit the emitting lifeline or point towards the receiving lifeline. When a direct emission-reception causality can be identified we draw both actions as a single arrow from the emitter towards the receiver.

The top to bottom direction relates to the passing of time. An action (arrow) drawn above another generally occurs beforehand. This scheduling of actions corresponds to the weak sequencing operator. By contrast, strict sequencing may be used to enforce precedence relations between actions occurring on different lifelines. In~Fig.\ref{fig:basic_interaction_example}, the arrow carrying $m_1$ and specifying its passing between $l_1$ and $l_3$ is modelled by the interaction $strict(\shortColOrange{l_1!m_1},\shortColOrange{l_3?m_1})$. The fact that this arrow stands above that carrying $m_2$ can be modelled using the weak sequencing operator: $seq(strict(\shortColOrange{l_1!m_1},\shortColOrange{l_3?m_1}),strict(\shortColRed{l_1!m_2},\shortColRed{l_2?m_2}))$. Using $seq$ here instead of $strict$ allows for instance $\shortColRed{l_2?m_2}$ to occur before $\shortColOrange{l_3?m_1}$ even though the latter is drawn above. However $\shortColRed{l_1!m_2}$ cannot occur before $\shortColOrange{l_1!m_1}$ because they both occur on~$l_1$.

Parallel and alternative compositions can also be used to specify more complex behavior. On Fig.\ref{fig:basic_interaction_example}, the passing of $m_3$ and the emission of $m_4$ are scheduled using parallel composition. In the diagram representation this corresponds to the box labelled with "par". This can be modelled with the term $par(strict(\shortColViolet{l_1!m_3},\shortColViolet{l_2?m_3}),\shortColBlue{l_1!m_4})$. Actions scheduled with $par$ can occur in any order w.r.t. one another. Here, $\shortColBlue{l_1!m_4}$ can occur before $\shortColViolet{l_1!m_3}$, after $\shortColViolet{l_2?m_3}$ or in between those two actions. Alternative composition proposes an exclusive non-deterministic choice between behaviors. Like $par$, $alt$ is drawn as a box labelled with "alt".

\begin{figure}
    \centering
\begingroup
\setlength{\tabcolsep}{1pt} 
\renewcommand{\arraystretch}{.2} 
\begin{tabular}{cc}
\makecell{
\includegraphics[scale=.325]{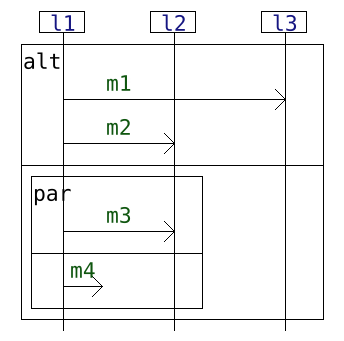}
}
&
\makecell[l]{
\[
\begin{array}{l}
= \left(
\begin{array}{c}
(\{\shortColOrange{l_1!m_1}\} \globalStrictSeq \{\shortColOrange{l_3?m_1}\})
\\
\globalWeakSeq 
(\{\shortColRed{l_1!m_2}\} \globalStrictSeq \{\shortColRed{l_2?m_2}\})
\end{array}
\right)
\cup 
\left(
\begin{array}{c}
(\{\shortColViolet{l_1!m_3}\} \globalStrictSeq \{\shortColViolet{l_2?m_3}\})
\\
\globalInterleaving
\{\shortColBlue{l_1!m_4}\}
\end{array}
\right)
\\
\\
= \left(
\begin{array}{c}
\{\shortColOrange{l_1!m_1}.\shortColOrange{l_3?m_1}\}
\\
\globalWeakSeq 
\{\shortColRed{l_1!m_2}.\shortColRed{l_2?m_2}\}
\end{array}
\right)
\cup 
\left(
\begin{array}{c}
\{\shortColViolet{l_1!m_3}.\shortColViolet{l_2?m_3}\}
\\
\globalInterleaving
\{\shortColBlue{l_1!m_4}\}
\end{array}
\right)
\\
\\
= \left\{
\begin{array}{c}
\shortColOrange{l_1!m_1}.\shortColOrange{l_3?m_1}.\shortColRed{l_1!m_2}.\shortColRed{l_2?m_2},\\
\shortColOrange{l_1!m_1}.\shortColRed{l_1!m_2}.\shortColOrange{l_3?m_1}.\shortColRed{l_2?m_2},\\
\shortColOrange{l_1!m_1}.\shortColRed{l_1!m_2}.\shortColRed{l_2?m_2}.\shortColOrange{l_3?m_1}
\end{array}
\right.
\left.
\begin{array}{c}
\shortColViolet{l_1!m_3}.\shortColViolet{l_2?m_3}.\shortColBlue{l_1!m_4},\\
\shortColViolet{l_1!m_3}.\shortColBlue{l_1!m_4}.\shortColViolet{l_2?m_3},\\
\shortColBlue{l_1!m_4}.\shortColViolet{l_1!m_3}.\shortColViolet{l_2?m_3}
\end{array}
\right\}
\\
\end{array}
\]
}
\end{tabular}
\endgroup 
    \caption{Example of a basic interaction \& its trace semantics}
    \label{fig:basic_interaction_example}
\end{figure}

\section{Repetition operators in the semantic domain\label{sec:semantic_domain}}

Scheduling operators define compositions of traces obtained from enabling or forbidding the reordering of actions according to some scheduling policy. All three are associative (in addition, $\globalInterleaving$ is commutative) and admit $\{\epsilon\}$ as a neutral element. As a result, we can define (Kleene) closures of those operators to specify repetitions.

\begin{definition}[Kleene closures\label{def:kleene_closure}]For any $\diamond \in \{ \globalStrictSeq ,~\globalWeakSeq ,~ \globalInterleaving \}$ and any $T \in \mathcal{P}(\mathbb{T}_\Omega)$,

\begin{minipage}{0.45\textwidth}
the Kleene closure $T^{\diamond *}$ of $T$ is defined by:
\end{minipage}
\begin{minipage}{0.5\textwidth}
\[
\left\{
\begin{array}{lclr}
T^{\diamond 0}
&
=
&
\{ \epsilon \}
&
\\
T^{\diamond j}
&
=
&
T \diamond T^{\diamond (j-1)}
&
\textbf{for } j > 0
\\
T^{\diamond *}
&
=
&
\bigcup_{\substack{j \in \mathbb{N}}} T^{\diamond j}
&
\end{array}
\right.
\]
\end{minipage}
\end{definition}

The three Kleene closures $~^{\globalStrictSeq *}$, $~^{\globalWeakSeq *}$ and $~^{\globalInterleaving *}$
are respectively called, strict, weak and interleaving Kleene closures.

Within the K-closure $T^{\diamond *}$ we can find traces obtained from the repetition (using $\diamond$ as a scheduler) of any number of traces of $T$. 
In the example from Fig.\ref{fig:examples_weak_Kleene_closure} we consider a set $T$ containing two traces $\shortColOrange{l_1!m_1}.\shortColOrange{l_2!m_2}$ and $\shortColRed{l_2?m_1}$. The first 3 powersets of $T$ (i.e. $T^{\globalWeakSeq 0} \cup T^{\globalWeakSeq 1} \cup T^{\globalWeakSeq 2}$) are displayed, the rest of the weak K-closure of $T$ (i.e. $T^{\globalWeakSeq *}$) is represented by the $\cdots$.

\begin{figure}[ht]
    \centering
\[
\left\{
\begin{array}{l}
\shortColOrange{l_1!m_1}.\shortColOrange{l_2!m_2},\\
\shortColRed{l_2?m_1}
\end{array}
\right\}
^{\globalWeakSeq *}
= 
\left\{
\begin{array}{lclcl}
\begin{array}{l}
\epsilon,\\
\shortColOrange{l_1!m_1}.\shortColOrange{l_2!m_2},\\
\shortColRed{l_2?m_1},
\end{array}
&
~
&
\begin{array}{l}
\shortColOrange{l_1!m_1}.\shortColOrange{l_2!m_2}.\shortColOrange{l_1!m_1}.\shortColOrange{l_2!m_2},\\
\shortColOrange{l_1!m_1}.\shortColOrange{l_1!m_1}.\shortColOrange{l_2!m_2}.\shortColOrange{l_2!m_2},\\
\shortColOrange{l_1!m_1}.\shortColOrange{l_2!m_2}.\shortColRed{l_2?m_1},\\
\shortColRed{l_2?m_1}.\shortColRed{l_2?m_1},\\
\shortColRed{l_2?m_1}.\shortColOrange{l_1!m_1}.\shortColOrange{l_2!m_2},\\
\shortColOrange{l_1!m_1}.\shortColRed{l_2?m_1}.\shortColOrange{l_2!m_2},\\
\end{array}
&
~
&
\begin{array}{l}
\cdots
\end{array}
\end{array}
\right\}
\]
\caption{Example illustrating the weak Kleene closure}
\label{fig:examples_weak_Kleene_closure}
\end{figure}

Let us remark a useful property of the K-closures which is that $T \diamond T^{\diamond *} = T^{\diamond *}$.

Given a scheduling operator $\diamond \in \{\globalStrictSeq,~\globalWeakSeq,~\globalInterleaving\}$ whenever $a.t \in T_1 \diamond T_2$ (with $a$ and $t$ any action and trace and $T_1$ and $T_2$ any sets of traces), it may be so that action $a$ is taken from a trace $a.t'$ that belongs to either $T_1$ or $T_2$. In Def.\ref{def:restricted_scheduling_operators}, we define restricted versions of the scheduling operators so as to impose action $a$ to be taken from $T_1$.

\begin{definition}[Restricted scheduling operators\label{def:restricted_scheduling_operators}]
For any $\diamond \in \{\globalStrictSeq,~\globalWeakSeq,~\globalInterleaving\}$, we define the operator $\diamond^{\Lsh}$ such that for any sets of traces $T_1$ and $T_2$ we have:
\[
T_1 \diamond^{\Lsh} T_2
=
\left\{
~
t \in T_1 \diamond T_2
~\middle|~
(t = a.t')
\Rightarrow 
(
\exists~t_1 \in \mathbb{T}_\Omega, \text{ s.t. }
(a.t_1 \in T_1)
~\wedge~
(t \in \{t_1\} \diamond T_2)
)
~
\right\}
\]
\end{definition}

As an example, given $T_1 = \{\shortColRed{l_1!m}.\shortColOrange{l_1?m}\}$ and $T_2 = \{\shortColBlue{l_2!m}\}$, we have:
\[
\begin{array}{ccc}
T_1 \globalWeakSeq T_2 =
\left\{
\begin{array}{l}
\shortColRed{l_1!m}.\shortColOrange{l_1?m}.\shortColBlue{l_2!m},\\
\shortColRed{l_1!m}.\shortColBlue{l_2!m}.\shortColOrange{l_1?m},\\
\shortColBlue{l_2!m}.\shortColRed{l_1!m}.\shortColOrange{l_1?m}
\end{array}
\right\}
&
~~\text{ and }~~
&
T_1 \globalWeakSeq^\Lsh T_2 =
\left\{
\begin{array}{l}
\shortColRed{l_1!m}.\shortColOrange{l_1?m}.\shortColBlue{l_2!m},\\
\shortColRed{l_1!m}.\shortColBlue{l_2!m}.\shortColOrange{l_1?m}
\end{array}
\right\}
\end{array}
\]

We use the restriction on operators to define Head-First closures (abbr. HF-closure) of scheduling operators in Def.\ref{def:head_first_closures}.

\begin{definition}[Head-first closures\label{def:head_first_closures}]
For any $\diamond \in \{\globalStrictSeq,~\globalWeakSeq,~\globalInterleaving\}$, we define the Head-First closure of $\diamond$ as $^{\diamond^{\Lsh} *}$ i.e. the Kleene closure of the restricted $\diamond^{\Lsh}$ operator.
\end{definition}

In the following we will show that HF-closure and K-closure are equivalent for $\globalStrictSeq$ and $\globalInterleaving$ but that this is not the case for $\globalWeakSeq$.

\begin{lemma}\label{lem:strict_and_interleaving_kleene_closure_operational_charac_on_traces}
For any $\diamond \in \{\globalStrictSeq,~\globalInterleaving\}$, $T \in \mathcal{P}(\mathbb{T}_\Omega)$, $t$ in $\mathbb{T}_\Omega$ and $a \in \mathbb{A}_\Omega$ we have:
\[
(a.t \in T^{\diamond *}) \Rightarrow 
\left(
\exists~ t' \in \mathbb{T}_\Omega
\text{ s.t. }
(a.t' \in T)
\wedge~(t \in \{t'\} \diamond T^{\diamond *})
\right)
\]
\end{lemma}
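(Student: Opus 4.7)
The plan is to argue by induction on the least $j \in \mathbb{N}$ for which $a.t \in T^{\diamond j}$; such a $j$ exists since $T^{\diamond *} = \bigcup_{j \in \mathbb{N}} T^{\diamond j}$, and necessarily $j \geq 1$ because $T^{\diamond 0} = \{\epsilon\}$ cannot contain the non-empty trace $a.t$. The base case $j = 1$ is immediate: both $\diamond = \globalStrictSeq$ and $\diamond = \globalInterleaving$ admit $\{\epsilon\}$ as a neutral element, so $T^{\diamond 1} = T \diamond \{\epsilon\} = T$, whence $a.t \in T$. The choice $t' = t$ then works since $\epsilon \in T^{\diamond *}$ yields $t \in \{t\} \diamond \{\epsilon\} \subseteq \{t\} \diamond T^{\diamond *}$.

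For the inductive step $j \geq 2$, I would unfold $T^{\diamond j} = T \diamond T^{\diamond(j-1)}$ and use the trace-level definition of $\diamond$ to determine from which operand the leading action $a$ stems. If $a$ comes from a trace $a.t_1 \in T$, then $t$ lies in $\{t_1\} \diamond T^{\diamond(j-1)} \subseteq \{t_1\} \diamond T^{\diamond *}$ and $t' := t_1$ works. Otherwise $a$ stems from the right operand. For strict sequencing this forces the chosen left trace to be $\epsilon$, reducing to $a.t \in T^{\diamond(j-1)}$ and allowing a direct application of the induction hypothesis. For interleaving, one obtains some $a.t_r' \in T^{\diamond(j-1)}$ paired with some $t_l \in T$ such that $t \in \{t_l\} \globalInterleaving \{t_r'\}$, and the induction hypothesis applied to $a.t_r'$ yields some $t''$ with $a.t'' \in T$ and $t_r' \in \{t''\} \globalInterleaving \{u\}$ for some $u \in T^{\diamond *}$.

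The main obstacle will be closing this interleaving subcase. It requires invoking the associativity and commutativity of $\globalInterleaving$ to rewrite $\{t_l\} \globalInterleaving (\{t''\} \globalInterleaving \{u\}) = \{t''\} \globalInterleaving (\{t_l\} \globalInterleaving \{u\})$, and then recognising $\{t_l\} \globalInterleaving \{u\} \subseteq T \globalInterleaving T^{\diamond *} = T^{\diamond *}$ via the identity $T \diamond T^{\diamond *} = T^{\diamond *}$ noted just above the statement. This places $t \in \{t''\} \globalInterleaving T^{\diamond *}$, so $t' := t''$ completes the argument. Strict sequencing avoids this juggling because, lacking any permutation of actions across the split point, each extraction collapses cleanly into either the leftmost $T$-copy or one fewer iteration of the closure, with no need for algebraic rearrangement.
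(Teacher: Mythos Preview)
Your proposal is correct and follows essentially the same route as the paper's proof: induction on the exponent $j$ witnessing $a.t \in T^{\diamond j}$, the same case split on whether the head action $a$ is drawn from the left factor $T$ or from $T^{\diamond(j-1)}$, and the same use of associativity, commutativity, and the absorption identity $T \globalInterleaving T^{\globalInterleaving *} = T^{\globalInterleaving *}$ to close the interleaving subcase. Your choice to induct on the \emph{least} such $j$ is a slightly more careful phrasing than the paper's, but the argument is otherwise identical.
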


\begin{proof}
By induction on $j$ given $a.t \in T^{\diamond j}$. In the case of $\globalInterleaving$ we use its commutativity.
\end{proof}

\begin{lemma}[Equivalence of HF \& K closures for $\globalStrictSeq$ \& $\globalInterleaving$]
\label{lem:equivalence_headfirst_and_kleene_closure_for_strict_and_interleaving}
For any set of traces $T$:
\[
\begin{array}{ccc}
T^{\globalStrictSeq^{\Lsh} *} = T^{\globalStrictSeq *}
&
~~\text{ and }~~
&
T^{\globalInterleaving^{\Lsh} *} = T^{\globalInterleaving *}
\end{array}
\]
\end{lemma}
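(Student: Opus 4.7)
The plan is to prove the two inclusions separately. The inclusion $T^{\diamond^{\Lsh} *} \subseteq T^{\diamond *}$ is immediate: by Def.~\ref{def:restricted_scheduling_operators} we have $T_1 \diamond^{\Lsh} T_2 \subseteq T_1 \diamond T_2$ for all sets of traces $T_1, T_2$, and a routine induction on $j$ transfers this to $T^{\diamond^{\Lsh} j} \subseteq T^{\diamond j}$, whence the inclusion holds on the unions.

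For the reverse inclusion $T^{\diamond *} \subseteq T^{\diamond^{\Lsh} *}$, I would proceed by strong induction on $|t|$ for $t \in T^{\diamond *}$. The base case $t = \epsilon$ is immediate since $\epsilon \in T^{\diamond^{\Lsh} 0}$. For the inductive step, write $t = a.t''$ and invoke Lemma~\ref{lem:strict_and_interleaving_kleene_closure_operational_charac_on_traces} to obtain some $u$ with $a.u \in T$ and $t'' \in \{u\} \diamond T^{\diamond *}$. Unfolding this last membership yields an $s \in T^{\diamond *}$ with $t'' \in \{u\} \diamond \{s\}$. Because $|t''| = |u| + |s|$ for both $\globalStrictSeq$ (where $t'' = u.s$) and $\globalInterleaving$ (which preserves total length), we have $|s| < |t|$, so the induction hypothesis gives $s \in T^{\diamond^{\Lsh} *}$, say $s \in T^{\diamond^{\Lsh} k}$.

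It then remains to check that $t$ belongs to $T \diamond^{\Lsh} T^{\diamond^{\Lsh} k}$, which places $t$ in $T^{\diamond^{\Lsh} (k+1)} \subseteq T^{\diamond^{\Lsh} *}$. Membership of $t$ in $T \diamond T^{\diamond^{\Lsh} k}$ is straightforward: for strict, $t = (a.u).s$ directly; for interleaving, $t = a.t'' \in (a.u) \globalInterleaving s$, since prepending $a$ to any interleaving of $u$ and $s$ produces a valid interleaving of $a.u$ with $s$. The restriction clause of Def.~\ref{def:restricted_scheduling_operators} is then witnessed by taking $t_1 := u$, because $a.t_1 = a.u \in T$ and $t'' \in \{u\} \diamond \{s\} \subseteq \{u\} \diamond T^{\diamond^{\Lsh} k}$.

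The argument is essentially uniform across strict sequencing and interleaving; all the real work happens inside Lemma~\ref{lem:strict_and_interleaving_kleene_closure_operational_charac_on_traces}, whose own proof (for $\globalInterleaving$) relies on commutativity. The main points of care here are the length arithmetic justifying the strong induction and the verification of the restriction clause. I do not expect a substantial obstacle beyond these, since once Lemma~\ref{lem:strict_and_interleaving_kleene_closure_operational_charac_on_traces} is in hand the two inclusions essentially unfold the definitions; the fact that this strategy will not carry over to $\globalWeakSeq$ is also revealing, as it exactly reflects the failure of an analogue of Lemma~\ref{lem:strict_and_interleaving_kleene_closure_operational_charac_on_traces} in that case.
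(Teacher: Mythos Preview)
Your proposal is correct and follows essentially the same route as the paper's own proof: one inclusion is immediate from $\diamond^{\Lsh} \subseteq \diamond$, and the other is by (strong) induction on the length of a member trace, invoking Lemma~\ref{lem:strict_and_interleaving_kleene_closure_operational_charac_on_traces} to peel off the head action and then applying the induction hypothesis to the shorter residual in $T^{\diamond *}$. Your write-up is in fact slightly more careful than the paper's, making explicit both the length arithmetic that justifies the induction and the verification of the restriction clause of Def.~\ref{def:restricted_scheduling_operators}.
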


\begin{proof}
By induction on a member trace $t$.
\end{proof}

Let us detail a counter example showing that the weak K-closure $~^{\globalWeakSeq *}$ and the weak HF-closure $~^{\globalWeakSeq^{\Lsh} *}$ are not equivalent.
Given $T=\{l_1!m_1.l_2?m_1,~l_2!m_2\}$, let us consider the powerset $T^{\globalWeakSeq 2}$ of $T$.
By definition, $\{\shortColRed{l_2!m_2}\} \globalWeakSeq \{\shortColViolet{l_1!m_1}.\shortColViolet{l_2?m_1}\} \subset T^{\globalWeakSeq 2}$. Here we can choose to take $\shortColViolet{l_1!m_1}$ as a first action and therefore $t = \shortColViolet{l_1!m_1}.\shortColRed{l_2!m_2}.\shortColViolet{l_2?m_1} \in T^{\globalWeakSeq 2}$. However, $t \not\in T \globalWeakSeq^\Lsh T = T^{\globalWeakSeq^\Lsh 2}$. Also, we have $t \not\in T^{\globalWeakSeq^\Lsh j}$ for any $j$ smaller or greater that $2$. This can be explained by not having the correct actions and/or not the right numbers of actions to reconstitute $t$. As a result $t \not\in T^{\globalWeakSeq^\Lsh *}$ and hence $T^{\globalWeakSeq *} \not\subseteq T^{\globalWeakSeq^{\Lsh} *}$. This example will be further illustrated in Sec.\ref{subsec:illustrative_example} with the help of the operational semantics.

\section{Syntax \& denotational semantics\label{sec:syntax_denotational_semantics}}

Interactions terms are defined inductively. Basic building blocks include the empty interaction $\varnothing$ which specifies the empty behavior $\epsilon$ (observation of no action) and any atomic action $a$, which specifies the single-element trace $a$ (observation of $a$). More complex behavior can then be specified inductively using:
\begin{itemize}
    \item binary constructors, which are $strict$, $seq$, $par$ and $alt$ (introduced in Sec.\ref{subsec:basic_interactions})
    \item unary constructors, which are $loop_X$ (the strict loop), $loop_H$ (the head loop up to the weak sequencing operator), $loop_S$ (the weak loop) and $loop_P$ (the parallel loop)
\end{itemize}

\begin{definition}[Interaction Language]\label{def:interaction_language}
We denote by $\mathbb{I}_\Omega$ the set of terms inductively defined by the set of operation symbols $\mathcal{F} = \mathcal{F}_0 \cup \mathcal{F}_1 \cup \mathcal{F}_2$ s.t.:
\begin{itemize}
    \item symbols or arity $0$ (constants) are $\mathcal{F}_0 = \{\varnothing\} \cup \mathbb{A}_\Omega$
    \item symbols of arity $1$ are $\mathcal{F}_1 = \{loop_X,~loop_H,~loop_S,~loop_P\}$
    \item symbols of arity $2$ are $\mathcal{F}_2 = \{strict,~seq,~par,~alt\}$
\end{itemize}
\end{definition}

In Def.\ref{def:interaction_language} we define our Interaction Language (IL) as a set of terms  $\mathbb{I}_\Omega$ inductively defined from $\mathcal{F}$ a finite set of symbols provided with arity in $\mathbb{N}$. The set of sets of traces $\mathcal{P}(\mathbb{T}_\Omega)$ admits the structure of a $\mathcal{F}$-algebra using operators introduced in Sec.\ref{sec:basic_interactions} and Sec.\ref{sec:semantic_domain}. The denotational semantics of interactions is then defined in Def.\ref{def:denotational_semantics} using the initial homomorphism associated to this $\mathcal{F}$-algebra.

\begin{definition}[Denotational semantics]\label{def:denotational_semantics}
$\mathcal{A} = (\mathcal{P}(\mathbb{T}_\Omega),~\{f^\mathcal{A} ~|~ f \in \mathcal{F}\})$ is the $\mathcal{F}$-algebra defined by its carrier $\mathcal{P}(\mathbb{T}_\Omega)$ and the following interpretations of the operation symbols in $\mathcal{F}$:
\[
\begin{array}{ccccc}
\begin{array}{lcl}
\varnothing^\mathcal{A} & = & \{\epsilon\} \\
a^\mathcal{A}           & = & \{a\} 
\end{array}
&
~~~~
&
\begin{array}{lcl}
strict^\mathcal{A} & = & \globalStrictSeq\\
seq^\mathcal{A}    & = & \globalWeakSeq\\
par^\mathcal{A}    & = & \globalInterleaving\\
alt^\mathcal{A}    & = & \cup
\end{array}
&
~~~~
&
\begin{array}{lcl}
loop_X^\mathcal{A} & = & ~^{\globalStrictSeq *}\\
loop_H^\mathcal{A} & = & ~^{\globalWeakSeq^\Lsh *}\\
loop_S^\mathcal{A} & = & ~^{\globalWeakSeq *}\\
loop_P^\mathcal{A} & = & ~^{\globalInterleaving *}
\end{array}
\end{array}
\]
 The denotational semantics $\sigma_d$ of $\mathbb{I}_\Omega$ is the unique $\mathcal{F}$-homomorphism $\sigma_d:\mathbb{I}_\Omega \rightarrow \mathcal{P}(\mathbb{T}_\Omega)$
 between the free term $\mathcal{F}$-algebra\footnote{The free term $\mathcal{F}$-algebra is defined by its carrier $\mathbb{I}_\Omega$ and by interpreting operation symbols of $\mathcal{F}$ as constructors of new terms: for $f \in \mathcal{F}$ of arity $j$, for $t_1, \ldots t_j \in \mathbb{I}_\Omega$, $f(t_1, \ldots t_j)$ is interpreted as itself.}
 $\mathcal{T}_\mathcal{F}$ and $\mathcal{A}$.
\end{definition}

The semantics of constants $\varnothing$ and $a \in \mathbb{A}_\Omega$ are sets containing a single element being respectively $\{\epsilon\}$ and $\{a\}$. The $strict$, $seq$, $par$ and $alt$ symbols are respectively associated to the $\globalStrictSeq$, $\globalWeakSeq$, $\globalInterleaving$ and union $\cup$ operators on sets of traces. Their use is shown on Fig.\ref{fig:basic_interaction_example}.
On the right of Fig.\ref{fig:basic_interaction_example}, we detail the semantics of the interaction drawn on the left. The resulting set of traces is given on the bottom right.
$loop_X$, $loop_H$, $loop_P$, $loop_S$ are semantically distinct (as shown on simple examples in \cite{revisiting_semantics_of_interactions_for_trace_validity_analysis}). From a system designer perspective, using either loop is motivated by different goals:
\begin{itemize}
    \item With $loop_X(i)$, each existing instance of a repeatable behavior (specified by $i$) must be executed entirely before any other can start. This implies that, at any given moment there can only exists zero or a single instance. $loop_X$ can therefore be used to specify some critical repeatable behavior of which there can only exist one instance at a time.
    \item With $loop_P(i)$, all existing instances can be executed concurrently w.r.t. one another, and, at any given moment, new instances can be created. $loop_P$ can therefore be used to specify protocols in which any number of new sessions can be created and run in parallel.
    \item With $loop_S(i)$, new instances can be created whenever the action triggering the instantiation occurs on a lifeline which is not occulted by previous instances. This can be roughly explained as follows: (1) on each individual lifeline only one instance can be active and (2) given that, for any such instance, a lifeline might "have finished" before the others then it is allowed to start another instance.
    $loop_S$ can therefore be used to specify repeatable behaviors that are sequential but that have no enforced synchronization mechanisms.
    \item The head loop $loop_H$ is associated to the weak HF-closure operator $~^{\globalWeakSeq^\Lsh *}$ which is an ad-hoc algebraic artifact and not a K-closure. We include it in our IL because it might correspond to a more intuitive understanding of sequential loops than $loop_S$, as illustrated in Sec.\ref{subsec:illustrative_example}.
\end{itemize}

\section{A structural operational semantics\label{sec:operational_semantics}}

In this section we present a structural operational semantics for interactions in the style of process calculus \cite{process_algebra_with_explicit_termination}. It relies on the definition (by structural induction) of two predicates: "$i \downarrow$" (the termination predicate) indicates that interaction term $i$ accepts the empty trace and "$i \xrightarrow{a} i'$" (the execution relation) indicates that traces $a.t$ such that $t$ is accepted by $i'$ are accepted by $i$.
We define $\downarrow$ in Sec.\ref{subsec:termination}. The relation $\rightarrow$ allows the determination, for any interaction $i$, of which actions $a$ can be immediately executed, and, of potential follow-up interactions $i'$ which express continuations $t$ of traces $a.t$ accepted by $i$. Defining an execution relation $\rightarrow$ is a staple of process calculus \cite{process_algebra_with_explicit_termination}.
However, as we want to define such a similar relation $\rightarrow$  for our IL, particular care is needed when dealing with weak sequencing. Hence we introduce intermediate notions in Sec.\ref{subsec:evades_prune} before defining $\rightarrow$ in Sec.\ref{subsec:execution_relation}.

\subsection{Termination\label{subsec:termination}}

By reasoning on the structure of an interaction term
$i$, we can determine whether or not the empty trace $\epsilon$ belongs to its semantics. When this holds, we say that $i$ terminates and use the notation $i \downarrow$ as in \cite{process_algebra_with_explicit_termination}.
$\downarrow$ is defined as an inductive predicate in Def.\ref{def:termination} and its characterization w.r.t. the semantics $\sigma_d$ of interactions is given in Lem.\ref{lem:sem_de_terminates}.

\begin{definition}[Termination]\label{def:termination}
The predicate $\downarrow \subset \mathbb{I}_\Omega$ is s.t. for any $i_1$ and $i_2$ from $\mathbb{I}_\Omega$, any $f \in \{strict,seq,par\}$ and any $k \in \{X,H,S,P\}$ we have:

{
\centering
\begin{minipage}{1.75cm}
\begin{prooftree}
\AxiomC{\phantom{$\top$}}
\UnaryInfC{$\varnothing \downarrow$}
\end{prooftree}
\end{minipage}
\begin{minipage}{2.75cm}
\begin{prooftree}
\AxiomC{$i_1 \downarrow$}
\UnaryInfC{$alt(i_1,i_2) \downarrow$}
\end{prooftree}
\end{minipage}
\begin{minipage}{2.75cm}
\begin{prooftree}
\AxiomC{$i_2 \downarrow$}
\UnaryInfC{$alt(i_1,i_2) \downarrow$}
\end{prooftree}
\end{minipage}
\begin{minipage}{2.75cm}
\begin{prooftree}
\AxiomC{$i_1 \downarrow$}
\AxiomC{$i_2 \downarrow$}
\BinaryInfC{$f(i_1,i_2) \downarrow$}
\end{prooftree}
\end{minipage}
\begin{minipage}{2.75cm}
\begin{prooftree}
\AxiomC{\phantom{$\top$}}
\UnaryInfC{$loop_k(i_1) \downarrow$}
\end{prooftree}
\end{minipage}\\
}
\end{definition}

All rules of Def.\ref{def:termination} are evident. The empty interaction $\varnothing$ only accepts $\epsilon$, and thus terminates. An interaction with a loop at its root terminates because it is possible to repeat zero times its content.
As $alt(i_1,i_2)$ specifies a choice, it terminates iff either $i_1$ or $i_2$ terminates.
An interaction of the form $f(i_1,i_2)$, with $f$ being a scheduling constructor, terminates iff both $i_1$ and $i_2$ terminate.

\begin{lemma}[Termination w.r.t. $\sigma_d$]\label{lem:sem_de_terminates}
For any $i \in \mathbb{I}_\Omega$ we have $(i \downarrow ) \Leftrightarrow (\epsilon \in \sigma_d(i))$
\end{lemma}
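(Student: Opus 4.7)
The plan is to proceed by structural induction on the interaction term $i \in \mathbb{I}_\Omega$, matching the inductive definition of $\downarrow$ against the denotational interpretation given by $\sigma_d$. For each constructor in $\mathcal{F}$, we will check that $\epsilon$ lies in the interpreted set of traces exactly when the corresponding rule(s) of Def.\ref{def:termination} apply. The induction hypothesis will give us, for each subterm $i_k$, the equivalence $(i_k \downarrow) \Leftrightarrow (\epsilon \in \sigma_d(i_k))$, which we then lift through the semantic operators.

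For the base cases, $\sigma_d(\varnothing) = \{\epsilon\}$ trivially contains $\epsilon$ and $\varnothing \downarrow$ is an axiom, while for any action $a \in \mathbb{A}_\Omega$ we have $\sigma_d(a) = \{a\} \not\ni \epsilon$ and there is no rule in Def.\ref{def:termination} with $a$ as conclusion. For $alt(i_1,i_2)$, the interpretation is $\sigma_d(i_1) \cup \sigma_d(i_2)$, so $\epsilon \in \sigma_d(alt(i_1,i_2))$ iff $\epsilon \in \sigma_d(i_1)$ or $\epsilon \in \sigma_d(i_2)$, which by induction is exactly the disjunction matching the two $alt$ rules. For each $loop_k$, the interpretation is a closure $T^{\diamond *}$ or $T^{\diamond^\Lsh *}$ which, by Def.\ref{def:kleene_closure}, always contains $T^{\diamond 0} = \{\epsilon\}$; this matches the unconditional termination rule for loops.

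The binary scheduling constructors $strict$, $seq$, $par$ require a short auxiliary observation: for every $\diamond \in \{\globalStrictSeq,\globalWeakSeq,\globalInterleaving\}$ and every pair of traces $(t_1,t_2)$, we have $\epsilon \in t_1 \diamond t_2$ if and only if $t_1 = \epsilon = t_2$. For $\globalStrictSeq$ this is immediate from $t_1.t_2 = \epsilon \Rightarrow t_1 = t_2 = \epsilon$. For $\globalWeakSeq$ and $\globalInterleaving$, inspection of their recursive definitions shows that if either $t_1$ or $t_2$ starts with an action then every trace in $t_1 \diamond t_2$ also starts with an action, while the base clauses yield $\epsilon \diamond \epsilon = \{\epsilon\}$. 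Lifting to sets, $\epsilon \in T_1 \diamond T_2$ iff $\epsilon \in T_1$ and $\epsilon \in T_2$, which, combined with the induction hypothesis, matches the binary conjunctive rule for $f \in \{strict,seq,par\}$.

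I do not expect a real obstacle here: the structure of Def.\ref{def:termination} mirrors exactly the way $\epsilon$ propagates through the semantic operators. The only slightly non-trivial step is the auxiliary lemma on $\epsilon \in t_1 \diamond t_2$ for $\globalWeakSeq$ and $\globalInterleaving$, which is a short case analysis on whether $t_1$ (resp.\ $t_2$) is empty. Everything else reduces to unfolding definitions and invoking the induction hypothesis.
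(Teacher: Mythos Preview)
Your proposal is correct and follows essentially the same approach as the paper: a structural induction on $i$, treating the base cases, the disjunctive $alt$ case, the unconditional loop case, and the conjunctive scheduling cases via the observation that $\epsilon \in t_1 \diamond t_2$ forces $t_1 = t_2 = \epsilon$. The paper's proof is slightly terser (it simply asserts this last fact as ``trivial'' for $strict$ and handles $seq$ and $par$ by analogy), but the structure and content are the same.
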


\begin{proof}
By induction on the term structure of interactions.
\end{proof}

\subsection{Dealing with weak-sequencing using evasion \& pruning\label{subsec:evades_prune}}

Weak sequencing only allows interleavings between actions that occur on different lifelines. As a result, within an interaction of the form $i = seq(i_1,i_2)$, some actions that can be executed in $i_2$ may also be executed in $seq(i_1,i_2)$. In other words, given a trace $a.t \in \sigma_d(i)$, action $a$ might correspond to an action expressed by $i_2$. This is however conditioned by the ability of $i_1$ to express traces that have no conflict w.r.t. $a$ so that $a$ may be placed in front of what is expressed by $i_1$ when recomposing $a.t$.

In the previous section we have seen that the termination predicate $i\downarrow$ states that interaction $i$ is able to express the empty trace. We define "evasion" as a similar, although weaker, notion than "termination" that can be described as a form of local termination.
For a lifeline $l$, we say that $i$ evades $l$, denoted by $i \evadesLf l$, if $i$ accepts at least one trace that does not contain actions occurring on $l$. Evasion is defined inductively in Def.\ref{def:evasion}. For any $i \in \mathbb{I}_\Omega$ and $l \in L$, it is always decidable whether $i \evadesLf l$ or $\neg(i \evadesLf l)$. We may then denote by $i \collidesLf l$ the fact that $\neg(i \evadesLf l)$ and say that $i$ collides with $l$.

\begin{definition}[Evasion]\label{def:evasion}
The predicate $\evadesLf \subset \mathbb{I}_\Omega \times L$ is s.t. for any $i_1$ and $i_2$ from $\mathbb{I}_\Omega$, any $l\in L$, any $a \in \mathbb{A}_\Omega$, any $f \in \{strict,seq,par\}$ and any $k \in \{X,H,S,P\}$ we have:

{
\centering
\begin{minipage}{2cm}
\begin{prooftree}
\AxiomC{\phantom{$\theta(a) \neq l$}}
\UnaryInfC{$\varnothing \evadesLf l$}
\end{prooftree}
\end{minipage}
\begin{minipage}{2cm}
\begin{prooftree}
\AxiomC{$\theta(a) \neq l$}
\UnaryInfC{$a \evadesLf l$}
\end{prooftree}
\end{minipage}

\vspace*{.1cm}

\begin{minipage}{2.75cm}
\begin{prooftree}
\AxiomC{$i_1 \evadesLf l$}
\UnaryInfC{$alt(i_1,i_2) \evadesLf l$}
\end{prooftree}
\end{minipage}
\begin{minipage}{2.75cm}
\begin{prooftree}
\AxiomC{$i_2 \evadesLf l$}
\UnaryInfC{$alt(i_1,i_2) \evadesLf l$}
\end{prooftree}
\end{minipage}
\begin{minipage}{3.5cm}
\begin{prooftree}
\AxiomC{$i_1 \evadesLf l$}
\AxiomC{$i_2 \evadesLf l$}
\BinaryInfC{$f(i_1,i_2) \evadesLf l$}
\end{prooftree}
\end{minipage}
\begin{minipage}{2.75cm}
\begin{prooftree}
\AxiomC{\phantom{$\evadesLf$}}
\UnaryInfC{$loop_k(i_1) \evadesLf l$}
\end{prooftree}
\end{minipage}\\
}
\end{definition}

The empty interaction $\varnothing$ evades any lifeline as $\epsilon$ contains no action. 
An interaction reduced to a single action $a$ evades $l$ iff $a$ does not occur on $l$. As for termination, an interaction with a loop at its root evades any lifeline because it accepts $\epsilon$.
Choice and scheduling operators are also handled in the same manner as for the termination predicate.

\begin{lemma}[Evasion w.r.t. $\sigma_d$]\label{lem:sem_de_evades_no_conflict}
For any $l \in L$ and $i \in \mathbb{I}_\Omega$, $(i \evadesLf l)
\Leftrightarrow 
(\exists~t \in \sigma_d(i), \neg(t \doubleVerticalTimes l))$
\end{lemma}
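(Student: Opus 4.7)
The plan is to proceed by structural induction on $i \in \mathbb{I}_\Omega$, showing both directions simultaneously by working with the characterization of $\sigma_d$ from Def.\ref{def:denotational_semantics}. The key observation that makes the inductive step uniform across the scheduling operators is a small preliminary lemma on traces: for any $\diamond \in \{\globalStrictSeq,\globalWeakSeq,\globalInterleaving\}$ and any $t_1, t_2 \in \mathbb{T}_\Omega$, every $t \in \{t_1\} \diamond \{t_2\}$ satisfies $(t \doubleVerticalTimes l) \Leftrightarrow (t_1 \doubleVerticalTimes l) \vee (t_2 \doubleVerticalTimes l)$. This holds because each of the three operators merely reorders the actions of $t_1$ and $t_2$ without adding or removing any, and $\doubleVerticalTimes l$ only depends on the multiset of actions appearing in a trace. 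This itself is a short induction on the two input traces following the recursive definitions of ``$.$'', ``$\globalWeakSeq$'' and ``$\globalInterleaving$''.

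Equipped with this, I would dispatch the cases. The base case $i = \varnothing$ is immediate: $\sigma_d(\varnothing) = \{\epsilon\}$, $\neg(\epsilon \doubleVerticalTimes l)$ by definition, and $\varnothing \evadesLf l$ holds by the axiom rule. For $i = a$, $\sigma_d(a) = \{a\}$ and $a \doubleVerticalTimes l \Leftrightarrow \theta(a) = l$, so $\neg(a \doubleVerticalTimes l) \Leftrightarrow \theta(a) \neq l$, which matches the inference rule for actions. For $alt(i_1,i_2)$, $\sigma_d(alt(i_1,i_2)) = \sigma_d(i_1) \cup \sigma_d(i_2)$, so a conflict-free trace exists on the right iff one exists in $\sigma_d(i_1)$ or in $\sigma_d(i_2)$, which matches the two $alt$ rules via the induction hypothesis.

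For $f(i_1,i_2)$ with $f \in \{strict, seq, par\}$, $\sigma_d(f(i_1,i_2)) = \sigma_d(i_1) \diamond \sigma_d(i_2)$ for the corresponding $\diamond$. Using the preliminary lemma, the existence of some $t \in \sigma_d(f(i_1,i_2))$ with $\neg(t \doubleVerticalTimes l)$ is equivalent to the existence of $t_j \in \sigma_d(i_j)$ with $\neg(t_j \doubleVerticalTimes l)$ for both $j = 1$ and $j = 2$: the forward direction unpacks $t$ as obtained from some $t_1, t_2$ via $\diamond$, while the backward direction picks any such $t \in \{t_1\} \diamond \{t_2\}$ (the definitions ensure this set is nonempty). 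By the induction hypothesis this is equivalent to $i_1 \evadesLf l \wedge i_2 \evadesLf l$, matching the binary inference rule. The $seq$ case is not actually harder than $strict$ or $par$ here, because the side condition on $\doubleVerticalTimes$ inside the definition of $\globalWeakSeq$ only constrains \emph{which} interleavings exist, not the actions they contain.

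Finally, for $i = loop_k(i_1)$ with $k \in \{X,H,S,P\}$, the interpretation in $\mathcal{A}$ is a Kleene or Head-First closure whose zero-th power is $\{\epsilon\}$ (by Def.\ref{def:kleene_closure}), so $\epsilon \in \sigma_d(loop_k(i_1))$ and $\neg(\epsilon \doubleVerticalTimes l)$, which supplies the right-hand side unconditionally; on the left, the axiom rule for $loop_k$ gives $loop_k(i_1) \evadesLf l$ unconditionally as well. The main subtlety of the whole argument is the preliminary action-preservation lemma, since once that is in hand every inductive case is essentially syntactic bookkeeping; I expect the weak-sequencing subcase of that lemma to be the one requiring the most care, because its recursive definition branches on the conflict predicate itself.
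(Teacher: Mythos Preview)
Your proposal is correct and follows essentially the same structural-induction approach as the paper's own proof; the paper likewise treats the scheduling cases uniformly by invoking what it calls the ``distributivity of $\doubleVerticalTimes$'' over $\globalStrictSeq$, $\globalWeakSeq$ and $\globalInterleaving$, which is exactly your preliminary lemma. The only cosmetic difference is that you state this fact up front as a separate lemma, whereas the paper appeals to it inline within each case.
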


\begin{proof}
By induction on the term structure of interactions.
\end{proof}

Let us remark that, for any $i \in \mathbb{I}_\Omega$, if $i \downarrow$ then $\forall~ l \in L$, $i \evadesLf l$. Indeed, $\epsilon$ has no conflict w.r.t. any $l$. However the opposite does not hold:
it suffices to consider $L=\{ l_1,l_2\}$ and $i=alt(l_1!m,l_2!m)$ and observe that $\forall~ l \in L$, $i \evadesLf l$ holds while
$i \downarrow$ does not.

\begin{figure}[ht]
    \centering
\begingroup
\setlength{\tabcolsep}{1pt} 
\renewcommand{\arraystretch}{.1} 
\begin{tabular}{cc}
\makecell{
\scalebox{.8}{
\begin{tikzpicture}
\node (int) at (0,0) {\includegraphics[scale=.5]{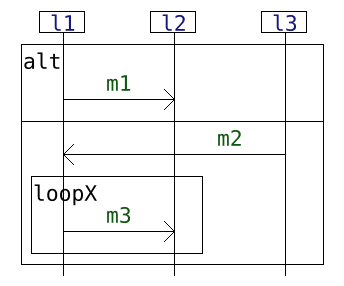}};
\draw[-,line width=3pt,darkspringgreen] (1.5pt,1.96) -- (1.5pt,1.75);
\draw[-,line width=3pt,red]             (1.5pt,1.73) -- (1.5pt,.9);
\node (mark1) at (1.5pt,.775) {\small\textcolor{red}{\faTimesCircle}};
\draw[-,line width=3pt,red]             (1.5pt,.7) -- (1.5pt,.39);
\draw[-,line width=3pt,darkspringgreen] (1.5pt,.37) -- (1.5pt,-.58);
\draw[-,line width=3pt,red] (1.5pt,-.6) -- (1.5pt,-1.44);
\node (mark2) at (1.5pt,-1.56) {\small\textcolor{red}{\faTimesCircle}};
\draw[-,line width=3pt,red] (1.5pt,-1.65) -- (1.5pt,-1.94);
\draw[-,line width=3pt,darkspringgreen] (1.5pt,-1.96) -- (1.5pt,-2.7);
\node (mark3) at (1.5pt,-2.75) {\small\textcolor{darkspringgreen}{\faCheckCircle}};
\end{tikzpicture}
}
}
&
\makecell{
\scalebox{1}{
\begin{tikzpicture}
\node (o) { $alt${\scriptsize\textcolor{darkspringgreen}{\faCheckCircle}} } 
                [sibling distance=2.8cm,level distance=0.75cm]
  child { node (o1) {$strict${\scriptsize\textcolor{red}{\faTimesCircle}}} 
                [sibling distance=1.1cm]
    child { node (o11) {$\hlf{l_1}!\hms{m_1}${\scriptsize\textcolor{darkspringgreen}{\faCheckCircle}}} }
    child { node (o12) {$\hlf{l_2}?\hms{m_1}${\scriptsize\textcolor{red}{\faTimesCircle}}} }
  }
  child { node (o2) {$seq${\scriptsize\textcolor{darkspringgreen}{\faCheckCircle}}} [sibling distance=1.8cm]
    child { node (o21) {$strict${\scriptsize\textcolor{darkspringgreen}{\faCheckCircle}}} 
                [sibling distance=1.1cm]
      child { node (o211) {$\hlf{l_3}!\hms{m_2}${\scriptsize\textcolor{darkspringgreen}{\faCheckCircle}}} }
      child { node (o212) {$\hlf{l_1}?\hms{m_2}${\scriptsize\textcolor{darkspringgreen}{\faCheckCircle}}} }
    }
    child { node (o22) {$loop_X${\scriptsize\textcolor{darkspringgreen}{\faCheckCircle}}} 
      child { node (o221) {$strict${\scriptsize\textcolor{red}{\faTimesCircle}}} 
                [sibling distance=1.1cm]
        child { node (o2211) {$\hlf{l_1}!\hms{m_3}${\scriptsize\textcolor{darkspringgreen}{\faCheckCircle}}} }
        child { node (o2212) {$\hlf{l_2}?\hms{m_3}${\scriptsize\textcolor{red}{\faTimesCircle}}} }
      }
    }
  };
\end{tikzpicture}
}
}
\end{tabular}
\endgroup 
    \caption{Illustration of the evasion predicate (here w.r.t. lifeline $l_2$)}
    \label{fig:illustration_evasion_predicate}
\end{figure}

The application of the evasion predicate (w.r.t. lifeline $l_2$) is illustrated on Fig.\ref{fig:illustration_evasion_predicate}. On the right is represented the syntaxic structure of an interaction $i$, and, on the left, the corresponding drawing as a sequence diagram. On the syntax tree, the nodes are decorated with symbols {\scriptsize\textcolor{darkspringgreen}{\faCheckCircle}} to signify that the sub-interaction underneath that node evades $l_2$ or {\scriptsize\textcolor{red}{\faTimesCircle}} to indicate that this is not the case (i.e. that it collides with $l_2$). Starting from the leaves we can decorate all nodes and conclude once the root is reached. By taking the right branch of the alternative and by choosing not to instantiate the loop, we can see that $i$ accepts some traces that have no conflict w.r.t. lifeline $l_2$ (in our case, only the trace $l_3!m_2.l_1?m_2$). As a result the interaction $i$ verifies $i \evadesLf l_2$.
On the diagram representation, evasion can be illustrated by drawing a line over the lifeline of interest. This line can be decomposed into several segments that correspond to specific areas of the diagram (operands) and that are colored either in green or in red. The coloration of the segment depends on whether the sub-interaction corresponding to the operand evades or collides with the lifeline of interest.

With evasion we can determine whether or not an action $a$ that is executable in $i_2$ i.e. s.t. $i_2 \xrightarrow{a} i_2'$ is also executable in $i = seq(i_1,i_2)$. However, this is not enough to define a rule $seq(i_1,i_2) \xrightarrow{a} i'$ which is compatible with the semantics $\sigma_d$. $i'$ must specify continuations $t$ s.t. $a.t \in \sigma_d(i)$. By definition of $\sigma_d$, continuation traces $t$ have to be build from traces $t_1 \in \sigma_d(i_1)$ and $t_2$ such that $\neg(t_1 \doubleVerticalTimes \theta(a))$ and $a.t_2 \in \sigma_d(i_2)$. By defining $i_1'$ as the interaction which expresses exactly those traces $t_1$ s.t. $\neg(t_1 \doubleVerticalTimes \theta(a))$ we may produce a rule $seq(i_1,i_2) \xrightarrow{a} seq(i_1',i_2')$ that is compatible with $\sigma_d$. The computation of $i_1'$ is ensured by a process that we call pruning.

Pruning is defined as an inductive relation $\isPruneBase$ s.t. $i \isPruneOf{l} i'$ indicates that the pruning of $i \in \mathbb{I}_\Omega$ w.r.t. $l \in L$ yields $i' \in \mathbb{I}_\Omega$.
Pruning is defined so has to preserve "a maxima" the original semantics of $i$ so that $\sigma_d(i') \subseteq \sigma_d(i)$ is the maximum subset of $\sigma_d(i)$ that contains no trace conflicting with $l$ (see Lem.\ref{lem:sem_de_pruned}).

\begin{definition}[Pruning]\label{def:pruning_relation}
The pruning relation $\isPruneBase \subset \mathbb{I}_\Omega \times L \times \mathbb{I}_\Omega$ is s.t. for any $l \in L$, any $f \in \{strict,seq,par\}$ and any $k \in \{X,H,S,P\}$:

{
\centering
\begin{minipage}{2.5cm}
\begin{prooftree}
\AxiomC{\vphantom{$\isPruneOf{l}$}}
\UnaryInfC{$\varnothing \isPruneOf{l} \varnothing$}
\end{prooftree}
\end{minipage}
\begin{minipage}{2.5cm}
\begin{prooftree}
\AxiomC{$\theta(a) \neq l$\vphantom{$\isPruneOf{l}$}}
\UnaryInfC{$a \isPruneOf{l} a$}
\end{prooftree}
\end{minipage}
\begin{minipage}{4cm}
\begin{prooftree}
\AxiomC{$i_1 \isPruneOf{l} i_1'$}
\AxiomC{$i_2 \isPruneOf{l} i_2'$}
\BinaryInfC{$f(i_1,i_2) \isPruneOf{l} f(i_1',i_2')$}
\end{prooftree}
\end{minipage}

\vspace*{.1cm}

\begin{minipage}{4cm}
\begin{prooftree}
\AxiomC{$i_1 \isPruneOf{l} i_1'$}
\AxiomC{$i_2 \isPruneOf{l} i_2'$}
\BinaryInfC{$alt(i_1,i_2) \isPruneOf{l} alt(i_1',i_2')$}
\end{prooftree}
\end{minipage}
\begin{minipage}{4cm}
\begin{prooftree}
\AxiomC{$i_1 \isPruneOf{l} i_1'$}
\RightLabel{$i_2 \collidesLf l$}
\UnaryInfC{$alt(i_1,i_2) \isPruneOf{l} i_1'$}
\end{prooftree}
\end{minipage}
\begin{minipage}{4cm}
\begin{prooftree}
\AxiomC{$i_2 \isPruneOf{l} i_2'$}
\RightLabel{$i_1 \collidesLf l$}
\UnaryInfC{$alt(i_1,i_2) \isPruneOf{l} i_2'$}
\end{prooftree}
\end{minipage}

\vspace*{.1cm}

\begin{minipage}{5cm}
\begin{prooftree}
\AxiomC{$i_1 \isPruneOf{l} i_1'$}
\UnaryInfC{$loop_k(i_1) \isPruneOf{l} loop_k(i_1')$}
\end{prooftree}
\end{minipage}
\begin{minipage}{5cm}
\begin{prooftree}
\AxiomC{$\phantom{\isPruneOf{l}}$}
\RightLabel{$i_1 \collidesLf l$}
\UnaryInfC{$loop_k(i_1) \isPruneOf{l} \varnothing$}
\end{prooftree}
\end{minipage}
\\
}
\end{definition}

Let us immediately remark that evasion and pruning are intertwined notions. Indeed, as per Lem.\ref{lem:pruning_existence_unicity} evasion is equivalent to the existence and unicity of a pruned interaction.
We could define pruning without evasion. However, so as to draw parallels w.r.t. the termination predicate and in order to separate concerns and ease the understanding of proofs we have defined both notions separately.

\begin{lemma}[Conditional existence \& unicity for pruning]\label{lem:pruning_existence_unicity}
For any $i \in \mathbb{I}_\Omega$ and any $l \in L$:
\[
( i \evadesLf l ) \Leftrightarrow (\exists!~i' \in \mathbb{I}_\Omega \text{ s.t. } i \isPruneOf{l} i')
\]
\end{lemma}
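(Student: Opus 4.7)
The plan is to prove both directions simultaneously by structural induction on $i$, carrying as the joint induction hypothesis the full biconditional for all strict subterms. This pairing is essential because the unicity part of the $(\Rightarrow)$ direction relies crucially on the $(\Leftarrow)$ direction at smaller terms: to rule out an applicable pruning rule, I need to know that a subinteraction admits no pruned form, which by IH is exactly the failure of evasion.

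The base cases are immediate from the shape of the rules. For $\varnothing$, both $\varnothing \evadesLf l$ and $\varnothing \isPruneOf{l} \varnothing$ hold unconditionally, and the pruning rule for $\varnothing$ forces $i' = \varnothing$. For an atomic action $a$, the evasion and pruning rules share the very same premise $\theta(a)\neq l$, so both exist together and no other rule can fire, giving unicity of $i' = a$. In the inductive step for $f(i_1,i_2)$ with $f \in \{strict, seq, par\}$, the only applicable pruning rule is the congruence rule, and both its premises $i_j \isPruneOf{l} i_j'$ coincide, via IH, with the two premises $i_j \evadesLf l$ of the corresponding evasion rule; hence existence is equivalent to the conjunction of both evasions, and unicity of $i_1', i_2'$ (hence of $f(i_1', i_2')$) is transported from the IH.

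The two cases that require genuine work are $alt$ and $loop_k$. For $alt(i_1,i_2)$, three pruning rules may fire: the congruence rule (requiring both subterms to prune), and the two projecting rules (each requiring one subterm to collide). I split on which of $i_1, i_2$ evade $l$. If both evade, the congruence rule fires and gives $alt(i_1',i_2')$, while the projecting rules are blocked by their $\collidesLf$ side conditions; if exactly one evades, say $i_1 \evadesLf l$ and $i_2 \collidesLf l$, then the corresponding projecting rule fires and yields $i_1'$, while the congruence rule is blocked because $i_2$ has no pruned form by the $(\Leftarrow)$ IH; and if neither evades, no rule applies and no pruned form exists, matching the failure of $alt(i_1,i_2) \evadesLf l$. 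For $loop_k(i_1)$, evasion is unconditional, and the two pruning rules are mutually exclusive via the $i_1 \evadesLf l$ versus $i_1 \collidesLf l$ dichotomy combined with the IH, so existence always holds and exactly one rule fires, yielding unicity.

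The main obstacle is the $alt$ case, specifically ensuring unicity when one operand evades and the other collides: this is the only point where the congruence pruning rule must be shown \emph{inapplicable}, and that argument is not directly syntactic but requires invoking the contrapositive of the $(\Leftarrow)$ direction of the IH to deduce that the colliding operand has no pruned form. Once this interlocking of the two directions in the induction hypothesis is set up cleanly, all remaining cases are straightforward bookkeeping on the rule shapes.
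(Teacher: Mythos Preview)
Your proposal is correct and follows the same approach as the paper, namely structural induction on the interaction term; the paper's own proof is simply the one-line remark ``Immediate, by induction on the term structure of interactions.'' Your write-up supplies the case analysis that the paper elides, and in particular your observation that the $(\Leftarrow)$ direction of the induction hypothesis is needed to block inapplicable rules in the $alt$ and $loop_k$ cases is exactly the point one must check to make the ``immediate'' induction go through.
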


\begin{proof}
Immediate, by induction on the term structure of interactions.
\end{proof}

Let us comment on the rules defining the pruning relation.
We have $\varnothing \isPruneOf{l} \varnothing$ because the semantics of $\varnothing$ being $\{\epsilon\}$, there are no conflicts w.r.t. $l$. 
Any action $a \in \mathbb{A}_\Omega$ is prunable iff $\theta(a) \neq l$. In such a case, $a$ needs not be eliminated and thus $a \isPruneOf{l} a$. For $i=alt(i_1,i_2)$ to be prunable we must have either or both of $i_1 \evadesLf l$ or $i_2 \evadesLf l$. If both branches evade $l$ they can be pruned and kept as alternatives in the new interaction term. If only a single one does, we only keep the pruned version of this single branch. For any scheduling constructor $f$, if $i=f(i_1,i_2)$, in order to have $i \evadesLf l$ we must have both $i_1 \evadesLf l$ and $i_2 \evadesLf l$. In that case the unique interaction $i'$ such that $i \isPruneOf{l} i'$ is simply defined as the scheduling, using $f$, of the pruned versions of $i_1$ and $i_2$. Finally, for loops, i.e. $i=loop_k(i_1)$ with $k \in \{X,H,S,P\}$, we distinguish two cases: (a) if $i_1 \not\evadesLf l$ then any execution of $i_1$ will yield a trace conflicting $l$. Therefore it is necessary to forbid repetition;
(b) if $i_1 \evadesLf l$ then it is not necessary to forbid repetition, given that $i_1$ can be pruned as $i_1 \isPruneOf{l} i_1'$ with $i_1'$ not expressing traces conflicting $l$. This being the modification which preserves a maximum amount of traces, we have $loop_k(i_1) \isPruneOf{l} loop_k(i_1')$.

\begin{figure}
    \centering
\begin{tabular}{ccc}
\makecell{
\includegraphics[scale=.3]{figures/4/ex_evasion.png}
}
&
\makecell{
\scalebox{1}{
\begin{tikzpicture}[every node/.style = {shape=rectangle, align=center}]
\node (o) { $alt$ } 
                [sibling distance=2.2cm,level distance=0.75cm]
  child { node (o1) {$strict$} 
                [sibling distance=.9cm]
    child { node (o11) {$\hlf{l_1}!\hms{m_1}$} }
    child { node (o12) {$\hlf{l_2}?\hms{m_1}$} }
  }
  child { node (o2) {$seq$} [sibling distance=1.4cm]
    child { node (o21) {$strict$} 
                [sibling distance=.9cm]
      child { node (o211) {$\hlf{l_3}!\hms{m_2}$} }
      child { node (o212) {$\hlf{l_1}?\hms{m_2}$} }
    }
    child { node (o22) {$loop_X$} 
      child { node (o221) {$strict$} 
                [sibling distance=.9cm]
        child { node (o2211) {$\hlf{l_1}!\hms{m_3}$} }
        child { node (o2212) {$\hlf{l_2}?\hms{m_3}$} }
      }
    }
  };
\draw[blue,very thick] (o12.south east) -- (o12.north west);
\draw[blue,very thick] (o12.south west) -- (o12.north east);
\draw[blue,very thick] (o1.south east) -- (o1.north west);
\draw[blue,very thick] (o1.south west) -- (o1.north east);
\draw[blue,very thick] (o2212.south east) -- (o2212.north west);
\draw[blue,very thick] (o2212.south west) -- (o2212.north east);
\draw[blue,very thick] (o221.south east) -- (o221.north west);
\draw[blue,very thick] (o221.south west) -- (o221.north east);
\draw[blue,very thick] (o22.east) -- (o22.west);
\node[blue,thick] (varno) at ([xshift=.75cm] o22.east) {$\varnothing$};
\draw[->,blue,very thick] (varno) edge ([xshift=2.5pt] o22.east);
\draw[blue,very thick] (o.east) -- (o.west);
\draw[->,blue,very thick] ([xshift=-5pt,yshift=0pt] o2.north east) to [bend right=45] ([xshift=2.5pt] o.east);
\end{tikzpicture}}
}
&
\makecell{
\includegraphics[scale=.3]{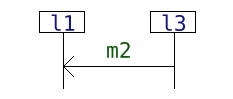}
}\\
\makecell{
\scriptsize\textit{before pruning}
}
&
\makecell{
\scriptsize\textit{pruning w.r.t $\hlf{l_3}$}
}
&
\makecell{
\scriptsize\textit{after pruning}
}
\end{tabular}
\caption{Illustration of pruning}
\label{fig:pruning_example}
\end{figure}

We have seen that the interaction $i$ of Fig.\ref{fig:illustration_evasion_predicate} satisfies $i \evadesLf l_2$. Therefore Lem.\ref{lem:pruning_existence_unicity} implies the existence of a unique $i'$ s.t. $i \isPruneOf{l_2} i'$. Fig.\ref{fig:pruning_example} illustrates the computation of $i'$.
The blue lines represent the modifications in the syntax of $i$ that occur during its pruning into $i'$.
On Fig.\ref{fig:illustration_evasion_predicate} we decorated sub-interactions of $i$ with {\scriptsize\textcolor{red}{\faTimesCircle}} whenever they did not evade $l_2$. During pruning, those sub-interactions must be eliminated given that the resulting term must not express actions occurring on $l_2$.
Hence, on Fig.\ref{fig:pruning_example}, we have crossed in blue the problematic sub-interactions.
The root node is an $alt$. Let us note $i = alt(i_1,i_2)$. On Fig.\ref{fig:illustration_evasion_predicate} we have seen that we have $i_1 \collidesLf l_2$ and $i_2 \evadesLf l_2$. Therefore we have $i \isPruneOf{l_2} i_2'$ with $i_2'$ being such that $i_2 \isPruneOf{l_2} i_2'$. This selection of the right branch of the $alt$ is illustrated on Fig.\ref{fig:pruning_example} by the curved arrow which "replaces" the $alt$ by the $seq$ on its bottom right.
There remains to determine $i_2'$ s.t. $i_2 \isPruneOf{l_2} i_2'$. At the root of $i_2$ we have a $seq$. Let us note $i_2 = seq(i_A,i_B)$. As per Fig.\ref{fig:illustration_evasion_predicate} we have both $i_A \evadesLf l_2$ and $i_B \evadesLf l_2$ and therefore $i_2' = seq(i_A',i_B')$ such that $i_A \isPruneOf{l_2} i_A'$ and $i_B \isPruneOf{l_2} i_B'$. Underneath $i_A$, no actions occur on $l_2$ and hence $i_A' = i_A$. At the root of $i_B$ we have a $loop_X$. Let us note $i_B = loop_X(i_C)$. As per Fig.\ref{fig:illustration_evasion_predicate} we have $i_C \collidesLf l_2$ and therefore $i_B' = \varnothing$ which is illustrated on Fig.\ref{fig:pruning_example} by the $\shortColBlue{\leftarrow \varnothing}$ in blue, which "replaces" the $loop_X$ by $\varnothing$.
Finally there remain $i' = seq(i_A,\varnothing)$, which is drawn as a SD on the right of Fig.\ref{fig:pruning_example}.

Lem.\ref{lem:sem_de_pruned} states that given 
$i \isPruneOf{l} i'$, the pruned interaction $i'$
exactly specifies all the executions of $i$ that do not involve $l$.

\begin{lemma}[Pruning w.r.t. $\sigma_d$]\label{lem:sem_de_pruned}
For any $l \in L$ and any $i$ and $i'$ from $\mathbb{I}_\Omega$:
\[
(i \isPruneOf{l} i')
\Rightarrow
(\sigma_d(i') = \{ t \in \sigma_d(i)~|~ \neg (t \doubleVerticalTimes l) \})
\]
\end{lemma}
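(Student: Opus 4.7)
The plan is to proceed by induction on the derivation of $i \isPruneOf{l} i'$ (equivalently, by structural induction on $i$ exploiting Lem.\ref{lem:pruning_existence_unicity}). In each case we must establish two inclusions: every $t \in \sigma_d(i')$ lies in $\sigma_d(i)$ and satisfies $\neg(t \doubleVerticalTimes l)$; and conversely, any $t \in \sigma_d(i)$ with $\neg(t \doubleVerticalTimes l)$ is recovered in $\sigma_d(i')$.

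First I would discharge a small auxiliary observation, used throughout: for any scheduling operator $\diamond \in \{\globalStrictSeq,\globalWeakSeq,\globalInterleaving\}$ and any traces $t_1, t_2$, every $t \in \{t_1\} \diamond \{t_2\}$ contains exactly the actions of $t_1$ and $t_2$ as a multiset, so $\neg(t \doubleVerticalTimes l)$ holds iff both $\neg(t_1 \doubleVerticalTimes l)$ and $\neg(t_2 \doubleVerticalTimes l)$ hold. This is an easy induction on $t_1$ (or directly on the definition of $\diamond$) and it is what allows the conflict predicate to distribute over the scheduling constructors.

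Next I would treat the cases. The base cases $\varnothing \isPruneOf{l} \varnothing$ and $a \isPruneOf{l} a$ (with $\theta(a) \neq l$) are direct from the definitions of $\sigma_d$ and $\doubleVerticalTimes$. For $f(i_1,i_2) \isPruneOf{l} f(i_1',i_2')$ with $f \in \{strict,seq,par\}$, I apply the induction hypothesis to get $\sigma_d(i_j') = \{t \in \sigma_d(i_j) \mid \neg(t \doubleVerticalTimes l)\}$ and then combine with the auxiliary observation to conclude. For the $alt$ cases: when both branches are pruned, the conclusion follows from the IH because $\sigma_d$ interprets $alt$ as $\cup$ and filtering commutes with union; when only one branch is retained because the other collides, Lem.\ref{lem:sem_de_evades_no_conflict} (contrapositive of evasion) gives that the discarded branch contributes no conflict-free trace, so the IH on the retained branch suffices.

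The loop cases are the main obstacle. For $loop_k(i_1) \isPruneOf{l} \varnothing$ when $i_1 \collidesLf l$: Lem.\ref{lem:sem_de_evades_no_conflict} tells us every trace of $\sigma_d(i_1)$ has a conflict with $l$, so for any $j \geq 1$, any $t \in \sigma_d(i_1)^{\diamond j}$ also has a conflict with $l$ (by the auxiliary observation, iterated), leaving only $\epsilon$ from the $j=0$ factor, which matches $\sigma_d(\varnothing)=\{\epsilon\}$. For the case $loop_k(i_1) \isPruneOf{l} loop_k(i_1')$ with $i_1 \isPruneOf{l} i_1'$, I would prove by induction on $j$ that $\sigma_d(i_1')^{\diamond j} = \{t \in \sigma_d(i_1)^{\diamond j} \mid \neg(t \doubleVerticalTimes l)\}$ for each of the K-closure operators (and for the head-first variant $\globalWeakSeq^\Lsh$ used by $loop_H$), using the IH and again the auxiliary observation; taking the union over $j \in \mathbb{N}$ yields the claim. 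The subtle point requiring care is the $loop_H$ subcase, because the head-first restriction picks the first action from the left argument: here I would observe that the restriction only constrains the placement of actions within the composition and does not alter which lifelines appear in the resulting traces, so the filtering by $\neg(\cdot \doubleVerticalTimes l)$ still commutes with the iterated application of $\globalWeakSeq^\Lsh$.
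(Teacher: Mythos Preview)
Your proposal is correct and follows essentially the same route as the paper's proof: structural induction on $i$, using distributivity of the conflict predicate $\doubleVerticalTimes$ over the scheduling operators to handle the $strict/seq/par$ and loop cases, and splitting the loop case according to whether $i_1$ evades or collides with $l$. If anything, your outline is slightly more complete than the appendix version, since you explicitly treat the three $alt$ rules (via Lem.~\ref{lem:sem_de_evades_no_conflict} for the discarded branch) and flag the $loop_H$ subtlety, both of which the paper handles only implicitly.
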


\begin{proof}
By induction on the term structure of interactions.
\end{proof}

In the next section we make use of pruning so as to formally define an execution relation $\rightarrow$ for our IL and we use it to formulate a structural operational semantics in the style of process algebra.

\subsection{Execution relation \& operational semantics\label{subsec:execution_relation}}

A structural operational semantics, presented in the style of Plotkin \cite{a_structural_approach_to_operational_semantics} allows determining accepted traces $t \in \sigma_d(i_1)$ such that $t = a_1.\cdots.a_n$ through the assertion of a succession of predicates of the form $i_j \xrightarrow{a_j} i_{j+1}$ representing the evolution of the system. By expressing action $a_j$, the system goes from being modelled by $i_j$ to being modelled by $i_{j+1}$. We present this execution relation "$\rightarrow$" as an inductive predicate, in a similar fashion as the pruning relation "$\isPruneBase$".

\begin{definition}[Execution relation]\label{def:execution_relation}
The execution relation $\rightarrow \subset \mathbb{I}_\Omega \times \mathbb{A}_\Omega  \times \mathbb{I}_\Omega$ is s.t.:

{
\centering
\begin{minipage}{2cm}
\begin{prooftree}
\AxiomC{\phantom{$\xrightarrow{a}$}}
\UnaryInfC{$a \xrightarrow{a} \varnothing$}
\end{prooftree}
\end{minipage}
\begin{minipage}{3cm}
\begin{prooftree}
\AxiomC{$i_1 \xrightarrow{a} i'_1$}
\UnaryInfC{$alt(i_1,i_2) \xrightarrow{a} i'_1$}
\end{prooftree}
\end{minipage}
\begin{minipage}{3cm}
\begin{prooftree}
\AxiomC{$i_2 \xrightarrow{a} i'_2$}
\UnaryInfC{$alt(i_1,i_2) \xrightarrow{a} i'_2$}
\end{prooftree}
\end{minipage}

\vspace{0.1cm}

\begin{minipage}{5cm}
\begin{prooftree}
\AxiomC{$i_1 \xrightarrow{a} i'_1$}
\UnaryInfC{$par(i_1,i_2) \xrightarrow{a} par(i'_1,i_2)$}
\end{prooftree}
\end{minipage}
\begin{minipage}{5cm}
\begin{prooftree}
\AxiomC{$i_2 \xrightarrow{a} i'_2$}
\UnaryInfC{$par(i_1,i_2) \xrightarrow{a} par(i_1,i'_2)$}
\end{prooftree}
\end{minipage}

\vspace{0.1cm}

\begin{minipage}{5cm}
\begin{prooftree}
\AxiomC{$i_1 \xrightarrow{a} i'_1$}
\UnaryInfC{$strict(i_1,i_2) \xrightarrow{a} strict(i'_1,i_2)$}
\end{prooftree}
\end{minipage}
\begin{minipage}{5cm}
\begin{prooftree}
\AxiomC{$i_2 \xrightarrow{a} i'_2$}
\RightLabel{$i_1 \downarrow$}
\UnaryInfC{$strict(i_1,i_2) \xrightarrow{a} i'_2$}
\end{prooftree}
\end{minipage}

\vspace{0.1cm}

\begin{minipage}{5cm}
\begin{prooftree}
\AxiomC{$i_1 \xrightarrow{a} i'_1$}
\UnaryInfC{$seq(i_1,i_2) \xrightarrow{a} seq(i'_1,i_2)$}
\end{prooftree}
\end{minipage}
\begin{minipage}{5cm}
\begin{prooftree}
\AxiomC{$i_1 \isPruneOf{\theta(a)} i_1'$}
\AxiomC{$i_2 \xrightarrow{a} i_2'$}
\BinaryInfC{$seq(i_1,i_2) \xrightarrow{a} seq(i_1',i_2')$}
\end{prooftree}
\end{minipage}

\vspace{0.1cm}

\begin{minipage}{6cm}
\begin{prooftree}
\AxiomC{$i_1 \xrightarrow{a} i_1'$}
\UnaryInfC{$loop_X(i_1) \xrightarrow{a} strict(i_1',loop_X(i_1))$}
\end{prooftree}
\end{minipage}
\begin{minipage}{6cm}
\begin{prooftree}
\AxiomC{$i_1 \xrightarrow{a} i_1'$}
\UnaryInfC{$loop_H(i_1) \xrightarrow{a} seq(i_1',loop_H(i_1))$}
\end{prooftree}
\end{minipage}
 
\vspace{0.1cm}
 
\begin{minipage}{6cm}
\begin{prooftree}
\AxiomC{$i_1 \xrightarrow{a} i_1'$}
\AxiomC{$loop_S(i_1) \isPruneOf{\theta(a)} i'$}
\BinaryInfC{$loop_S(i_1) \xrightarrow{a} seq(i',seq(i_1',loop_S(i_1)))$}
\end{prooftree}
\end{minipage}
\begin{minipage}{6cm}
\begin{prooftree}
\AxiomC{$i_1 \xrightarrow{a} i_1'$}
\UnaryInfC{$loop_P(i_1) \xrightarrow{a} par(i_1',loop_P(i_1))$}
\end{prooftree}
\end{minipage}\\
}
\end{definition}

The formulation of most of the rules is very similar to what can be found in process algebras. The notable differences mainly concern the rules handling weak sequencing and loops.
In an interaction reduced to an action $a$, $a$ may be executed with $a \xrightarrow{a} \varnothing$ so that what remains to be executed is the empty interaction $\varnothing$. If within $i=alt(i_1,i_2)$, action $a$ can be executed in either $i_1$ or $i_2$ with either $i_1 \xrightarrow{a} i_1'$ or $i_2 \xrightarrow{a} i_2'$ then it may be executed in $i$ and the resulting interaction is either $i_1'$ or $i_2'$.
For $i=par(i_1,i_2)$, if we have either $i_1 \xrightarrow{a} i_1'$ or $i_2 \xrightarrow{a} i_2'$ then $a$ may be executed in $i$ and the resulting interaction naturally is either $par(i_1',i_2)$ or $par(i_1,i_2')$. 
Executing actions on the left of either a $strict$ or a $seq$ follow the same rule as in the case of a $par$ because no precedence relations are enforced on the left-hand side.
However, it is only possible to execute an action $a$ on the right of $i=strict(i_1,i_2)$ if $i_1$ terminates. Indeed, in that case $i_1$ may express the empty trace $\epsilon$ as per Lem.\ref{lem:sem_de_terminates} and nothing prevents $a$ to be the first action to be executed. The resulting interaction is then $i_2'$ given that we force $i_1$ to express $\epsilon$.
Likewise, within $i=seq(i_1,i_2)$ there is a condition for executing an action $a$ on the right. This condition is that $i_1 \evadesLf \theta(a)$, which, as per Lem.\ref{lem:pruning_existence_unicity} is implied by the condition $i_1 \isPruneOf{\theta(a)} i_1'$. Given $i_2'$ s.t. $i_2 \xrightarrow{a} i_2'$ we therefore have $i \xrightarrow{a} seq(i_1',i_2')$ given that we prune the left sub-interaction and execute the action on the right sub-interaction.

The rules for $loop_X$, $loop_H$ and $loop_P$ reflect the definitions of the corresponding HF-closures which respectively are $~^{\globalStrictSeq^\Lsh *}$, $~^{\globalWeakSeq^\Lsh *}$ and $~^{\globalInterleaving^\Lsh *}$. Let us indeed consider $(k,\diamond,f) \in \{(X,\globalStrictSeq,strict),(H,\globalWeakSeq,seq),(P,\globalInterleaving,par)\}$. Whenever $i \xrightarrow{a} i'$ we have $loop_k(i) \xrightarrow{a} f(i',loop_k(i))$. As a result, any $t \in \sigma_d(f(i',loop_k(i)))$ is s.t. $t \in \{t_1\} \diamond \sigma_d(loop_k(i))$ for a certain $t_1 \in \sigma_d(i')$. Given that $a.t_1 \in \sigma_d(i)$, this corresponds to choosing action $a$ from the first iteration of the loop i.e. $a.t \in \{a.t_1\} \diamond^\Lsh \sigma_d(loop_k(i)) \subset \sigma_d(i) \diamond^\Lsh \sigma_d(loop_k(i))$,
which coincides with using the restricted operator $\diamond^\Lsh$ as a scheduler.
$loop_H$ is explicitly associated to $~^{\globalWeakSeq^\Lsh *}$ and thus the formulation of its rule is self-evident. In the case of $loop_X$ and $loop_P$ it is the fact that the HF and K-closures of $\globalStrictSeq$ and $\globalInterleaving$ are equivalent (as per Lem.\ref{lem:equivalence_headfirst_and_kleene_closure_for_strict_and_interleaving}) which enables their respective rules to be formulated in this manner.

The rule for $loop_S$ allows for the first action $a$ of a trace $a.t$ to be taken from a later iteration of the loop.
Let us consider $i = loop_S(i_1)$ and $a.t \in \sigma_d(i)$. The rule is formulated such that $t \in \sigma_d(seq(i',seq(i_1',i)))$ with $i \isPruneOf{\theta(a)} i'$ and $i_1 \xrightarrow{a} i_1'$. Indeed, action $a$ is expressed by sub-interaction $i_1$ so there exists $i_1'$ s.t. $i_1 \xrightarrow{a} i_1'$. Also, given that $i$ is a loop, it is always prunable (Lem.\ref{lem:pruning_existence_unicity}) so there exists $i'$ s.t. $i \isPruneOf{\theta(a)} i'$. The fact that $t \in \sigma_d(seq(i',seq(i_1',i)))$ translates into having $t \in \sigma_d(i') \globalWeakSeq \sigma_d(i_1') \globalWeakSeq \sigma_d(i)$. Then:
\begin{itemize}
    \item If $a$ is taken from the first iteration of the loop, then, given that $\epsilon \in \sigma_d(i')$ (Lem.\ref{lem:sem_de_pruned}) we have $t \in \{\epsilon\} \globalWeakSeq \{t_1'\} \globalWeakSeq \sigma_d(i)$ with $t_1'$ a continuation of the first iteration s.t. $t_1 = a.t_1' \in \sigma_d(i_1)$.
    \item If $a$ is taken from the second iteration of the loop, let us consider $t_1 \in \sigma_d(i_1)$ the first iteration and $t_2 = a.t_2' \in \sigma_d(i_1)$ the second one (from which $a$ is taken and hence $t_2' \in \sigma_d(i_1')$). We then have $t \in \{t_1\} \globalWeakSeq \{t_2'\} \globalWeakSeq \sigma_d(i)$ and the condition $\neg(t_1 \doubleVerticalTimes \theta(a))$. This condition implies, as per Lem.\ref{lem:sem_de_pruned} that $\{t_1\} \subset \sigma_d(i')$.
    \item The reasoning is the same when $a$ is taken from later instances. Let us indeed consider $a.t \in \{t_1\} \globalWeakSeq \cdots \globalWeakSeq \{t_{n-1}\} \globalWeakSeq \{t_n'\} \globalWeakSeq \sigma_d(i)$. We then have $\{t_1\} \globalWeakSeq \cdots \globalWeakSeq \{t_{n-1}\} \subset \sigma_d(i')$ because $i'$ is either a loop (and therefore absorbing) or the empty interaction (in that degenerate case all the $t_j$ are $\epsilon$) and hence the rules indeed allows $a$ to be taken from any iteration.
\end{itemize}

\begin{definition}[Operational semantics]\label{def:operational_semantics}
$\sigma_o : \mathbb{I}_\Omega \rightarrow \mathcal{P}(\mathbb{T}_\Omega)$ is s.t.:

{
\centering
\begin{minipage}{4cm}
\begin{prooftree}
\AxiomC{$i \downarrow$}
\UnaryInfC{$\epsilon \in \sigma_o(i)$}
\end{prooftree}
\end{minipage}
\begin{minipage}{4cm}
\begin{prooftree}
\AxiomC{$t \in \sigma_o(i')$}
\AxiomC{$i \xrightarrow{a} i'$}
\BinaryInfC{$a.t \in \sigma_o(i)$}
\end{prooftree}
\end{minipage}\\
}

\end{definition}

This execution relation then constitutes the small steps of the operational semantics $\sigma_o$ that we define in Def.\ref{def:operational_semantics}.

\subsection{Illustrative example\label{subsec:illustrative_example}}

\begin{figure}[ht]
    \centering
\begingroup 
\def\arraystretch{3}
\setcellgapes{1.5ex}\makegapedcells
\begin{tabular}{|c|c}
\cline{1-1}
\makecell{
\rotatebox{90}{$seq(i,i)$}
}
&
\makecell[l]{
\scalebox{.75}{
\begin{tikzpicture}[every node/.style = {shape=rectangle, align=center}]
\node[draw,shape=rectangle] (i1) at (0,0) {\includegraphics[scale=.35]{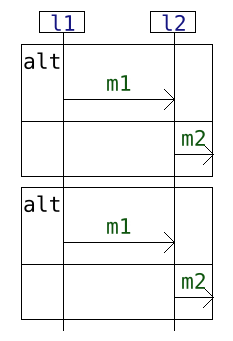}};
\node (transition1) at (2.5,0) {\Large $\xrightarrow{\hlf{l_1}!\hms{m_1}}$};
\node[draw,shape=rectangle] (i2) at (4.25,0) {\includegraphics[scale=.35]{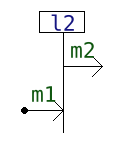}};
\node (transition2) at (6,0) {\Large $\xrightarrow{\hlf{l_2}!\hms{m_2}}$};
\node[draw,shape=rectangle] (i3) at (7.75,0) {\includegraphics[scale=.35]{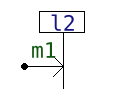}};
\node (transition3) at (9.5,0) {\Large $\xrightarrow{\hlf{l_2}?\hms{m_1}}$};
\node[draw,shape=rectangle,minimum width=1cm,minimum height=1cm] (i4) at (11,0) {};
\end{tikzpicture}
}
}
\\
\cline{1-1}
\makecell{
\rotatebox{90}{$loop_H(i)$}
}
&
\makecell[l]{
\scalebox{.75}{
\begin{tikzpicture}[every node/.style = {shape=rectangle, align=center}]
\node[draw,shape=rectangle] (i1) at (0,0) {\includegraphics[scale=.35]{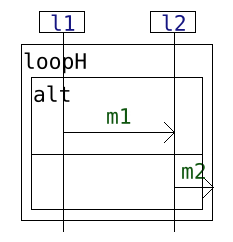}};
\node (transition1) at (2.5,0) {\Large $\xrightarrow{\hlf{l_1}!\hms{m_1}}$};
\node[draw,shape=rectangle] (i2) at (5,0) {\includegraphics[scale=.35]{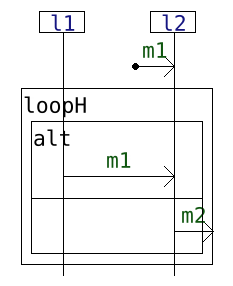}};
\end{tikzpicture}
}
}
\\
\cline{1-1}
\makecell{
\rotatebox{90}{$loop_S(i)$}
}
&
\makecell[l]{
\scalebox{.75}{
\begin{tikzpicture}[every node/.style = {shape=rectangle, align=center}]
\node[draw,shape=rectangle] (i1) at (0,0) {\includegraphics[scale=.35]{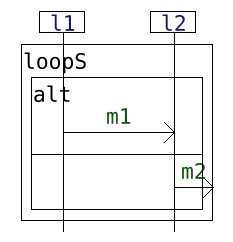}};
\node (transition1) at (2.15,0) {\Large $\xrightarrow{\hlf{l_1}!\hms{m_1}}$};
\node[draw,shape=rectangle] (i2) at (4.3,0) {\includegraphics[scale=.35]{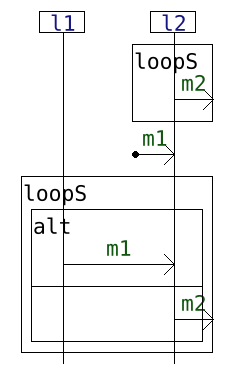}};
\node (transition2) at (6.45,0) {\Large $\xrightarrow{\hlf{l_2}!\hms{m_2}}$};
\node[draw,shape=rectangle] (i3) at (8.6,0) {\includegraphics[scale=.35]{figures/6/op_loopS_ex_after.png}};
\node (transition3) at (10.75,0) {\Large $\xrightarrow{\hlf{l_2}?\hms{m_1}}$};
\node[draw,shape=rectangle] (i4) at (12.9,0) {\includegraphics[scale=.35]{figures/6/op_loopS_ex_before.png}};
\end{tikzpicture}
}
}
\\
\cline{1-1}
\end{tabular}
\endgroup 
    \caption{Illustration of the operational semantics \& of the counter-example from Sec.\ref{sec:semantic_domain}}
    \label{fig:counter_example_with_operational_semantics}
\end{figure}

On Fig.\ref{fig:counter_example_with_operational_semantics} we illustrate both the operational semantics and the counter-example from Sec.\ref{sec:semantic_domain} showcasing the difference between $loop_H$ and $loop_S$. We consider repetitions of $i = alt(strict(l_1!m_1,l_2?m_1),l_2!m_2)$. On the first row we illustrate the construction of a trace accepted by $seq(i,i)$ where $i$ is repeated twice using weak sequencing.
Here, the second occurrence of action $l_1!m_1$ (at the bottom) is immediately executable because, with pruning, we can force the choice of the right branch of the first alternative which evades $l_1$. From that point on the execution of the remainder is straightforward, and, in total, we can conclude that the trace $t = l_1!m_1.l_2!m_2.l_2?m_1$ is expressed by $seq(i,i)$.
Now, if we were to try to reproduce this behavior using $loop_H$ to repeat $i$ (i.e. with $loop_H(i)$) we would get what is illustrated on the second row of Fig.\ref{fig:counter_example_with_operational_semantics}. We can manage to execute the first action $l_1!m_1$ but from that point, the second action of $t$ which is $l_2!m_2$ is not executable. Indeed, the presence of $l_2?m_1$ at the top of the diagram prevents it to be executed.
As we have indicated earlier (and by definition) $loop_H$ is associated to the weak HF-closure $~^{\globalWeakSeq^\Lsh *}$ and not to the K-closure $~^{\globalWeakSeq *}$. It is therefore expected that $t$ could not be accepted by $loop_H(i)$ in this example.
However, as illustrated on the third row of Fig.\ref{fig:counter_example_with_operational_semantics}, $loop_S(i)$ can recognize $t$. The addition of the pruned version of the loop allows one to delay the determination of the instance as part of which the initial $l_1!m_1$ is executed. In this case, the pruned loop is instanciated once, which means that $l_1!m_1$ was part of the second instance.

\section{Proving the equivalence of both semantics\label{sec:proof_equivalence}}

In the following we formally prove the equivalence of $\sigma_o$ and $\sigma_d$. Details of the proofs are available in the appendix and a formalisation using Coq is available in \cite{coq_hibou_label_semantics_equivalence}.

Let us at first prove that for any interaction $i$ we have $\sigma_o(i) \subset \sigma_d(i)$.
The first step to do so is to characterize the execution relation "$\rightarrow$" w.r.t. $\sigma_d$, which we do in Lem.\ref{lem:sem_de_execute1}.

\begin{lemma}[Property 1 of $\rightarrow$ w.r.t. $\sigma_d$]\label{lem:sem_de_execute1}
For any $a \in \mathbb{A}_\Omega$, $t \in \mathbb{T}_\Omega$ and $i$ and $i'$ from $\mathbb{I}_\Omega$:
\[
\left(
\begin{array}{c}
(i \xrightarrow{a} i')
\wedge 
(t \in \sigma_d(i'))
\end{array}
\right)
\Rightarrow 
(a.t \in \sigma_d(i))
\]
\end{lemma}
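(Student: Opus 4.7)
The plan is to proceed by induction on the derivation of $i \xrightarrow{a} i'$, doing a case analysis on the last rule applied. In each case, the inductive hypothesis gives us the inclusion for the smaller transitions appearing in the premises, and the task reduces to a purely denotational calculation using the definitions of $\sigma_d$ together with Lem.\ref{lem:sem_de_terminates}, Lem.\ref{lem:sem_de_pruned}, and the absorption property $T \diamond T^{\diamond *} = T^{\diamond *}$.

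The straightforward cases are the axiom $a \xrightarrow{a} \varnothing$ (where $t = \epsilon$ and $a.\epsilon \in \{a\} = \sigma_d(a)$), the two $alt$ rules (direct by $\sigma_d(alt(i_1,i_2)) = \sigma_d(i_1) \cup \sigma_d(i_2)$), and the two $par$ rules together with the $strict$ and $seq$ rules where the action is consumed from the left-hand operand: here one writes $t$ as a decomposition in $\sigma_d(i'_1) \diamond \sigma_d(i_2)$, applies the IH to obtain $a.t'_1 \in \sigma_d(i_1)$, and checks that pushing $a$ back in front still yields an element of $\sigma_d(i_1) \diamond \sigma_d(i_2)$. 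The $strict$-right rule uses Lem.\ref{lem:sem_de_terminates} so that $\epsilon \in \sigma_d(i_1)$ provides a valid left part. For the $seq$-right rule one combines the IH on $i_2$ with Lem.\ref{lem:sem_de_pruned}: the decomposition $t = t_1 \globalWeakSeq t'_2$ with $t_1 \in \sigma_d(i'_1) \subseteq \sigma_d(i_1)$ satisfies $\neg(t_1 \doubleVerticalTimes \theta(a))$, which is exactly the side-condition allowing $a$ to be emitted after $t_1$ in the weak-sequencing definition, so that $a.t \in t_1 \globalWeakSeq (a.t'_2) \subseteq \sigma_d(i_1) \globalWeakSeq \sigma_d(i_2)$.

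The loop cases for $loop_X$, $loop_H$ and $loop_P$ are all variants of the same pattern: the IH gives $a.t'_1 \in \sigma_d(i_1)$, any $t \in \sigma_d(i'_1) \diamond \sigma_d(loop_k(i_1))$ decomposes as $t'_1 \diamond t_L$, and one concludes
\[
a.t \;\in\; \{a.t'_1\} \diamond \{t_L\} \;\subseteq\; \sigma_d(i_1) \diamond \sigma_d(i_1)^{\diamond *} \;=\; \sigma_d(i_1)^{\diamond *},
\]
with $\diamond \in \{\globalStrictSeq,\globalWeakSeq,\globalInterleaving\}$ as appropriate. Prepending $a$ to the front of the composed trace requires, for the $\globalWeakSeq$ case ($loop_H$), that the left component admit $a$ at its head, which is automatic since $a$ is the very first action of $a.t'_1$.

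The main obstacle will be the $loop_S$ case, where the derived transition is $loop_S(i_1) \xrightarrow{a} seq(i', seq(i'_1, loop_S(i_1)))$ with the two premises $i_1 \xrightarrow{a} i'_1$ and $loop_S(i_1) \isPruneOf{\theta(a)} i'$. Here I would start from $t \in \sigma_d(i') \globalWeakSeq \sigma_d(i'_1) \globalWeakSeq \sigma_d(loop_S(i_1))$, use Lem.\ref{lem:sem_de_pruned} to rewrite the decomposition as $t \in u_0 \globalWeakSeq u_1 \globalWeakSeq u_L$ with $u_0 \in \sigma_d(loop_S(i_1))$, $\neg(u_0 \doubleVerticalTimes \theta(a))$, $u_1 \in \sigma_d(i'_1)$ and $u_L \in \sigma_d(loop_S(i_1))$, then use the IH to replace $u_1$ by $a.u_1 \in \sigma_d(i_1)$. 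The non-conflict condition on $u_0$ is precisely what allows one to slide $a$ past $u_0$ in the weak-sequencing definition, obtaining $a.t \in u_0 \globalWeakSeq ((a.u_1) \globalWeakSeq u_L)$. Two applications of the absorption property $\sigma_d(i_1) \globalWeakSeq \sigma_d(loop_S(i_1)) = \sigma_d(loop_S(i_1))$ and $\sigma_d(loop_S(i_1)) \globalWeakSeq \sigma_d(loop_S(i_1)) = \sigma_d(loop_S(i_1))$ conclude. The delicate point here is handling the ternary weak-sequencing decomposition and making the associativity/absorption steps rigorous, which I expect to require an auxiliary observation stating that $a.(u \globalWeakSeq v) \in u \globalWeakSeq (a.v)$ whenever $\neg(u \doubleVerticalTimes \theta(a))$.
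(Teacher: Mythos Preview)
Your proposal is correct and follows essentially the same approach as the paper: induction on the derivation of $i \xrightarrow{a} i'$ with a case analysis over the thirteen rules, invoking Lem.~\ref{lem:sem_de_terminates} for the $strict$-right case, Lem.~\ref{lem:sem_de_pruned} for the $seq$-right and $loop_S$ cases, and the absorption property of Kleene closures for the loops. One small point to tighten: in the $loop_H$ case the target set is $\sigma_d(i_1)^{\globalWeakSeq^\Lsh *}$, not $\sigma_d(i_1)^{\globalWeakSeq *}$, so the absorption step must use the restricted operator $\globalWeakSeq^\Lsh$; your parenthetical remark about $a$ being at the head of $a.t'_1$ is exactly what justifies $a.t \in \{a.t'_1\} \globalWeakSeq^\Lsh \{t_L\}$, so the argument goes through once you make that explicit.
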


\begin{proof}
By induction on the 13 cases that makes the hypothesis $i \xrightarrow{a} i'$ possible.
\end{proof}

We then remark that Lem.\ref{lem:sem_de_execute1} and Lem.\ref{lem:sem_de_terminates} state that the $\sigma_d$ semantics accepts the same two construction rules (that for the empty trace $\epsilon$ and that for non empty traces of the form $a.t$) as those that define $\sigma_o$ inductively. As a result any trace that might be accepted according to $\sigma_o$ must also be accepted according to $\sigma_d$. This implies the inclusion from Th.\ref{th:sem_op_included_in_sem_de}.

\begin{theorem}[Inclusion of $\sigma_o$ in $\sigma_d$]\label{th:sem_op_included_in_sem_de}
For any $i \in \mathbb{I}_\Omega$ we have $\sigma_o(i) \subset \sigma_d(i)$
\end{theorem}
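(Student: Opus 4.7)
The plan is to prove $\sigma_o(i) \subseteq \sigma_d(i)$ by induction on the length of traces $t \in \sigma_o(i)$ (equivalently, on the derivation tree of the judgement $t \in \sigma_o(i)$ built from the two inference rules of Def.\ref{def:operational_semantics}). Crucially, the two lemmas already established, namely Lem.\ref{lem:sem_de_terminates} and Lem.\ref{lem:sem_de_execute1}, show that $\sigma_d$ is closed under exactly the same two construction rules that define $\sigma_o$ inductively. So once one has those lemmas, the theorem is essentially a syntactic transport along the inductive definition.

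More concretely, I would fix an interaction $i$ and pick an arbitrary $t \in \sigma_o(i)$, then proceed by case analysis on the last rule used to derive $t \in \sigma_o(i)$. In the base case, $t = \epsilon$ and the derivation uses the termination rule, so $i \downarrow$; by Lem.\ref{lem:sem_de_terminates} this gives $\epsilon \in \sigma_d(i)$. In the inductive case, $t = a.t'$ for some action $a$ and trace $t'$, and the derivation provides an interaction $i'$ with $i \xrightarrow{a} i'$ and a strictly shorter sub-derivation of $t' \in \sigma_o(i')$. The induction hypothesis (on the length of the trace, or on the height of the derivation) yields $t' \in \sigma_d(i')$. Combining this with $i \xrightarrow{a} i'$ and applying Lem.\ref{lem:sem_de_execute1} gives $a.t' \in \sigma_d(i)$, as required.

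Since $t$ was arbitrary, $\sigma_o(i) \subseteq \sigma_d(i)$. There is no real obstacle at this stage: the theorem is a mechanical unfolding of the inductive definition of $\sigma_o$, using the two preparatory lemmas as the only non-trivial ingredients. The genuine technical difficulty is concentrated earlier, in the proof of Lem.\ref{lem:sem_de_execute1} (specifically in the $seq$ case, where one needs Lem.\ref{lem:sem_de_pruned} to recompose a trace of $\sigma_d(seq(i_1,i_2))$ from a pruned witness in $\sigma_d(i_1')$, and in the $loop_S$ case, where the pruned loop acts as an absorbing factor so that a trace produced by the right-hand side $seq(i',seq(i_1',loop_S(i_1)))$ can be re-read as a trace of $\sigma_d(loop_S(i_1))$ with $a$ coming from an arbitrary iteration). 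Once that lemma is in hand, Th.\ref{th:sem_op_included_in_sem_de} follows in just a few lines.
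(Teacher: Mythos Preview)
Your proposal is correct and matches the paper's own proof essentially line for line: induction on the trace $t$, with the base case handled by Lem.~\ref{lem:sem_de_terminates} and the inductive step by Lem.~\ref{lem:sem_de_execute1}. Your additional remark that the real work is concentrated in Lem.~\ref{lem:sem_de_execute1} (notably the $seq$ and $loop_S$ cases) is also accurate.
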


\begin{proof}
By induction on a member trace $t$.
\end{proof}

Let us now prove the reciprocate, i.e. that for any interaction $i$, $\sigma_d(i) \subset \sigma_o(i)$.
As in the previous case, we provide, with Lem.\ref{lem:sem_de_execute2}, a characterization of "$\rightarrow$" w.r.t. $\sigma_d$. Lem.\ref{lem:sem_de_execute2} is, in a certain manner, the reciprocate of Lem.\ref{lem:sem_de_execute1}.

\begin{lemma}[Property 2 of $\rightarrow$ w.r.t. $\sigma_d$]\label{lem:sem_de_execute2}
For any $a \in \mathbb{A}_\Omega$, $t \in \mathbb{T}_\Omega$ and $i \in \mathbb{I}_\Omega$:
\[
(a.t \in \sigma_d(i))
\Rightarrow
\left(
\exists~i' \in \mathbb{I}_\Omega,~
\begin{array}{c}
(i \xrightarrow{a} i')
\wedge 
(t \in \sigma_d(i'))
\end{array}
\right)
\]
\end{lemma}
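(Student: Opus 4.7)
The plan is to proceed by structural induction on the interaction term $i$. The base cases are immediate: for $i = \varnothing$ the implication is vacuous since $\sigma_d(\varnothing) = \{\epsilon\}$ contains no trace of the form $a.t$, and for $i = a_0 \in \mathbb{A}_\Omega$ the only way to have $a.t \in \{a_0\}$ is $a = a_0$ and $t = \epsilon$, with witness $i' = \varnothing$ via the axiom $a_0 \xrightarrow{a_0} \varnothing$ and $\epsilon \in \sigma_d(\varnothing)$. For $alt$ the union semantics dispatches directly to the induction hypothesis on the branch containing $a.t$. For $par$, the interleaving definition gives that $a$ is the head of a trace in $\sigma_d(i_k)$ for some $k \in \{1,2\}$, and the induction hypothesis on $i_k$ combined with the corresponding $par$ rule concludes. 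For $strict$, either $\epsilon \in \sigma_d(i_1)$ (so $i_1 \downarrow$ by Lem.\ref{lem:sem_de_terminates}) and $a.t \in \sigma_d(i_2)$, invoking the right-$strict$ rule, or the first operand contributes $a$ and the left-$strict$ rule applies.

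The $seq$ case invokes the pruning machinery. The trace $a.t \in \sigma_d(i_1) \globalWeakSeq \sigma_d(i_2)$ factors through weak sequencing as $a.t \in \{t_1\} \globalWeakSeq \{t_2\}$ with $t_1 \in \sigma_d(i_1)$ and $t_2 \in \sigma_d(i_2)$, and $a$ is the head of either $t_1$ or $t_2$. The first sub-case is handled like $par$. In the second, weak sequencing forces $\neg(t_1 \doubleVerticalTimes \theta(a))$, so Lem.\ref{lem:sem_de_evades_no_conflict} yields $i_1 \evadesLf \theta(a)$, Lem.\ref{lem:pruning_existence_unicity} produces the unique $i_1'$ with $i_1 \isPruneOf{\theta(a)} i_1'$, and Lem.\ref{lem:sem_de_pruned} places $t_1$ in $\sigma_d(i_1')$. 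The induction hypothesis on $i_2$ then gives $i_2'$ with $i_2 \xrightarrow{a} i_2'$ and the tail $t_2'$ of $t_2$ in $\sigma_d(i_2')$; the right-$seq$ rule fires and the continuation $t \in \{t_1\} \globalWeakSeq \{t_2'\}$ lies in $\sigma_d(seq(i_1', i_2'))$.

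The $loop_X$ and $loop_P$ cases are direct applications of Lem.\ref{lem:strict_and_interleaving_kleene_closure_operational_charac_on_traces}, which extracts $t'$ with $a.t' \in \sigma_d(i_1)$ and $t \in \{t'\} \diamond \sigma_d(loop_k(i_1))$; the induction hypothesis on $i_1$ and the corresponding loop rule close each case. The $loop_H$ case follows by unfolding $\sigma_d(i_1)^{\globalWeakSeq^\Lsh *} = \sigma_d(i_1) \globalWeakSeq^\Lsh \sigma_d(i_1)^{\globalWeakSeq^\Lsh *}$ (fixed-point property of K-closures) and applying the defining condition of $\globalWeakSeq^\Lsh$, which forces the head action to come from $\sigma_d(i_1)$ itself, whence the induction hypothesis on $i_1$ and the $loop_H$ rule apply.

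The main obstacle is the $loop_S$ case, where the head action $a$ may lie arbitrarily deep among the loop iterations. From $a.t \in \sigma_d(i_1)^{\globalWeakSeq *}$ one extracts a witnessing factorisation $a.t \in \{t_1\} \globalWeakSeq \cdots \globalWeakSeq \{t_n\}$ with each $t_m \in \sigma_d(i_1)$, and identifies the least index $k$ such that $a$ is the head of $t_k = a.t_k'$; the weak-sequencing definition then forces $\neg(t_m \doubleVerticalTimes \theta(a))$ for every $m < k$. The induction hypothesis applied to $i_1$ yields $i_1'$ with $i_1 \xrightarrow{a} i_1'$ and $t_k' \in \sigma_d(i_1')$. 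Since every loop evades every lifeline, Lem.\ref{lem:pruning_existence_unicity} provides a unique $i'$ with $loop_S(i_1) \isPruneOf{\theta(a)} i'$, and the $loop_S$ rule produces $loop_S(i_1) \xrightarrow{a} seq(i', seq(i_1', loop_S(i_1)))$. The prefix $t_{prev} \in \{t_1\} \globalWeakSeq \cdots \globalWeakSeq \{t_{k-1}\}$ (equal to $\epsilon$ when $k = 1$) lies in $\sigma_d(loop_S(i_1))$ and avoids $\theta(a)$, hence sits in $\sigma_d(i')$ by Lem.\ref{lem:sem_de_pruned}; the suffix $t_{post} \in \{t_{k+1}\} \globalWeakSeq \cdots \globalWeakSeq \{t_n\}$ lies in $\sigma_d(loop_S(i_1))$; and reassembling places the continuation $t$ in $\{t_{prev}\} \globalWeakSeq \{t_k'\} \globalWeakSeq \{t_{post}\} \subseteq \sigma_d(seq(i', seq(i_1', loop_S(i_1))))$, closing the induction.
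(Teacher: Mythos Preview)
Your proposal is correct and follows essentially the same approach as the paper's proof: structural induction on $i$, with the same case analysis, the same invocations of Lem.\ref{lem:sem_de_terminates}, Lem.\ref{lem:sem_de_evades_no_conflict}, Lem.\ref{lem:pruning_existence_unicity}, Lem.\ref{lem:sem_de_pruned}, and Lem.\ref{lem:strict_and_interleaving_kleene_closure_operational_charac_on_traces} at the corresponding points, and the same factorisation-based treatment of $loop_S$. Your handling of the prefix in the $loop_S$ case (placing it directly in $\sigma_d(loop_S(i_1))$ and then invoking Lem.\ref{lem:sem_de_pruned}) is in fact slightly cleaner than the paper's two-subcase argument distinguishing whether all earlier iterates are empty.
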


\begin{proof}
By induction on the term structure of interactions.
\end{proof}

Thanks to Lem.\ref{lem:sem_de_execute2} and the characterization from Lem.\ref{lem:sem_de_terminates} we conclude with Th.\ref{th:sem_de_included_in_sem_op}.

\begin{theorem}[Inclusion of $\sigma_d$ in $\sigma_o$]\label{th:sem_de_included_in_sem_op}
For any $i \in \mathbb{I}_\Omega$ we have $\sigma_d(i) \subset \sigma_o(i)$
\end{theorem}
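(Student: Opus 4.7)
The plan is to mirror the structure of the proof of Theorem \ref{th:sem_op_included_in_sem_de}, but this time working in the reverse direction using Lemma \ref{lem:sem_de_execute2} in place of Lemma \ref{lem:sem_de_execute1}. Concretely, I would fix an arbitrary $i \in \mathbb{I}_\Omega$ and prove by induction on the length of a trace $t$ that, for every interaction $j$, $t \in \sigma_d(j)$ implies $t \in \sigma_o(j)$; the statement then follows by specialising to $j = i$. Strengthening the induction hypothesis to quantify over all interactions is important because the inductive step will invoke the hypothesis on a strictly shorter trace but on a possibly different interaction $i'$ produced by the execution relation.

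For the base case $t = \epsilon$, suppose $\epsilon \in \sigma_d(j)$. By Lemma \ref{lem:sem_de_terminates}, this is equivalent to $j \downarrow$, and the first inference rule of Def.\ref{def:operational_semantics} immediately yields $\epsilon \in \sigma_o(j)$. For the inductive step, consider $t = a.t'$ with $a.t' \in \sigma_d(j)$. Lemma \ref{lem:sem_de_execute2} provides an interaction $j'$ such that $j \xrightarrow{a} j'$ and $t' \in \sigma_d(j')$. Since $t'$ is strictly shorter than $t$, the induction hypothesis applies to $j'$ and gives $t' \in \sigma_o(j')$. The second inference rule of Def.\ref{def:operational_semantics}, applied to the premises $t' \in \sigma_o(j')$ and $j \xrightarrow{a} j'$, then yields $a.t' \in \sigma_o(j)$, closing the induction.

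The argument is essentially a transcription of the two construction rules defining $\sigma_o$ inductively, matched against the two characterisations of $\sigma_d$ provided by Lemmas \ref{lem:sem_de_terminates} and \ref{lem:sem_de_execute2}. Consequently there is no genuine obstacle at this level: all the conceptual difficulty has already been absorbed into Lemma \ref{lem:sem_de_execute2}, whose proof by induction on the term structure of interactions has to cope with pruning, evasion, and the subtleties of the four loop constructors (in particular the $loop_S$ rule, where a first action $a$ of $a.t \in \sigma_d(loop_S(i_1))$ may have been contributed by an arbitrarily late iteration and must be reconstructed via the pruned prefix $i'$ with $loop_S(i_1) \isPruneOf{\theta(a)} i'$). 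Once that lemma is granted, the present theorem is a routine trace-length induction of the kind indicated by the author's "By induction on a member trace $t$" hint.
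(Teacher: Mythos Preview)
Your proposal is correct and follows essentially the same approach as the paper's own proof: induction on the trace, with Lemma~\ref{lem:sem_de_terminates} handling the empty case and Lemma~\ref{lem:sem_de_execute2} supplying the successor $i'$ in the inductive step. Your explicit strengthening of the induction hypothesis to quantify over all interactions is a welcome clarification of something the paper leaves implicit when it applies the hypothesis to the freshly produced $i'$.
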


\begin{proof}
By induction on a member trace $t$.
\end{proof}

We have therefore proven both inclusion and can conclude that the operational semantics $\sigma_o$ is indeed equivalent to the denotational-style semantics $\sigma_d$.

\section{Related works\label{sec:related_works}}

The main inspiration for the syntax \& denotational semantics of our IL is \cite{UML_interactions_meet_state_machines_an_institutional_approach}. In \cite{UML_interactions_meet_state_machines_an_institutional_approach}, a semantics for interactions is formulated using operators on sets of traces. However, loops are not handled.
In \cite{operational_semantics_for_msc}, an operational semantics for MSC is presented. Similarities between \cite{operational_semantics_for_msc} and our work notably include the use of pruning which, in \cite{operational_semantics_for_msc}, relates to a "permission relation".
However we have some major distinctions concerning which constructors are defined and how they are handled by rules of the semantics. In \cite{operational_semantics_for_msc} loops are not handled and there is no $strict$ constructor. Indeed, causal relations between actions occurring on different lifelines (e.g. emission-reception) are handled by maps that are updated during execution. In \cite{the_many_meanings_of_uml2_sd_a_survey} a survey of formal semantics associated to UML-SDs is proposed. It is notable that UML-SDs are described semi-formally in the norm \cite{UMLNorm}. This allows for a rich language with operators such as $assert$ or $negate$ \cite{assert_and_negate_revisited_modal_semantics_for_UML_sequence_diagrams} which are not covered in our IL. However a full formalisation proves difficult, as explained in \cite{the_many_meanings_of_uml2_sd_a_survey,assert_and_negate_revisited_modal_semantics_for_UML_sequence_diagrams}.
As mentioned in the introduction, most approaches rely on translations towards other formalisms \cite{compositional_semantics_for_UML2_sequence_diagrams_using_petri_nets} or consist in denotational semantics \cite{semantics_of_interactions_in_uml_2_0} that are most often based on partial order sets. The extent to which UML-SDs are formalised may vary \cite{the_many_meanings_of_uml2_sd_a_survey}; some works formalize loops \cite{required_behavior_of_sequence_diagrams_semantics_and_conformance}, others do not \cite{UML_interactions_meet_state_machines_an_institutional_approach}, and some only allow finitely many iterations \cite{semantics_of_interactions_in_uml_2_0}. In all cases where there are loops, only one loop operator is proposed and may correspond to either $loop_H$ or $loop_S$.

Earlier works of ours \cite{revisiting_semantics_of_interactions_for_trace_validity_analysis,a_small_step_approach_to_multi_trace_checking_against_interactions}
focused on the application of the operational semantics for formal verification. In particular we were interested in the static analysis of traces and multi-traces (sets of locally defined traces) against interaction specifications (the membership problem). In those works we have used an "algorithmicized" version of the operational semantics. The inductive predicates for $\downarrow$, $\evadesLf$ and the $\isPruneBase$ and $\rightarrow$ relations were presented in functional style and we separated concerns between the determination of which actions are immediately executable (uniquely identified by their positions) and the execution of said actions. Novel contributions in this paper w.r.t. \cite{revisiting_semantics_of_interactions_for_trace_validity_analysis,a_small_step_approach_to_multi_trace_checking_against_interactions} consist in the distinction of $loop_S$ \& $loop_H$, the formulation of the operational semantics in the style of process algebras and the proof of equivalence w.r.t. a denotational semantics (a Coq proof is available in \cite{coq_hibou_label_semantics_equivalence}).

\section{Conclusion \& further work\label{sec:conclusion}}

In this paper we defined an IL for specifying the behavior of distributed and concurrent systems. This language includes weak and strict sequencing, parallel \& alternative composition as well as four distinct loop operators to specify different kinds of repetition. We formulate the semantics of this IL: (1) in denotational-style, making use of composition \& algebraic operators and (2) in operational-style by reconstructing accepted traces via the succession of atomic executions. The equivalence of both formulations is proven (Coq proof in \cite{coq_hibou_label_semantics_equivalence}).
As demonstrated in \cite{revisiting_semantics_of_interactions_for_trace_validity_analysis,a_small_step_approach_to_multi_trace_checking_against_interactions}, the operational semantics can be further exploited in formal verification techniques such as multi-trace analysis. Other techniques may be explored in further work, such as online testing, monitoring or test generation.
In addition, we may also investigate enriching our language with some form of value passing i.e. instead of exchanging abstract messages $m \in M$ we may interpret them concretely or symbolically with typed data. This last point is notably addressed in some process calculi frameworks \cite{a_symbolic_approach_to_value_passing_processes}.


\bibliography{biblio}

\begin{thebibliography}{10}

\bibitem{distributed_timed_automata_with_independently_evolving_clocks}
S.~Akshay, Benedikt Bollig, Paul Gastin, Madhavan Mukund, and K.~Narayan~Kumar.
\newblock Distributed timed automata with independently evolving clocks.
\newblock In Franck van Breugel and Marsha Chechik, editors, {\em 19th Conf. on
  Concurrency Theory (CONCUR)}, pages 82--97, Berlin, Heidelberg, 2008.
  Springer Berlin Heidelberg.

\bibitem{process_algebra_with_explicit_termination}
J.C.M. Baeten.
\newblock {\em Process algebra with explicit termination}.
\newblock Computing science reports. Technische Universiteit Eindhoven, 2000.

\bibitem{compositional_semantics_for_UML2_sequence_diagrams_using_petri_nets}
Christoph Eichner, Hans Fleischhack, Roland Meyer, Ulrik Schrimpf, and
  Christian Stehno.
\newblock Compositional semantics for uml 2.0 sequence diagrams using petri
  nets.
\newblock In Andreas Prinz, Rick Reed, and Jeanne Reed, editors, {\em SDL 2005:
  Model Driven}, pages 133--148, Berlin, Heidelberg, 2005. Springer Berlin
  Heidelberg.

\bibitem{dynamic_recursive_petri_nets}
Serge Haddad and Igor Khmelnitsky.
\newblock Dynamic recursive petri nets.
\newblock In Ryszard Janicki, Natalia Sidorova, and Thomas Chatain, editors,
  {\em Application and Theory of Petri Nets and Concurrency}, pages 345--366,
  Cham, 2020. Springer International Publishing.

\bibitem{assert_and_negate_revisited_modal_semantics_for_UML_sequence_diagrams}
David Harel and Shahar Maoz.
\newblock Assert and negate revisited: Modal semantics for {UML} sequence
  diagrams.
\newblock {\em Software and Systems Modeling}, 7(2):237--252, 2008.

\bibitem{a_symbolic_approach_to_value_passing_processes}
Anna Ingólfsdóttir and Huimin Lin.
\newblock A symbolic approach to value-passing processes.
\newblock In J.A. Bergstra, A.~Ponse, and S.A. Smolka, editors, {\em Handbook
  of Process Algebra}, pages 427--478. Elsevier Science, Amsterdam, 2001.
\newblock \href {https://doi.org/10.1016/B978-044482830-9/50025-4}
  {\path{doi:10.1016/B978-044482830-9/50025-4}}.

\bibitem{UML_interactions_meet_state_machines_an_institutional_approach}
Alexander Knapp and Till Mossakowski.
\newblock {UML Interactions Meet State Machines - An Institutional Approach}.
\newblock In {\em 7th Conf. on Algebra and Coalgebra in Computer Science
  (CALCO)}, volume~72 of {\em Leibniz International Proceedings in Informatics
  (LIPIcs)}, 2017.

\bibitem{model_checking_of_uml2_interactions}
Alexander Knapp and Jochen Wuttke.
\newblock Model checking of uml 2.0 interactions.
\newblock In Thomas K{\"u}hne, editor, {\em Models in Software Engineering},
  pages 42--51, Berlin, Heidelberg, 2007. Springer Berlin Heidelberg.

\bibitem{series_parallel_languages_and_the_bounded_width_property}
Kamal Lodaya and Pascal Weil.
\newblock Series-parallel languages and the bounded-width property.
\newblock {\em Theor. Comput. Sci.}, 237(1–2):347–380, April 2000.
\newblock \href {https://doi.org/10.1016/S0304-3975(00)00031-1}
  {\path{doi:10.1016/S0304-3975(00)00031-1}}.

\bibitem{required_behavior_of_sequence_diagrams_semantics_and_conformance}
Lunjin Lu and Dae-Kyoo Kim.
\newblock Required behavior of sequence diagrams: Semantics and conformance.
\newblock {\em ACM Trans. Softw. Eng. Methodol.}, 23(2), April 2014.
\newblock \href {https://doi.org/10.1145/2523108} {\path{doi:10.1145/2523108}}.

\bibitem{coq_hibou_label_semantics_equivalence}
Erwan Mahe.
\newblock Coq proof for the equivalence of the semantics.
\newblock \url{erwanm974.github.io/coq_hibou_label_semantics_equivalence/}.
\newblock Accessed: 2021-04-29.

\bibitem{a_small_step_approach_to_multi_trace_checking_against_interactions}
Erwan Mahe, Boutheina Bannour, Christophe Gaston, Arnault Lapitre, and Pascale
  Le~Gall.
\newblock A small-step approach to multi-trace checking against interactions.
\newblock In {\em Proceedings of the 36th Annual ACM Symposium on Applied
  Computing}, SAC '21, page 1815–1822, New York, NY, USA, 2021. Association
  for Computing Machinery.
\newblock \href {https://doi.org/10.1145/3412841.3442054}
  {\path{doi:10.1145/3412841.3442054}}.

\bibitem{revisiting_semantics_of_interactions_for_trace_validity_analysis}
Erwan Mahe, Christophe Gaston, and Pascale Le~Gall.
\newblock Revisiting semantics of interactions for trace validity analysis.
\newblock In Heike Wehrheim and Jordi Cabot, editors, {\em Fundamental
  Approaches to Software Engineering}, pages 482--501, Cham, 2020. Springer
  International Publishing.

\bibitem{operational_semantics_for_msc}
Sjouke Mauw and Michel~Adriaan Reniers.
\newblock Operational semantics for msc'96.
\newblock {\em Computer Networks}, 31(17):1785--1799, 1999.
\newblock \href {https://doi.org/10.1016/S1389-1286(99)00060-2}
  {\path{doi:10.1016/S1389-1286(99)00060-2}}.

\bibitem{the_many_meanings_of_uml2_sd_a_survey}
Zolt{\'a}n Micskei and H{\'e}l{\`e}ne Waeselynck.
\newblock The many meanings of uml 2 sequence diagrams: a survey.
\newblock {\em Software {\&} Systems Modeling}, 10(4):489--514, 2011.

\bibitem{UMLNorm}
{OMG}.
\newblock {U}nified {M}odeling {L}anguage v2.5.1.
\newblock \url{omg.org/spec/UML/2.5.1/PDF}, 12 2017.

\bibitem{a_structural_approach_to_operational_semantics}
Gordon Plotkin.
\newblock A structural approach to operational semantics.
\newblock {\em The Journal of Logic and Algebraic Programming}, 60-61:17--139,
  07 2004.
\newblock \href {https://doi.org/10.1016/j.jlap.2004.05.001}
  {\path{doi:10.1016/j.jlap.2004.05.001}}.

\bibitem{weak_sequential_composition_in_process_algebra}
Arend Rensink and Heike Wehrheim.
\newblock Weak sequential composition in process algebras.
\newblock In Bengt Jonsson and Joachim Parrow, editors, {\em 5th Conf. on
  Concurrency Theory (CONCUR)}, Lecture Notes in Computer Science, pages
  226--241. Springer, 1994.
\newblock \href {https://doi.org/10.1007/BFb0015012}
  {\path{doi:10.1007/BFb0015012}}.

\bibitem{semantics_of_interactions_in_uml_2_0}
Harald St{\"o}rrle.
\newblock Semantics of interactions in uml 2.0.
\newblock In {\em IEEE Symposium on Human Centric Computing Languages and
  Environments, 2003. Proceedings. 2003}, pages 129--136, 10 2003.
\newblock \href {https://doi.org/10.1109/HCC.2003.1260216}
  {\path{doi:10.1109/HCC.2003.1260216}}.

\end{thebibliography}

\appendix

\section{Proofs of section \ref{sec:semantic_domain}}

\begin{lemma*}
[\textbf{Lem.\ref{lem:strict_and_interleaving_kleene_closure_operational_charac_on_traces}}]
For any $\diamond \in \{\globalStrictSeq,~\globalInterleaving\}$, $T \in \mathcal{P}(\mathbb{T}_\Omega)$, $t$ in $\mathbb{T}_\Omega$ and $a \in \mathbb{A}_\Omega$ we have:
\[
(a.t \in T^{\diamond *}) \Rightarrow 
\left(
\exists~ t' \in \mathbb{T}_\Omega
\text{ s.t. }
\left\{
\begin{array}{l}
(a.t' \in T)\\
\wedge~(t \in (\{t'\} \diamond T^{\diamond *}))
\end{array}
\right.
\right)
\]
\end{lemma*}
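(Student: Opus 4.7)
The plan is to unfold $a.t \in T^{\diamond *}$ to get a minimal $j \in \mathbb{N}$ with $a.t \in T^{\diamond j}$ and then induct on this $j$. Note that $j = 0$ is impossible since $T^{\diamond 0} = \{\epsilon\}$ and $a.t$ is non-empty. The genuine base case is $j = 1$: here $T^{\diamond 1} = T \diamond T^{\diamond 0} = T \diamond \{\epsilon\}$, which equals $T$ for both $\diamond \in \{\globalStrictSeq, \globalInterleaving\}$ (since $\{\epsilon\}$ is a neutral element, as stated at the start of Sec.\ref{sec:semantic_domain}). So $a.t \in T$, and picking $t' := t$ works because $t \in \{t\} \diamond \{\epsilon\} \subseteq \{t\} \diamond T^{\diamond *}$.

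For the inductive step $j \geq 2$, we have $a.t \in T \diamond T^{\diamond (j-1)}$, so $a.t$ is assembled from some $u \in T$ and $v \in T^{\diamond (j-1)}$. For $\diamond = \globalStrictSeq$, the equation $a.t = u.v$ splits cleanly: if $u = \epsilon$ then $a.t = v \in T^{\diamond (j-1)}$ and the induction hypothesis delivers the witness directly; otherwise $u = a.u'$, in which case $t = u'.v \in \{u'\} \globalStrictSeq T^{\diamond (j-1)} \subseteq \{u'\} \globalStrictSeq T^{\diamond *}$, so $t' := u'$ satisfies the conclusion and $a.t' = u \in T$.

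For $\diamond = \globalInterleaving$, the definition of interleaving on non-empty words forces the leading action $a$ to be the head of either $u$ or $v$. If $u = a.u'$ and $t \in u' \globalInterleaving v$, then $t' := u'$ works by the same argument as in the strict case. The delicate subcase is when $v = a.v'$ and $t \in u \globalInterleaving v'$: here $v$ sits in $T^{\diamond (j-1)}$, so the induction hypothesis applied to $v$ produces some $t''$ with $a.t'' \in T$ and $v' \in \{t''\} \globalInterleaving T^{\diamond *}$. The main obstacle is to repackage $t \in u \globalInterleaving v'$ into the target form $\{t''\} \globalInterleaving T^{\diamond *}$, and this is exactly where the commutativity of $\globalInterleaving$ (combined with associativity) is used: we have $t \in u \globalInterleaving (\{t''\} \globalInterleaving T^{\diamond *}) = \{t''\} \globalInterleaving (u \globalInterleaving T^{\diamond *})$, and since $u \in T$ the right factor lies in $T \globalInterleaving T^{\diamond *} = T^{\diamond *}$ (the absorption identity noted just after Def.\ref{def:kleene_closure}). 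Hence $t \in \{t''\} \globalInterleaving T^{\diamond *}$, and taking $t' := t''$ finishes the induction.
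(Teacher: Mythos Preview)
Your proof is correct and follows essentially the same approach as the paper's: induction on the power $j$ witnessing $a.t \in T^{\diamond j}$, with the same case split on whether the leading action $a$ comes from the left factor (in $T$) or the right factor (in $T^{\diamond(j-1)}$), and the same use of commutativity and associativity of $\globalInterleaving$ together with the absorption $T \globalInterleaving T^{\globalInterleaving *} \subseteq T^{\globalInterleaving *}$ to repackage the witness in the delicate interleaving subcase. The only cosmetic difference is that you pick a \emph{minimal} $j$, which is harmless but unnecessary---ordinary (strong) induction on $j$ suffices, as the paper does.
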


\begin{proof}
By definition of the K-closure, $a.t \in T^{\diamond *}$ implies the existence of $j \geq 0$ s.t. $a.t \in T^{\diamond j}$. We can then reason by induction on the power $j$:
\begin{itemize}
    \item we cannot have $j = 0$ because $T^{\diamond 0} = \{\epsilon\}$
    \item if $j = 1$ then $a.t \in T^{\diamond 1} = T$ and $t \in \{t\} \diamond \{\epsilon\} \subset \{t\} \diamond T^{\diamond *}$ therefore the property holds
    \item if $j > 1$ the fact that $a.t \in T^{\diamond j} = T \diamond T^{\diamond (j-1)}$ implies the existence of $t'' \in T$ s.t. $a.t \in \{t''\} \diamond T^{\diamond (j-1)}$ then:
    \begin{itemize}
        \item if $\diamond = \globalStrictSeq$ this implies that:
        \begin{itemize}
            \item either $t''$ is of the form $a.t'$ and $t \in \{ t' \} \globalStrictSeq T^{\globalStrictSeq (j-1)}$ and therefore $t \in \{t'\} \globalStrictSeq T^{\globalStrictSeq *}$
            \item or $t'' = \epsilon$ and $a.t \in T^{\globalStrictSeq (j-1)}$ and we can use the induction hypothesis to conclude
        \end{itemize}
        \item if $\diamond = \globalInterleaving$ this implies that:
        \begin{itemize}
            \item either $t''$ is of the form $a.t'$ and $t \in \{ t' \} \globalInterleaving T^{\globalInterleaving (j-1)}$ and therefore $t \in \{t'\} \globalInterleaving T^{\globalInterleaving *}$
            \item or there exists a certain $t'''$ s.t. we have $a.t''' \in T^{\globalInterleaving (j-1)}$ and $t \in \{t''\} \globalInterleaving \{t'''\}$. Let us then remark that given that we have $a.t''' \in T^{\globalInterleaving (j-1)}$ we can apply the induction hypothesis to reveal $t'$ such that $a.t' \in T$ and $t''' \in \{t'\} \globalInterleaving T^{\globalInterleaving *}$. We can then use the associativity and commutativity of $\globalInterleaving$ as follows:
            \[
                \begin{array}{lclr}
                t \in \{t''\} \globalInterleaving (\{t'\} \globalInterleaving T^{\globalInterleaving *})
                &
                \Rightarrow
                &
                t \in \{t''\} \globalInterleaving (T^{\globalInterleaving *} \globalInterleaving \{t'\})
                &
                \text{ commutativity}
                \\
                &
                \Rightarrow
                &
                t \in (\{t''\} \globalInterleaving T^{\globalInterleaving *}) \globalInterleaving \{t'\}
                &
                \text{ associativity}
                \\
                &
                \Rightarrow
                &
                t \in T^{\globalInterleaving *} \globalInterleaving \{t'\}
                &
                \text{ property of Kleene closure}
                \\
                &
                \Rightarrow
                &
                t \in \{t'\} \globalInterleaving T^{\globalInterleaving *}
                &
                \text{ commutativity}
                \end{array}
            \]
        \end{itemize}
    \end{itemize}
\end{itemize}
\end{proof}

\begin{lemma*}[\textbf{Lem.\ref{lem:equivalence_headfirst_and_kleene_closure_for_strict_and_interleaving}}]
For any set of traces $T$:
\[
\begin{array}{ccc}
T^{\globalStrictSeq^{\Lsh} *} = T^{\globalStrictSeq *}
&
~~\text{ and }~~
&
T^{\globalInterleaving^{\Lsh} *} = T^{\globalInterleaving *}
\end{array}
\]
\end{lemma*}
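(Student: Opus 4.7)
The statement decomposes into two inclusions for each $\diamond \in \{\globalStrictSeq, \globalInterleaving\}$. The inclusion $T^{\diamond^{\Lsh} *} \subseteq T^{\diamond *}$ is the easy direction: from Def.\ref{def:restricted_scheduling_operators}, $T_1 \diamond^{\Lsh} T_2$ is by construction a subset of $T_1 \diamond T_2$ (being a filter on the latter), so a routine induction on the exponent $j$ yields $T^{\diamond^{\Lsh} j} \subseteq T^{\diamond j}$ for every $j \geq 0$, and taking the union over $j$ concludes.

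For the reverse inclusion $T^{\diamond *} \subseteq T^{\diamond^{\Lsh} *}$, my plan is to proceed by strong induction on the length $|t|$ of a trace $t \in T^{\diamond *}$. The base case $t = \epsilon$ is settled by $\epsilon \in T^{\diamond^{\Lsh} 0}$. For the inductive step, write $t = a.t'$ and invoke Lem.\ref{lem:strict_and_interleaving_kleene_closure_operational_charac_on_traces} to obtain a witness $s \in \mathbb{T}_\Omega$ with $a.s \in T$ and $t' \in \{s\} \diamond T^{\diamond *}$. The target then reduces to exhibiting $a.t' \in T \diamond^{\Lsh} T^{\diamond^{\Lsh} *}$, which sits inside $T^{\diamond^{\Lsh} *}$ by the absorption-like identity $T \diamond^{\Lsh} T^{\diamond^{\Lsh} *} \subseteq T^{\diamond^{\Lsh} *}$, itself a direct consequence of $T^{\diamond^{\Lsh} (j+1)} = T \diamond^{\Lsh} T^{\diamond^{\Lsh} j}$ together with the distributivity of $\diamond^{\Lsh}$ over unions.

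For $\diamond = \globalStrictSeq$ the factorisation is rigid: $t' = s.u$ with $u \in T^{\globalStrictSeq *}$, and since $|u| < |t|$ the induction hypothesis yields $u \in T^{\globalStrictSeq^{\Lsh} *}$; the head-first condition is then witnessed by $t_1 = s$, placing $a.t' = a.s.u$ in $T \globalStrictSeq^{\Lsh} T^{\globalStrictSeq^{\Lsh} *}$. For $\diamond = \globalInterleaving$, the trace $t'$ arises as some interleaving of $s$ with a trace $t_B \in T^{\globalInterleaving *}$; one still has $|t_B| < |t|$, so induction gives $t_B \in T^{\globalInterleaving^{\Lsh} *}$, and prepending $a$ to the $s$-component produces a legal interleaving placing $a.t' \in \{a.s\} \globalInterleaving \{t_B\}$, once again witnessed by $t_1 = s$.

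The main obstacle is the interleaving case, where the decomposition of $t'$ is not canonical: one must check both that the $T^{\globalInterleaving *}$-component $t_B$ is strictly shorter than $t$ (which holds because $|t| = 1 + |s| + |t_B|$, so the induction is well founded even when $s = \epsilon$) and that promoting the initial $a$ to the head of the $s$-side produces a bona fide interleaving of $a.s$ with $t_B$. The strict case is essentially bookkeeping by contrast, and I expect a clean write-up to split the two cases rather than attempt a uniformly $\diamond$-polymorphic argument, precisely because the shape of the decomposition differs.
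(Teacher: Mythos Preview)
Your proposal is correct and follows essentially the same route as the paper: the easy inclusion by definition, then strong induction on the length of a member trace for the hard inclusion, invoking Lem.\ref{lem:strict_and_interleaving_kleene_closure_operational_charac_on_traces} to extract a witness $a.s \in T$ and a residual in $T^{\diamond *}$ to which the induction hypothesis applies. The one point worth noting is that your expectation that a clean argument must split the two operators is not borne out: the paper treats $\globalStrictSeq$ and $\globalInterleaving$ uniformly by simply picking some $t'' \in T^{\diamond *}$ with $t \in \{s\} \diamond \{t''\}$, observing $|t''| < |a.t|$, and concluding $a.t \in \{a.s\} \diamond^{\Lsh} \{t''\} \subseteq T \diamond^{\Lsh} T^{\diamond^{\Lsh} *}$ without ever unpacking the concrete shape of the decomposition.
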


\begin{proof}
Let us consider $\diamond \in \{\globalStrictSeq,~\globalInterleaving\}$.
By definition, we already have $T^{\diamond^\Lsh *} \subset T^{\diamond *}$. To prove the other inclusion let us reason by induction on a member trace:
\begin{itemize}
    \item by definition $\epsilon \in T^{\diamond^\Lsh *}$ and $\epsilon \in T^{\diamond *}$
    \item if the trace is of the form $a.t$ then if $a.t \in T^{\diamond *}$, then, as per Lem.\ref{lem:strict_and_interleaving_kleene_closure_operational_charac_on_traces} there exists a trace $t'$ such that $a.t' \in T$ and $t \in \{t'\} \diamond T^{\diamond *}$. This in turn implies that:
    \begin{itemize}
        \item given that action $a$ is taken from $a.t'$, we have $a.t \in \{a.t'\} \diamond^\Lsh T^{\diamond *}$
        \item and there exists $t'' \in T^{\diamond *}$ such that $t \in \{t'\} \diamond \{t''\}$. Given that $t''$ is strictly smaller than $a.t$, we can apply the induction hypothesis to obtain that $t'' \in T^{\diamond^\Lsh *}$.
    \end{itemize}
    Therefore we have that $a.t \in \{a.t'\} \diamond^\Lsh T^{\diamond^\Lsh *}$, and hence $a.t \in T^{\diamond^\Lsh *}$
\end{itemize}
\end{proof}

\section{Proofs of section \ref{sec:operational_semantics}}

\begin{lemma*}[\textbf{Lem.\ref{lem:sem_de_terminates}}]
For any $i \in \mathbb{I}_\Omega$ we have $(i \downarrow ) \Leftrightarrow (\epsilon \in \sigma_d(i))$
\end{lemma*}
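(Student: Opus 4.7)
The plan is a routine structural induction on $i \in \mathbb{I}_\Omega$, exploiting the fact that the rules defining $\downarrow$ in Def.\ref{def:termination} mirror, at each syntactic layer, the semantic condition for containing $\epsilon$ in the corresponding set of traces.

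For the base cases: if $i = \varnothing$, then $\varnothing \downarrow$ holds axiomatically and $\sigma_d(\varnothing) = \{\epsilon\}$, so both sides hold. If $i = a \in \mathbb{A}_\Omega$, then no rule allows $a \downarrow$ and $\sigma_d(a) = \{a\}$ does not contain $\epsilon$, so both sides fail. For the inductive step on $alt(i_1,i_2)$, one simply observes that $\epsilon \in \sigma_d(i_1) \cup \sigma_d(i_2)$ iff $\epsilon \in \sigma_d(i_j)$ for some $j$, which by induction is equivalent to $i_j \downarrow$ for some $j$, matching the two $alt$ rules. For $loop_k(i_1)$, termination is axiomatic, and on the semantic side $\sigma_d(loop_k(i_1))$ is a closure that always contains $T^{\diamond 0} = \{\epsilon\}$ by Def.\ref{def:kleene_closure}, so $\epsilon$ is always present.

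The only step that requires a small auxiliary observation is the case $f(i_1,i_2)$ for $f \in \{strict, seq, par\}$, where I must show that $\epsilon \in \sigma_d(i_1) \diamond \sigma_d(i_2)$ iff $\epsilon \in \sigma_d(i_1)$ and $\epsilon \in \sigma_d(i_2)$, for the corresponding $\diamond \in \{\globalStrictSeq, \globalWeakSeq, \globalInterleaving\}$. The $\Leftarrow$ direction is immediate from the defining equations: $\epsilon \diamond \epsilon = \{\epsilon\}$ in all three cases (strict by concatenation, weak and interleaving directly from the first clauses of their inductive definitions). For $\Rightarrow$, I inspect the definitions: whenever either argument is a non-empty trace $a.t'$, the resulting traces in $t_1 \diamond t_2$ all begin with an action, so $\epsilon \in t_1 \diamond t_2$ forces both $t_1 = \epsilon$ and $t_2 = \epsilon$. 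Applying the induction hypothesis then yields $i_1 \downarrow$ and $i_2 \downarrow$, which exactly matches the premise of the $f$-rule.

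The main obstacle, if any, is the bookkeeping for $\globalWeakSeq$ in the $seq$ case, since its defining equations are slightly more involved than $\globalStrictSeq$ and $\globalInterleaving$; but the argument still reduces to observing that a head action in either operand propagates into the result, so $\epsilon$ in the output forces $\epsilon$ in both inputs. No other case presents any difficulty, and the equivalence follows in both directions at the outermost induction.
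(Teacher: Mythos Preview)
Your proposal is correct and follows essentially the same approach as the paper: a structural induction on $i$ with the same case analysis, including the key observation that $\epsilon \in t_1 \diamond t_2$ forces $t_1 = t_2 = \epsilon$ for each scheduling operator. The paper's proof is slightly terser (it treats $strict$ in detail and dismisses $seq$ and $par$ as analogous), but the underlying argument is identical.
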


\begin{proof}
Let us reason by induction on the term structure of $i$.
\begin{itemize}
    \item If $i = \varnothing$ the empty interaction, then we have both $\varnothing\downarrow$ and $\epsilon \in \sigma_d(\varnothing)$.
    \item If $ i \in \mathbb{A}_\Omega$, we have neither $i \downarrow$ nor $\epsilon \in \sigma_d(i)$.
    \item Let us now suppose that $i$ is of the form $strict(i_1,i_2)$, with $i_1$ and $i_2$ two sub-interactions that satisfy the induction hypotheses $(i_1 \downarrow ) \Leftrightarrow (\epsilon \in \sigma_d(i_1))$ and $(i_2 \downarrow ) \Leftrightarrow (\epsilon \in \sigma_d(i_2))$.
    \begin{itemize}
        \item[$\Leftarrow$] Let us suppose that $\epsilon \in \sigma_d(i)$. By definition of $\sigma_d$ for the $strict$ constructor, this implies the existence of $t_1 \in \sigma_d(i_1)$ and $t_2 \in \sigma_d(i_2)$ such that $\epsilon \in (t_1 ; t_2)$.
        This trivially implies that $t_1 = \epsilon$ and $t_2 = \epsilon$. We can therefore apply the induction hypotheses, to obtain that $i_1 \downarrow$ and $i_2 \downarrow$. This in turn means that $strict(i_1,i_2)\downarrow$.
        \item[$\Rightarrow$] Reciprocally, if $strict(i_1,i_2) \downarrow$, this means that both $i_1\downarrow$ and $i_2\downarrow$. As per the induction hypotheses, this means that $\epsilon \in \sigma_d(i_1)$ and $\epsilon \in \sigma_d(i_2)$. Therefore $\epsilon \in \sigma_d(i)$.
    \end{itemize}
    \item For interactions of the form $par(i_1,i_2)$ and $seq(i_1,i_2)$, the reasoning is the same as for the previous case.
    \item If $i$ is of the form $alt(i_1,i_2)$:
    \begin{itemize}
        \item[$\Leftarrow$] If $\epsilon \in \sigma_d(i)$ this means that either $\epsilon \in \sigma_d(i_1)$ or $\epsilon \in \sigma_d(i_2)$ or both.
        Let us suppose that it is in $\sigma_d(i_1)$ (the other cases can be treated similarly).
        As per the induction hypothesis, we have $i_1\downarrow$ which implies that $alt(i_1,i_2) \downarrow$.
        \item[$\Rightarrow$] Reciprocally, if $alt(i_1,i_2) \downarrow$, then either $i_1\downarrow$ or $i_2\downarrow$ (or both). Let us suppose we have $i_1\downarrow$. The induction hypothesis implies that $\epsilon \in \sigma_d(i_1)$ and hence $\epsilon \in \sigma_d(i)$.
    \end{itemize}
    \item If $i = loop_k(i_1)$, with $k \in \{X,H,S,P\}$ we always have $i\downarrow$ and $\epsilon \in \sigma_d(i)$.
\end{itemize}
\end{proof}

\begin{lemma*}[\textbf{Lem.\ref{lem:sem_de_evades_no_conflict}}]
For any $l \in L$ and $i \in \mathbb{I}_\Omega$, $(i \evadesLf l)
\Leftrightarrow 
(\exists~t \in \sigma_d(i), \neg(t \doubleVerticalTimes l))$
\end{lemma*}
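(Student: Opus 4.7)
The plan is to proceed by structural induction on $i$, treating base cases directly and reducing each inductive case to a distributivity property of the conflict predicate $\doubleVerticalTimes l$ over the relevant trace composition operator. Before starting the induction, I would record the following elementary fact, provable by a straightforward induction on $t_1$ using the inductive clauses in the definitions of $.$, $\globalWeakSeq$ and $\globalInterleaving$: for any $\diamond \in \{\globalStrictSeq,~\globalWeakSeq,~\globalInterleaving\}$ and any $t \in t_1 \diamond t_2$,
\[
(t \doubleVerticalTimes l) \Leftrightarrow (t_1 \doubleVerticalTimes l) \vee (t_2 \doubleVerticalTimes l).
\]
This is the workhorse for all scheduling constructors.

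For the base cases: if $i = \varnothing$ then $\varnothing \evadesLf l$ always holds and $\epsilon \in \sigma_d(\varnothing)$ with $\neg(\epsilon \doubleVerticalTimes l)$; if $i = a \in \mathbb{A}_\Omega$ then $\sigma_d(a) = \{a\}$ and the only possible witness of evasion is $a$ itself, so $(\exists t \in \sigma_d(a),\neg(t \doubleVerticalTimes l))$ is equivalent to $\theta(a) \neq l$, which is exactly the premise of the rule for $a \evadesLf l$. For $i = loop_k(i_1)$ with $k \in \{X,H,S,P\}$, $loop_k(i_1) \evadesLf l$ always holds by the rule, and on the semantic side $\epsilon$ always belongs to $\sigma_d(loop_k(i_1))$ because each closure is defined with $T^{\diamond 0} = \{\epsilon\}$ as a neutral base case, so $\epsilon$ witnesses the right-hand side.

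For the inductive cases I would argue uniformly. For $i = alt(i_1,i_2)$, both rules for evasion match the union definition of $alt^{\mathcal A}$: a conflict-free trace exists in $\sigma_d(i_1) \cup \sigma_d(i_2)$ iff it exists in one of the two operands, which is exactly the disjunctive shape of the two $alt$ rules. For $i = f(i_1,i_2)$ with $f \in \{strict,seq,par\}$, interpreted as $\diamond \in \{\globalStrictSeq,\globalWeakSeq,\globalInterleaving\}$, the evasion rule is conjunctive. The key observation by the distributivity fact above is that a trace $t \in t_1 \diamond t_2$ avoids $l$ iff both $t_1$ and $t_2$ avoid $l$. The $(\Leftarrow)$ direction then uses the induction hypotheses to extract $t_1 \in \sigma_d(i_1)$ and $t_2 \in \sigma_d(i_2)$ both avoiding $l$, and we pick any $t \in t_1 \diamond t_2$, which is non-empty for all three operators (each always contains at least one element), noting in particular that weak sequencing poses no extra difficulty here because the conflict constraint is satisfied vacuously when both operands avoid $l$. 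Conversely, $(\Rightarrow)$ takes a conflict-free $t \in t_1 \diamond t_2$ with $t_1 \in \sigma_d(i_1)$, $t_2 \in \sigma_d(i_2)$, and applies the distributivity fact to get conflict-free $t_1$ and $t_2$, then the induction hypotheses.

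The main obstacle I expect is the distributivity lemma for $\globalWeakSeq$, since its definition involves the $\doubleVerticalTimes$ predicate itself as a side condition, which complicates the inductive unfolding. It is nonetheless a direct induction on the first argument, with a routine case split on whether the head action is taken from the left or right operand; once this lemma is in hand, the $seq$ case of the main induction becomes as mechanical as the $strict$ and $par$ cases.
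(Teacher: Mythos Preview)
Your proposal is correct and follows essentially the same approach as the paper's proof: structural induction on $i$, with the scheduling cases reduced to the distributivity of the conflict predicate over $\globalStrictSeq$, $\globalWeakSeq$ and $\globalInterleaving$, which the paper invokes without stating explicitly while you spell it out. One cosmetic slip: your labels $(\Leftarrow)$ and $(\Rightarrow)$ in the scheduling case are swapped relative to the statement as written (you construct a witness trace under the label $(\Leftarrow)$, which is actually the $(\Rightarrow)$ direction), but the argument itself is sound.
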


\begin{proof}
Given $l \in L$, let us
reason by induction on the term structure of $i$.
\begin{itemize}
    \item If $i = \varnothing$, then we have $\varnothing \evadesLf l$ and $\epsilon \in \sigma_d(\varnothing)$ satisfies $\neg (\epsilon \doubleVerticalTimes l)$.
    \item If $i = a \in \mathbb{A}_\Omega$, we have:
    \begin{itemize}
        \item $a \evadesLf l$ iff $\theta(a) \neq l$
        \item and $\sigma_d(i) = \{a\}$ with trace $a$ verifying $\neg(a \doubleVerticalTimes l)$ iff $\theta(a) \neq l$
    \end{itemize}
    The two conditions are therefore equivalent.
    \item Let us now suppose that $i$ is of the form $strict(i_1,i_2)$, with $i_1$ and $i_2$ two sub-interactions that satisfy the induction hypotheses (we use the distributivity of $\doubleVerticalTimes$ over $\globalStrictSeq$):
    \begin{itemize}
        \item[$\Rightarrow$] If $i \evadesLf l$ then both $i_1 \evadesLf l$ and $i_2 \evadesLf l$. We can therefore apply the induction hypotheses, which lets us consider two traces $t_1 \in \sigma_d(i_1)$ and $t_2 \in \sigma_d(i_2)$ such that $\neg (t_1 \doubleVerticalTimes l)$ and $\neg (t_2 \doubleVerticalTimes l)$. By definition of $\sigma_d$, we have $(t_1 \globalStrictSeq t2) \subset \sigma_d(i)$. This implies that $\exists~ t \in (t_1 \globalStrictSeq t2)$ s.t.  $\neg (t \doubleVerticalTimes l)$ and this trace is in $\sigma_d(i)$.
        \item[$\Leftarrow$] Reciprocally, if $\exists~ t \in \sigma_d(strict(i_1,i_2))$ s.t. $\neg (t \doubleVerticalTimes l)$, then, by definition of $\sigma_d$, this means that there exist two traces $t_1 \in \sigma_d(i_1)$ and $t_2 \in \sigma_d(i_2)$ s.t. $t \in (t_1 \globalStrictSeq t_2)$. Then, we have $\neg (t_1 \doubleVerticalTimes l)$ and $\neg (t_2 \doubleVerticalTimes l)$. We can therefore apply the induction hypothesis which implies that $i_1 \evadesLf l$ and $i_2 \evadesLf l$, which, per the definition of $\evadesLf$, implies that $i \evadesLf l$.
    \end{itemize}
    \item For interactions of the form $par(i_1,i_2)$ and $seq(i_1,i_2)$, the reasoning is the same as for the previous case except that we reason on (respectively) the operators $\globalInterleaving$ and $\globalWeakSeq$ over both of which the conflict $\doubleVerticalTimes$ is also distributive.
    \item If $i = alt(i_1,i_2)$:
    \begin{itemize}
        \item[$\Rightarrow$] If $i \evadesLf l$ then either or both $i_1 \evadesLf l$ and $i_2 \evadesLf l$. Let us suppose we have $i_1 \evadesLf l$ (the other cases are similar). By the induction hypothesis, we have $\exists~ t_1 \in \sigma_d(i_1)$ s.t. $\neg(t_1 \doubleVerticalTimes l)$. We then simply observe that $\sigma_d(i_1) \subset \sigma_d(i)$.
        \item[$\Leftarrow$] Reciprocally, if $\exists~ t \in \sigma_d(alt(i_1,i_2))$ s.t. $\neg (t \doubleVerticalTimes l)$, this means that this trace $t$ must be found in either $\sigma_d(i_1)$ or $\sigma_d(i_2)$. Let us suppose the first case (the other is similar). We can therefore apply the induction hypothesis which implies that $i_1 \evadesLf l$. Then, by the definition of $\evadesLf$, this implies that $i \evadesLf l$.
    \end{itemize}
    \item Let us finally consider the case where $i$ is of the form $loop_k(i_1)$, with $k \in \{X,H,S,P\}$. By definition, we always have $i \evadesLf l$ and the empty trace $\epsilon \in \sigma_d(i)$ verifies $\neg(\epsilon \doubleVerticalTimes l)$.
\end{itemize}
\end{proof}

\begin{lemma*}[\textbf{Lem.\ref{lem:sem_de_pruned}}]
For any $l \in L$ and any $i$ and $i'$ from $\mathbb{I}_\Omega$:
\[
(i \isPruneOf{l} i')
\Rightarrow
(\sigma_d(i') = \{ t \in \sigma_d(i)~|~ \neg (t \doubleVerticalTimes l) \})
\]
\end{lemma*}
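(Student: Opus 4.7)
The plan is to proceed by induction on the term structure of $i$ (equivalently, on the derivation of $i \isPruneOf{l} i'$, which follows the structure of $i$). For each constructor we exploit the fact that the conflict predicate $\doubleVerticalTimes l$ distributes over the trace operators: for $\diamond \in \{\globalStrictSeq,~\globalWeakSeq,~\globalInterleaving\}$, whenever $t \in t_1 \diamond t_2$ one has $t \doubleVerticalTimes l$ iff $t_1 \doubleVerticalTimes l$ or $t_2 \doubleVerticalTimes l$. This will be used to transport the induction hypothesis through each scheduling constructor.

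First I would dispatch the two base cases. For $\varnothing \isPruneOf{l} \varnothing$, both sides are $\{\epsilon\}$ since $\neg(\epsilon \doubleVerticalTimes l)$. For $a \isPruneOf{l} a$, the rule requires $\theta(a) \neq l$, so $\sigma_d(a) = \{a\}$ with $\neg(a \doubleVerticalTimes l)$ and the equality holds trivially. Next, for the scheduling constructors $f \in \{strict, seq, par\}$, the only applicable rule gives $f(i_1,i_2) \isPruneOf{l} f(i_1',i_2')$ with $i_k \isPruneOf{l} i_k'$. Writing $\sigma_d(f(i_1',i_2')) = \sigma_d(i_1') \diamond \sigma_d(i_2')$ for the appropriate $\diamond$, applying the two induction hypotheses and the distributivity of $\doubleVerticalTimes l$ over $\diamond$ yields the required equality. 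For $alt$, there are three cases according to which sub-interactions evade $l$; in each one the result follows from the induction hypothesis (applied to branches that evade $l$) together with Lem.\ref{lem:sem_de_evades_no_conflict} (to discard those that collide with $l$).

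The loop cases require a little more care and will be the main obstacle. For $loop_k(i_1) \isPruneOf{l} \varnothing$ with $i_1 \collidesLf l$, one must show that the only trace in $\sigma_d(loop_k(i_1))$ free of conflicts w.r.t. $l$ is $\epsilon$. This follows because, by Lem.\ref{lem:sem_de_evades_no_conflict}, every non-empty iteration, i.e.\ every trace $t \in \sigma_d(i_1)$, satisfies $t \doubleVerticalTimes l$; and distributivity of $\doubleVerticalTimes l$ over $\diamond_k$ then forces any non-empty element of the closure $\sigma_d(i_1)^{\diamond_k *}$ (or $\sigma_d(i_1)^{\diamond_k^{\Lsh} *}$ for $k = H$) to conflict with $l$. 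For $loop_k(i_1) \isPruneOf{l} loop_k(i_1')$ with $i_1 \evadesLf l$ and $i_1 \isPruneOf{l} i_1'$, I would argue both inclusions. The $\subseteq$ direction is obtained by induction on the number of iterations $j$ defining a trace of $\sigma_d(loop_k(i_1')) = \sigma_d(i_1')^{\diamond_k *}$ (resp.\ HF-closure for $k=H$): at each iteration the induction hypothesis gives $\sigma_d(i_1') \subseteq \sigma_d(i_1)$ with conflict-freeness, and conflict-freeness is preserved by $\diamond_k$. The $\supseteq$ direction is the analogous inductive decomposition: any conflict-free $t \in \sigma_d(i_1)^{\diamond_k *}$ decomposes as $t_1 \diamond_k \cdots \diamond_k t_j$, each $t_r$ inherits $\neg(t_r \doubleVerticalTimes l)$ by distributivity, and therefore each $t_r$ belongs to $\sigma_d(i_1')$ by the induction hypothesis, which gives $t \in \sigma_d(i_1')^{\diamond_k *}$.

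The most delicate point is ensuring that the argument survives for $loop_H$, whose semantics is the HF-closure $~^{\globalWeakSeq^{\Lsh} *}$ rather than the plain K-closure. In the $\supseteq$ direction of the loop case one must verify that the head-first discipline is not broken when replacing each iteration $t_r \in \sigma_d(i_1)$ by its conflict-free witness in $\sigma_d(i_1')$: since pruning only removes traces without affecting the identity of individual traces that remain, the initial action of every $t_r$ is preserved, hence the restricted operator $\globalWeakSeq^{\Lsh}$ is satisfied whenever the unrestricted one was. Once this subtlety is handled, the equality $\sigma_d(loop_H(i_1')) = \{ t \in \sigma_d(loop_H(i_1))~|~\neg(t \doubleVerticalTimes l) \}$ follows and closes the induction.
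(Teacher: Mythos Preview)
Your proposal is correct and follows essentially the same approach as the paper: induction on the term structure of $i$, using distributivity of the conflict predicate $\doubleVerticalTimes l$ over the scheduling operators to transport the hypothesis through each constructor, and splitting the loop case according to whether $i_1$ evades or collides with $l$. Your treatment is in fact more thorough than the paper's own proof, which omits the $alt$ case entirely and handles the $loop_H$ subtlety you flag only implicitly (by lumping $\globalWeakSeq^{\Lsh}$ in with the other operators in a single chain of equalities); your explicit observation that pruning preserves the identity of surviving traces, and hence the head-first constraint, is exactly what justifies that step.
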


\begin{proof}
Given $l \in L$, let us reason by induction on the term structure of $i$.
\begin{itemize}
    \item If $i=\varnothing$ then $\varnothing \isPruneOf{l} \varnothing$ and $\sigma_d(\varnothing) = \{\epsilon\}$. $\epsilon$ having no conflict w.r.t. $l$, the property holds.
    \item If $i = a \in \mathbb{A}_\Omega$, the precondition $a \evadesLf l$ implies that $\theta(i) \neq l$. Also, we have $a \isPruneOf{l} a$ and $\sigma_d(a) = \{a\}$ has a single trace with no conflict w.r.t. $l$. Hence the property holds.
    \item Let us now suppose that $i$ is of the form $strict(i_1,i_2)$, with $i_1$ and $i_2$ two sub-interactions that satisfy induction hypotheses i.e. such that, given $i_1 \isPruneOf{l} i_1'$ and $i_2 \isPruneOf{l} i_2'$
    we have $\sigma_d(i_1') = \{ t_1 \in \sigma_d(i_1)~|~ \neg (t_1 \doubleVerticalTimes l) \}$ and $\sigma_d(i_2') = \{ t_2 \in \sigma_d(i_2)~|~ \neg (t_2 \doubleVerticalTimes l) \}$.
    Also, by definition of pruning, we have $i \isPruneOf{l} strict(i_1',i_2')$ and let us denote it by $i' = strict(i_1',i_2')$ for short.
    \begin{itemize}
        \item[$\subset$] If $t \in \sigma_d(i')$ then there exist $t_1 \in \sigma_d(i_1')$ and $t_2 \in \sigma_d(i_2')$ such that $t \in (t_1 ; t_2)$. By the induction hypothesis, we have $t_1 \in \sigma_d(i_1)$ and $\neg (t_1 \doubleVerticalTimes) l$ and $t_2 \in \sigma_d(i_2)$ and $\neg (t_2 \doubleVerticalTimes) l$. Therefore $t \in \sigma_d(i_1) ; \sigma_d(i_2) = \sigma_d(i)$, and $\neg (t \doubleVerticalTimes l)$.
        \item[$\supset$] If $t \in \sigma_d(i)$ is s.t. $\neg (t \doubleVerticalTimes l)$ then this implies the existence of $t_1 \in \sigma_d(i_1)$ and $t_2 \in \sigma_d(i_2)$ s.t. $t \in (t_1 \globalStrictSeq t_2)$. The fact that $\neg (t \doubleVerticalTimes l)$ implies that both $\neg (t_1 \doubleVerticalTimes l)$ and $\neg (t_2 \doubleVerticalTimes l)$. According to the induction hypothesis, this means that $t_1 \in \sigma_d(i_1')$ and $t_2 \in \sigma_d(i_2')$. Therefore $(t_1 \globalStrictSeq t_2) \subset \sigma_d(i')$ and hence $t \in \sigma_d(i')$.
    \end{itemize}
    \item For interactions of the form $par(i_1,i_2)$ and $seq(i_1,i_2)$, the reasoning is the same as for the previous case except that we reason on (respectively) the operators $\globalInterleaving$ and $\globalWeakSeq$ over both of which the conflict $\doubleVerticalTimes$ is also distributive.
    \item Let us now suppose that $i$ is of the form $loop_k(i_1)$ with $k \in \{X,H,S,P\}$ and let us note $\diamond$ the corresponding operator on sets of traces, i.e. $\diamond = \globalStrictSeq$ if $k=X$, $\diamond = \globalWeakSeq^\Lsh$ if $k=H$, $\diamond = \globalWeakSeq$ if $k=S$ and $\diamond = \globalInterleaving$ if $k=P$. We then have:
    \begin{itemize}
        \item if $i_1 \collidesLf l$, then $loop_k(i_1) \isPruneOf{l} \varnothing$. As per the reciprocate of Lem.\ref{lem:sem_de_evades_no_conflict}, if $i_1$ does not avoid $l$, this means that all the traces from $\sigma_d(i_1)$ have conflicts with $l$. This means that all the traces obtained from merging traces from $i_1$ have conflicts with $l$. Therefore the empty trace $\epsilon$ is the only trace from $\sigma_d(loop_k(i_1))$ which has no conflict with $l$. Given that $\sigma_d(\varnothing) = \{\epsilon\}$, the property holds.
        \item if $i_1 \evadesLf l$, then there exists a unique $i_1'$ such that $i_1 \isPruneOf{l} i_1'$ and we suppose the induction hypothesis $\sigma_d(i_1') = \{ t_1 \in \sigma_d(i_1)~|~ \neg (t_1 \doubleVerticalTimes l) \}$. Then we have:
\[
\begin{array}{lcl}
\sigma_d(loop_k(i_1'))
& =
& \sigma_d(i_1')^{\diamond *}\\
& =
& \{ t \in \sigma_d(i_1) ~|~ \neg (t \doubleVerticalTimes l) \}^{\diamond *}\\
& =
& \{ t \in \sigma_d(i_1)^{\diamond *} ~|~ \neg (t \doubleVerticalTimes l) \}\\
& =
& \{ t \in \sigma_d(loop_k(i_1)) ) ~|~ \neg (t \doubleVerticalTimes l) \}
\end{array}
\]
Indeed, the conflict $\doubleVerticalTimes$ distributes over $\diamond$ and hence any trace obtained from merging traces from $i_1$ has no conflict w.r.t. $l$ iff it is obtained from merging traces from $i_1$ that all have no conflict with $l$. Therefore, traces that have no conflict w.r.t. $l$ are exactly those that are obtained from merging traces with no conflicts w.r.t. $l$. Those traces are those from $loop_k(i_1')$ as per the induction hypothesis. Therefore the property holds.
    \end{itemize}
\end{itemize}
\end{proof}

\section{Proofs of section \ref{sec:proof_equivalence}}

\begin{lemma*}[\textbf{Lem.\ref{lem:sem_de_execute1}}]
For any $a \in \mathbb{A}_\Omega$, $t \in \mathbb{T}_\Omega$ and $i$ and $i'$ from $\mathbb{I}_\Omega$:
\[
\left(
\begin{array}{c}
(i \xrightarrow{a} i')
\wedge 
(t \in \sigma_d(i'))
\end{array}
\right)
\Rightarrow 
(a.t \in \sigma_d(i))
\]
\end{lemma*}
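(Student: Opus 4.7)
The plan is to proceed by induction on the derivation tree of $i \xrightarrow{a} i'$, splitting on the thirteen rules of Def.\ref{def:execution_relation}. For every rule, I would unfold $\sigma_d(i')$ using the homomorphism equations of Def.\ref{def:denotational_semantics}, extract a decomposition of $t$ into ingredients living in the denotations of the sub-interactions, then rebuild $a.t$ inside $\sigma_d(i)$. The base rule $a \xrightarrow{a} \varnothing$ forces $t=\epsilon$, giving $a.t = a \in \{a\} = \sigma_d(a)$ immediately.

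For the easy structural rules ($alt$ left/right, $par$ left/right, $strict$ left), the IH yields $a.t_1' \in \sigma_d(i_1)$ (or symmetrically) after splitting $t$ along the corresponding composition operator, and reassembling gives the result directly. The $strict$ right rule additionally invokes Lem.\ref{lem:sem_de_terminates} to place $\epsilon \in \sigma_d(i_1)$, so that $a.t \in \{\epsilon\} \globalStrictSeq \sigma_d(i_2) \subseteq \sigma_d(strict(i_1,i_2))$.

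The weak-sequencing cases are where the definitions of $\globalWeakSeq$ and of pruning come into play. In the $seq$ left rule, after picking $t_1' \in \sigma_d(i_1')$ and $t_2 \in \sigma_d(i_2)$ with $t \in \{t_1'\} \globalWeakSeq \{t_2\}$ and applying the IH to obtain $a.t_1' \in \sigma_d(i_1)$, I would read off $a.t \in \{a.t_1'\} \globalWeakSeq \{t_2\}$ from the first clause of the weak-sequencing definition (head action taken from the left operand, no side condition required). In the $seq$ right rule, the key ingredient is Lem.\ref{lem:sem_de_pruned}: it supplies both $\sigma_d(i_1') \subseteq \sigma_d(i_1)$ and, for every $t_1' \in \sigma_d(i_1')$, the property $\neg(t_1' \doubleVerticalTimes \theta(a))$, which is precisely the side condition of the second clause of $\globalWeakSeq$ that allows the head action $a$ to be drawn from the right operand $a.t_2' \in \sigma_d(i_2)$ (obtained from the IH).

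For the loop cases, I would rewrite $\sigma_d(loop_k(i_1))$ using its closure operator and exploit the absorption $T \diamond T^{\diamond *} = T^{\diamond *}$ recorded after Def.\ref{def:kleene_closure}. For $loop_X$ and $loop_P$ this is straightforward via the IH; for $loop_H$ one must additionally observe that $a$ really comes from the leftmost factor so that the construction lands in the HF-closure, not just the K-closure. The main obstacle is $loop_S$: the conclusion has the three-fold shape $seq(i', seq(i_1', loop_S(i_1)))$ with $loop_S(i_1) \isPruneOf{\theta(a)} i'$, so I must decompose $t$ into $t' \in \sigma_d(i')$, $t_1' \in \sigma_d(i_1')$, and $t_L \in \sigma_d(loop_S(i_1))$ weakly composed in that order, promote $t'$ back into $\sigma_d(loop_S(i_1))$ via Lem.\ref{lem:sem_de_pruned}, and use the accompanying evasion $\neg(t' \doubleVerticalTimes \theta(a))$ to push $a$ to the front of $t'$ through the second clause of $\globalWeakSeq$. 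Concluding via $\sigma_d(loop_S(i_1)) \globalWeakSeq \sigma_d(i_1) \globalWeakSeq \sigma_d(loop_S(i_1)) \subseteq \sigma_d(i_1)^{\globalWeakSeq *} = \sigma_d(loop_S(i_1))$ finishes this last and most delicate case.
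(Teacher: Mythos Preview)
Your proposal is correct and follows essentially the same route as the paper's proof: induction on the thirteen rules of Def.\ref{def:execution_relation}, invoking Lem.\ref{lem:sem_de_terminates} for the $strict$-right case and Lem.\ref{lem:sem_de_pruned} for the $seq$-right and $loop_S$ cases, with the closure absorption $T \diamond T^{\diamond *} \subseteq T^{\diamond *}$ handling the loops. Your explicit remark that for $loop_H$ the head action must be seen to come from the leftmost factor (so the result lands in the HF-closure rather than merely the K-closure) is a point the paper leaves implicit when it writes ``can be treated like \textbf{10.}'', but it is exactly the right observation.
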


\begin{proof}
Given $i$ and $i'$ in $\mathbb{I}_\Omega$, $a$ in $\mathbb{A}_\Omega$ and $t \in \mathbb{T}_\Omega$, let us suppose that $i \xrightarrow{a} i'$ and that $t \in \sigma_d(i')$. In order to prove $a.t \in \sigma_d(i)$ let us reason by induction on the cases that makes the hypothesis $i \xrightarrow{a} i'$ possible. There are 13 such cases, as per the 13 rules from Def.\ref{def:execution_relation}:
\begin{enumerate}
    \item when executing an atomic action, we have $i \in \mathbb{A}_\Omega$ and $i' = \varnothing$. Then $\sigma_d(i) = \{i\}$ and $\sigma_d(\varnothing) = \{\epsilon\}$. The property $i.\epsilon = i \in \sigma_d(i)$ holds.
    \item when executing an action on the left of an alternative, we have $i$ of the form $alt(i_1,i_2)$, and $i' = i_1'$ such that $i_1 \xrightarrow{a} i_1'$. As a result, $i_1 \xrightarrow{a} i_1'$ and $t \in \sigma_d(i_1')$. We can therefore apply the induction hypothesis on sub-interaction $i_1$ to obtain that $a.t \in \sigma_d(i_1)$. Given that $\sigma_d(i_1) \subset \sigma_d(i)$, the property holds.
    \item the case for executing an action on the right of an alternative can be treated similarly
    \item when executing an action on the left of a $par$, we have $i$ of the form $par(i_1,i_2)$, and $i' = par(i_1',i_2)$ such that $i_1 \xrightarrow{a} i_1'$ and $t \in \sigma_d(par(i_1',i_2))$. By definition of $\sigma_d$, we have that there exist $(t_1',t_2) \in \sigma_d(i_1') \times \sigma_d(i_2)$ s.t. $t \in (t_1' \globalInterleaving t_2)$. Therefore we have $i_1 \xrightarrow{a} i_1'$ and $t_1' \in \sigma_d(i_1')$. Hence we can apply the induction hypothesis on sub-interaction $i_1$ to obtain that $a.t_1' \in \sigma_d(i_1)$. Given that $\sigma_d(par(i_1,i_2))$ is the union of all the $(t_\alpha \globalInterleaving t_\beta)$ with $t_\alpha$ and $t_\beta$ traces from $i_1$ and $i_2$, we have that $(a.t_1' \globalInterleaving t_2) \subset \sigma_d(i)$. In particular, we know that $t \in (t_1' \globalInterleaving t_2)$, so, by definition of the $\globalInterleaving$ operator, we have that $a.t \in (a.t_1' \globalInterleaving t_2)$. Therefore the property holds.
    \item the case for executing an action on the right of a $par$ can be treated similarly
    \item executing an action on the left of a $strict$ can be treated like \textbf{4.}
    \item when executing an action on the right of a $strict$, we have $i$ of the form $strict(i_1,i_2)$, and $i' = i_2'$ such that $i_2 \xrightarrow{a} i_2'$ with the added hypothesis that $i_1\downarrow$. Given that $t \in \sigma_d(i_2')$ and $i_2 \xrightarrow{a} i_2'$, we can apply the induction hypothesis on sub-interaction $i_2$ to obtain that $a.t \in \sigma_d(i_2)$. Given that $\sigma_d(strict(i_1,i_2))$ includes $\sigma_d(i_2)$ when $i_1\downarrow$, and given that we know $i_1\downarrow$ to be true, the property holds.
    \item executing an action on the left of a $seq$ can be treated like \textbf{4.}
    \item when executing an action on the right of a $seq$, we have $i$ of the form $seq(i_1,i_2)$, and $i' = seq(i_1',i_2')$ such that $i_1 \isPruneOf{\theta(a)} i_1'$ and $i_2 \xrightarrow{a} i_2'$ with the added hypothesis that $i_1 \evadesLf \theta(a)$ implied by the fact that $i_1$ prunes into $i_1'$. Given that $t \in \sigma_d(i')$, there exists $t_1 \in \sigma_d(i_1')$ and $t_2 \in \sigma_d(i_2')$ s.t. $t \in (t_1 \globalWeakSeq t_2)$. We then remark that:
    \begin{itemize}
        \item $i_2 \xrightarrow{a} i_2'$ and $t_2 \in \sigma_d(i_2')$. Hence we can apply the induction hypothesis on sub-interaction $i_2$ to obtain that $a.t_2 \in \sigma_d(i_2)$.
        \item the fact that $t_1 \in \sigma_d(i_1')$ implies, as per Lem.\ref{lem:sem_de_pruned} that $\neg(t_1 \doubleVerticalTimes \theta(a))$. As a result, by definition of the weak sequencing operator, $(t_1 \globalWeakSeq a.t_2)$ includes $a.t$.
    \end{itemize}
    Finally, given that $\sigma_d(i)$ includes all $(t_\alpha \globalWeakSeq t_\beta)$ s.t. $t_\alpha \in  \sigma_d(i_1)$ and $t_\beta \in \sigma_d(i_2)$, we have, in particular $(t_1 \globalWeakSeq a.t_2) \subset \sigma_d(i)$. Hence the property holds.
    \item when executing an action underneath a $loop_X$, we have $i$ of the form $loop_X(i_1)$ and $i' = strict(i_1',loop_X(i_1))$ such that $i_1 \xrightarrow{a} i_1'$. We have that $t \in \sigma_d(i')$. Therefore there exists $t_1 \in \sigma_d(i_1')$ and $t_2 \in \sigma_d(i)$ s.t. $t \in (t_1 \globalStrictSeq t_2)$. We then remark that:
    \begin{itemize}
        \item $i_1 \xrightarrow{a} i_1'$ and $t_1 \in \sigma_d(i_1')$. Hence we can apply the induction hypothesis on sub-interaction $i_1$, which implies that $a.t_1 \in \sigma_d(i_1)$.
        \item and as a result, given that $t_2 \in \sigma_d(loop_X(i_1)) = \sigma_d(i_1)^{\globalStrictSeq *}$, and $a.t_1 \in \sigma_d(i_1)$, we have, $(a.t_1 \globalStrictSeq t_2) \subset \sigma_d(i_1)^{\globalStrictSeq *}$ i.e. $(a.t_1 \globalStrictSeq t_2) \subset \sigma_d(i)$
    \end{itemize}
    Then, given that $t \in (t_1 \globalStrictSeq t_2)$, we have immediately that $a.t \in (a.t_1 \globalStrictSeq t_2)$ because it is always possible to add actions from the left.  Therefore $a.t \in \sigma_d(i)$, so the property holds.
    \item executing an action underneath a $loop_H$ can be treated like \textbf{10.}
    \item when executing an action underneath a $loop_S$, we have $i$ of the form $loop_S(i_1)$ and $i' = seq(i_0',seq(i_1',loop_S(i_1)))$ such that $i_1 \xrightarrow{a} i_1'$ and $i \isPruneOf{\theta(a)} i_0'$. Given that $t \in \sigma_d(i')$, there exists $t_0 \in \sigma_d(i_0')$, $t_1 \in \sigma_d(i_1')$ and $t_2 \in \sigma_d(i)$ s.t. $t \in (t_0 \globalWeakSeq t_1 \globalWeakSeq t_2)$. We then remark that:
    \begin{itemize}
        \item Given that $i \isPruneOf{\theta(a)} i_0'$ we have that:
        \begin{itemize}
            \item $\sigma_d(i_0') \subset \sigma_d(i)$ and, given that $\sigma_d(i) = \sigma_d(i_1)^{\globalWeakSeq *}$, is a weak Kleene closure we have that $\sigma_d(i) \globalWeakSeq \sigma_d(i) \subset \sigma_d(i)$ and therefore $\sigma_d(i_0') \globalWeakSeq \sigma_d(i) \subset \sigma_d(i)$
            \item and, given that $t_0 \in \sigma_d(i_0')$, we have as per Lem.\ref{lem:sem_de_pruned} that $\neg(t_0 \doubleVerticalTimes \theta(a))$. Therefore, for any $t_\beta$ and any $t_\alpha \in (t_0 \globalWeakSeq t_\beta)$ we have $a.t_\alpha \in (t_0 \globalWeakSeq a.t_\beta)$ given that we can take action $a$, which have no conflict w.r.t $t_0$, on the right side
        \end{itemize}
        \item We have $i_1 \xrightarrow{a} i_1'$ and $t_1 \in \sigma_d(i_1')$. Hence we can apply the induction hypothesis on sub-interaction $i_1$, which implies that $a.t_1 \in \sigma_d(i_1)$
        \item Given that $t_2 \in \sigma_d(loop_S(i_1)) = \sigma_d(i_1)^{\globalWeakSeq *}$, and $a.t_1 \in \sigma_d(i_1)$, we have, $(a.t_1 \globalWeakSeq t_2) \subset \sigma_d(i_1)^{\globalWeakSeq *}$ i.e. $(a.t_1 \globalWeakSeq t_2) \subset \sigma_d(i)$
    \end{itemize}
     Finally, given $a.t \in (t_0 \globalWeakSeq a.t_1 \globalWeakSeq t_2) \subset (\sigma_d(i_0') \globalWeakSeq \sigma_d(i)) \subset \sigma_d(i)$, the property holds.
    \item executing an action underneath a $loop_P$ can be treated like \textbf{10.}
\end{enumerate}
\end{proof}

\begin{theorem*}[\textbf{Th.\ref{th:sem_op_included_in_sem_de}}]
For any $i \in \mathbb{I}_\Omega$ we have $\sigma_o(i) \subset \sigma_d(i)$
\end{theorem*}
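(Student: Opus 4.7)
The plan is to fix an arbitrary $i \in \mathbb{I}_\Omega$ and show that every $t \in \sigma_o(i)$ lies in $\sigma_d(i)$, proceeding by induction on the length of $t$ (equivalently, on the structure of the derivation of $t \in \sigma_o(i)$ from the two construction rules of Def.\ref{def:operational_semantics}). The induction statement will be parameterized over $i$, so that in the step case I can reapply the hypothesis at a different interaction.

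For the base case, I take $t = \epsilon$. The only rule of Def.\ref{def:operational_semantics} that produces the empty trace is the one requiring $i \downarrow$. I then invoke Lem.\ref{lem:sem_de_terminates} in the $(\Rightarrow)$ direction to conclude $\epsilon \in \sigma_d(i)$.

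For the inductive step, I take $t = a.t'$ with $a \in \mathbb{A}_\Omega$ and $t' \in \mathbb{T}_\Omega$. The only rule of Def.\ref{def:operational_semantics} that produces a non-empty trace is the second one, so $a.t' \in \sigma_o(i)$ forces the existence of some $i' \in \mathbb{I}_\Omega$ with $i \xrightarrow{a} i'$ and $t' \in \sigma_o(i')$. Since $t'$ is strictly shorter than $t$, the induction hypothesis applies to $i'$ and yields $t' \in \sigma_d(i')$. I then feed the two facts $i \xrightarrow{a} i'$ and $t' \in \sigma_d(i')$ into Lem.\ref{lem:sem_de_execute1}, whose conclusion is precisely $a.t' \in \sigma_d(i)$.

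There is essentially no obstacle, because the proof is built into the design: Def.\ref{def:operational_semantics} introduces exactly two ways of producing a trace in $\sigma_o(i)$, and for each of these we have already established a companion lemma showing that $\sigma_d$ is closed under the same rule, namely Lem.\ref{lem:sem_de_terminates} for $\epsilon$ and Lem.\ref{lem:sem_de_execute1} for $a.t'$. The only mild care needed is to phrase the induction so that the hypothesis quantifies over all interactions, not just the fixed $i$, so that it can be instantiated at $i'$ in the step case; this is automatic if the induction is done on the length of $t$ with $i$ universally quantified in the inductive statement.
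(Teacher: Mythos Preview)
Your proof is correct and follows essentially the same approach as the paper: induction on the trace $t$, using Lem.\ref{lem:sem_de_terminates} for the base case $t=\epsilon$ and Lem.\ref{lem:sem_de_execute1} for the inductive step $t=a.t'$. Your explicit remark about quantifying the induction hypothesis over all interactions (so it can be instantiated at $i'$) makes precise something the paper leaves implicit.
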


\begin{proof}
Let us consider $i \in \mathbb{I}_\Omega$ and $t \in \sigma_o(i)$ and let us reason by induction on the trace $t$.
\begin{itemize}
    \item If $t = \epsilon$, then, as per the definition of $\sigma_o$, this means that $i\downarrow$. Then as per Lem.\ref{lem:sem_de_terminates}, this means that $\epsilon \in \sigma_d(i)$.
    \item If $t = a.t'$ then, by definition of $\sigma_o$, $a.t' \in \sigma_o(i)$ iff $\exists~ i' \in \mathbb{I}_\Omega$ s.t. $i \xrightarrow{a} i'$ and $t' \in \sigma_o(i')$. Let us therefore consider such an interaction $i'$. By the induction hypothesis on trace $t'$, we have $(t' \in \sigma_o(i')) \Rightarrow (t' \in \sigma_d(i'))$. As a result we have $i \xrightarrow{a} i'$ and $t' \in \sigma_d(i')$. We can therefore apply Lem.\ref{lem:sem_de_execute1} to conclude that $a.t' \in \sigma_d(i)$. Hence the property holds.
\end{itemize}
\end{proof}

\begin{lemma}[\textbf{Lem.\ref{lem:sem_de_execute2}}]
For any $a \in \mathbb{A}_\Omega$, $t \in \mathbb{T}_\Omega$ and $i \in \mathbb{I}_\Omega$:
\[
(a.t \in \sigma_d(i))
\Rightarrow
\left(
\exists~i' \in \mathbb{I}_\Omega,~
\begin{array}{c}
(i \xrightarrow{a} i')
\wedge 
(t \in \sigma_d(i'))
\end{array}
\right)
\]
\end{lemma}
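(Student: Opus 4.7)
The plan is to proceed by structural induction on the interaction term $i$, following the same organization as the proof of Lem.\ref{lem:sem_de_execute1} but running the analysis in the opposite direction: rather than using a rule of $\rightarrow$ as hypothesis and building a trace in $\sigma_d(i)$, we will pattern-match on the structure of $i$, extract from the hypothesis $a.t \in \sigma_d(i)$ a decomposition dictated by the semantic operators, and exhibit a witness $i'$ together with the rule of $\rightarrow$ that justifies $i \xrightarrow{a} i'$. The cases $i = \varnothing$ (vacuous, as $a.t \notin \{\epsilon\}$) and $i = b \in \mathbb{A}_\Omega$ (so $b = a$, $t = \epsilon$, $i' = \varnothing$) are immediate, and $alt$ reduces directly to one inductive call on the branch containing $a.t$.

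For the binary scheduling constructors, the strategy is to decompose $a.t \in \sigma_d(i_1) \diamond \sigma_d(i_2)$ into constituent traces $t_1 \in \sigma_d(i_1)$ and $t_2 \in \sigma_d(i_2)$, determine which side supplies the leading $a$, and apply the induction hypothesis there. For $strict$ the split is trivial: either $t_1 = a.t_1'$ (left rule, witness $strict(i_1', i_2)$) or $t_1 = \epsilon$ and $t_2 = a.t_2'$ (in which case $i_1 \downarrow$ by Lem.\ref{lem:sem_de_terminates}, enabling the right rule with witness $i_2'$). For $par$ the same dichotomy on whose trace contributes $a$ applies. For $seq$ the right-hand case additionally provides $\neg(t_1 \doubleVerticalTimes \theta(a))$ from the definition of $\globalWeakSeq$; combined with $t_1 \in \sigma_d(i_1)$, Lem.\ref{lem:sem_de_evades_no_conflict} gives $i_1 \evadesLf \theta(a)$, and Lem.\ref{lem:pruning_existence_unicity} produces the unique $i_1'$ with $i_1 \isPruneOf{\theta(a)} i_1'$. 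Lem.\ref{lem:sem_de_pruned} then guarantees $t_1 \in \sigma_d(i_1')$, so the witness $seq(i_1', i_2')$ works.

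The loop cases for $loop_X$ and $loop_P$ are handled uniformly via Lem.\ref{lem:strict_and_interleaving_kleene_closure_operational_charac_on_traces}, which extracts $a.t' \in \sigma_d(i_1)$ with $t \in \{t'\} \diamond \sigma_d(loop_k(i_1))$; the induction hypothesis on $i_1$ supplies $i_1'$ with $i_1 \xrightarrow{a} i_1'$ and $t' \in \sigma_d(i_1')$, yielding the witness $f(i_1', loop_k(i_1))$. The case $loop_H$ is even more direct since its HF-closure definition hardwires that $a$ must come from a first-iteration trace of $\sigma_d(i_1)$.

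The main obstacle is $loop_S$, where $a.t \in \sigma_d(i_1)^{\globalWeakSeq *}$ permits $a$ to be supplied by an arbitrarily late iteration. The plan here is to unfold $a.t \in t_1 \globalWeakSeq \cdots \globalWeakSeq t_n$ for some $n$ and $t_j \in \sigma_d(i_1)$, and perform an auxiliary induction on $n$ to locate the smallest index $k$ such that $t_k = a.t_k'$; by the definition of $\globalWeakSeq$, each of $t_1, \ldots, t_{k-1}$ must evade $\theta(a)$ and there exists a reshuffled decomposition $t \in (t_1 \globalWeakSeq \cdots \globalWeakSeq t_{k-1}) \globalWeakSeq t_k' \globalWeakSeq (t_{k+1} \globalWeakSeq \cdots \globalWeakSeq t_n)$. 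Applying the induction hypothesis of the main induction to $i_1$ with $a.t_k'$ gives $i_1'$ with $i_1 \xrightarrow{a} i_1'$ and $t_k' \in \sigma_d(i_1')$. Since $loop_S(i_1)$ always evades any lifeline, Lem.\ref{lem:pruning_existence_unicity} gives a unique $i_0'$ with $loop_S(i_1) \isPruneOf{\theta(a)} i_0'$; Lem.\ref{lem:sem_de_pruned} then places $t_1 \globalWeakSeq \cdots \globalWeakSeq t_{k-1}$ in $\sigma_d(i_0')$ (using that $\sigma_d(loop_S(i_1)) = \sigma_d(i_1)^{\globalWeakSeq *}$ is absorbing under $\globalWeakSeq$), while $t_{k+1} \globalWeakSeq \cdots \globalWeakSeq t_n$ sits in $\sigma_d(loop_S(i_1))$. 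The witness $seq(i_0', seq(i_1', loop_S(i_1)))$ from the $loop_S$ rule then satisfies $t \in \sigma_d(i')$. The delicate point is justifying the reshuffling of the weak-sequencing decomposition so that $a$ can be pulled in front, which rests squarely on the fact that the earlier iterations all evade $\theta(a)$.
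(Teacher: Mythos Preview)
Your proposal is correct and follows essentially the same route as the paper's own proof: structural induction on $i$, with the binary constructors handled by decomposing $a.t$ along the corresponding scheduling operator, the $seq$ right-hand case discharged via Lem.\ref{lem:sem_de_evades_no_conflict}, Lem.\ref{lem:pruning_existence_unicity} and Lem.\ref{lem:sem_de_pruned}, the $loop_X$/$loop_P$ cases via Lem.\ref{lem:strict_and_interleaving_kleene_closure_operational_charac_on_traces}, and the $loop_S$ case by unfolding into iterates $t_1,\dots,t_n$, locating the iterate supplying $a$, and absorbing the earlier (conflict-free) iterates into the pruned loop. The only cosmetic differences are that the paper does not phrase the $loop_S$ argument as an ``auxiliary induction on $n$'' (it simply picks the index $j$ from which $a$ is taken) and makes explicit the case split on whether the pruned loop $i_0'$ is $\varnothing$ or a genuine $loop_S$ when showing $t_1 \globalWeakSeq \cdots \globalWeakSeq t_{j-1} \subset \sigma_d(i_0')$.
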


\begin{proof}
Let us consider $i \in \mathbb{I}_\Omega$, $a \in \mathbb{A}_\Omega$ and $t \in \mathbb{T}_\Omega$. Let us suppose that $a.t \in \sigma_d(i)$ and let us reason by induction on the term structure of $i$.
\begin{itemize}
    \item we cannot have $i = \varnothing$ because it contradicts $a.t \in \sigma_d(i)$
    \item if $i \in \mathbb{A}_\Omega$ then $a.t \in \sigma_d(i)$ implies that $i=a$ and $t = \epsilon$. We then have the existence of $i' = \varnothing$ which indeed satisfies that $a \xrightarrow{a} \varnothing$ and $\epsilon \in \sigma_d(\varnothing)$
    \item if $i$ is of the form $alt(i_1,i_2)$ then $a.t \in \sigma_d(i)$ implies either $a.t \in \sigma_d(i_1)$ or $a.t \in \sigma_d(i_2)$. Let us suppose it is the first case (the second is identical). Then, we can apply the induction hypothesis on sub-interaction $i_1$, which reveals the existence of $i_1'$ such that $i_1 \xrightarrow{a} i_1'$ and $t \in \sigma_d(i_1')$. By definition of the execution relation "$\rightarrow$", this implies that $alt(i_1,i_2) \xrightarrow{a} i_1'$. As a result, we have identified $i'=i_1'$ which satisfies the property.
    \item if $i$ is of the form $par(i_1,i_2)$ then $a.t \in \sigma_d(i)$ implies the existence of traces $t_1$ and $t_2$ such that $t_1 \in \sigma_d(i_1)$ and $t_2 \in \sigma_d(i_2)$ and $a.t \in (t_1 \globalInterleaving t_2)$. This then implies:
    \begin{itemize}
        \item either that $t_1$ is of the form $a.t_1'$ and $t \in (t_1' \globalInterleaving t_2)$
        \item or $t_2$ is of the form $a.t_2'$ and $t \in (t_1 \globalInterleaving t_2')$
    \end{itemize}
    As both case can be treated identically, let us suppose it is the first case. Given that we have $a.t_1' \in \sigma_d(i_1)$, we can apply the induction hypothesis on sub-interaction $i_1$, which reveals the existence of $i_1'$ such that $i_1 \xrightarrow{a} i_1'$ and $t_1' \in \sigma_d(i_1')$. By definition of the execution relation "$\rightarrow$", this implies that $par(i_1,i_2) \xrightarrow{a} par(i_1',i_2)$. By definition of $\sigma_d$, given that $t_1' \in \sigma_d(i_1')$ and $t_2 \in \sigma_d(i_2)$, we have $(t_1' \globalInterleaving t_2) \subset \sigma_d(par(i_1',i_2))$. Then, given that $t \in (t_1' \globalInterleaving t_2)$, this implies that $t \in \sigma_d(par(i_1',i_2))$. We therefore have identified $i' = par(i_1',i_2)$ which satisfies the property.
    \item if $i$ is of the form $strict(i_1,i_2)$ then $a.t \in \sigma_d(i)$ implies the existence of traces $t_1$ and $t_2$ such that $t_1 \in \sigma_d(i_1)$ and $t_2 \in \sigma_d(i_2)$ and $a.t \in (t_1 \globalStrictSeq t_2)$. This then implies:
    \begin{itemize}
        \item either that $t_1$ is of the form $a.t_1'$ and $t \in (t_1' \globalStrictSeq t_2)$. In that case we can apply the induction hypothesis on sub-interaction $i_1$, which reveals the existence of $i_1'$ s.t. $i_1 \xrightarrow{a} i_1'$ and $t_1' \in \sigma_d(i_1')$. By definition of the execution relation "$\rightarrow$", this implies that $strict(i_1,i_2) \xrightarrow{a} strict(i_1',i_2)$. By definition of $\sigma_d$, given that $t_1' \in \sigma_d(i_1')$ and $t_2 \in \sigma_d(i_2)$, we have $(t_1' \globalStrictSeq t_2) \subset \sigma_d(strict(i_1',i_2))$.
        Then, given that  $t \in (t_1' \globalStrictSeq t_2)$ this implies that $t \in \sigma_d(strict(i_1',i_2))$. Hence $i' = strict(i_1',i_2)$ satisfies the property.
        \item or that $t_1 = \epsilon$ and $t_2 = a.t$. On the one hand, the fact that $t_1 = \epsilon \in \sigma_d(i_1)$ implies, as per  Lem.\ref{lem:sem_de_terminates}, that $i_1\downarrow$. On the other hand, with $t_2 = a.t \in \sigma_d(i_2)$, we can apply the induction hypothesis on sub-interaction $i_2$, which reveals the existence of $i_2'$ s.t. $i_2 \xrightarrow{a} i_2'$ and $t \in \sigma_d(i_2')$. By definition of the execution relation "$\rightarrow$", and because the precondition $i_1\downarrow$ is verified, this implies that $strict(i_1,i_2) \xrightarrow{a} i_2'$. As a result, we have identified $i' = i_2'$ which satisfies the property.
    \end{itemize}
    \item if $i$ is of the form $seq(i_1,i_2)$ then $a.t \in \sigma_d(i)$ implies the existence of traces $t_1$ and $t_2$ such that $t_1 \in \sigma_d(i_1)$ and $t_2 \in \sigma_d(i_2)$ and $a.t \in (t_1 \globalWeakSeq t_2)$. This then implies:
    \begin{itemize}
        \item either that $t_1 = a.t_1'$ and $t \in (t_1' \globalWeakSeq t_2)$. In that case we can apply the induction hypothesis on sub-interaction $i_1$, which reveals the existence of $i_1'$ s.t. $i_1 \xrightarrow{a} i_1'$ and $t_1' \in \sigma_d(i_1')$. By definition of the execution relation "$\rightarrow$", this implies that $seq(i_1,i_2) \xrightarrow{a} seq(i_1',i_2)$. By definition of $\sigma_d$, given that $t_1' \in \sigma_d(i_1')$ and $t_2 \in \sigma_d(i_2)$, we have $(t_1' \globalWeakSeq t_2) \subset \sigma_d(seq(i_1',i_2))$. Then, given that $t \in (t_1' \globalWeakSeq t_2)$, this implies that $t \in \sigma_d(seq(i_1',i_2))$. We therefore have identified $i' = seq(i_1',i_2)$ which satisfies the property.
        \item or that $\neg(t_1 \doubleVerticalTimes \theta(a))$ and that $t_2$ is of the form $a.t_2'$ with $t \in (t_1 \globalWeakSeq t_2')$. On the one hand, the fact that $t_1 \in \sigma_d(i_1)$ is such that $\neg(t_1 \doubleVerticalTimes \theta(a))$, ensures, as per Lem.\ref{lem:sem_de_evades_no_conflict}, that $i_1 \evadesLf \theta(a)$ and therefore, as per Lem.\ref{lem:pruning_existence_unicity}, that there exists a unique $i_1'$ such that $i_1 \isPruneOf{\theta(a)} i_1'$.
        On the other hand, with $t_2 = a.t_2' \in \sigma_d(i_2)$, we can apply the induction hypothesis on sub-interaction $i_2$, which reveals the existence of $i_2'$ s.t. $i_2 \xrightarrow{a} i_2'$ and $t_2' \in \sigma_d(i_2')$.
        By definition of the execution relation "$\rightarrow$", this implies that $seq(i_1,i_2) \xrightarrow{a} seq(i_1',i_2')$.
        Given that $t_1 \in \sigma_d(i_1)$ is such that $\neg(t_1 \doubleVerticalTimes \theta(a))$, as per Lem.\ref{lem:sem_de_pruned}, this implies that $t_1 \in \sigma_d(i_1')$.
        By definition of $\sigma_d$, given that $t_1 \in \sigma_d(i_1')$ and $t_2' \in \sigma_d(i_2')$, we have $(t_1 \globalWeakSeq t_2') \subset \sigma_d(seq(i_1',i_2'))$.
        Then, given that $t \in (t_1 \globalWeakSeq t_2')$, this implies that $t \in \sigma_d(seq(i_1',i_2'))$.
        We therefore have identified $i' = seq(i_1',i_2')$ which satisfies the property.
    \end{itemize}
    \item if $i$ is of the form $loop_X(i_1)$ then $a.t \in \sigma_d(i)$ means that $a.t \in \sigma_d(i_1)^{\globalStrictSeq *}$ and hence, as per Lem.\ref{lem:strict_and_interleaving_kleene_closure_operational_charac_on_traces}, this implies the existence of a trace $t'$ such that $a.t' \in \sigma_d(i_1)$ and $t \in \{t'\} \globalStrictSeq \sigma_d(i_1)^{\globalStrictSeq *}$. 
    Then, the fact that $a.t' \in \sigma_d(i_1)$ allows us to apply the induction hypothesis on sub-interaction $i_1$ to reveal the existence of $i_1'$ such that $i_1 \xrightarrow{a} i_1'$ and $t' \in \sigma_d(i_1')$. By definition of the execution relation "$\rightarrow$", this implies that $loop_X(i_1) \xrightarrow{a} strict(i_1',loop_X(i_1))$. By definition of $\sigma_d$, given that $t' \in \sigma_d(i_1')$ and that $t \in \{t'\} \globalStrictSeq \sigma_d(i_1)^{\globalStrictSeq *}$, we have that $t \in \sigma_d(strict(i_1',loop_X(i_1)))$. Hence $i' = strict(i_1',loop_X(i_1))$ satisfies the property.
    \item for $i$ is of the form $loop_P(i_1)$ the proof can be handled as in the previous case
    \item if $i$ is of the form $loop_H(i_1)$ then $a.t \in \sigma_d(i)$ means that $a.t \in \sigma_d(i_1)^{\globalWeakSeq^\Lsh *}$. By definition, there exists $j > 0$ such that $a.t \in \sigma_d(i_1)^{\globalWeakSeq^\Lsh j} = \sigma_d(i_1) \globalWeakSeq^\Lsh \sigma_d(i_1)^{\globalWeakSeq^\Lsh (j-1)} \subset \sigma_d(i_1) \globalWeakSeq^\Lsh \sigma_d(i_1)^{\globalWeakSeq^\Lsh *}$.
    Because the restricted operator $\globalWeakSeq^\Lsh$ only allows to take the first action of recomposed traces from the left hand side, we can identify a trace $t'$ s.t. $a.t' \in \sigma_d(i_1)$ and $t \in \{t'\} \globalWeakSeq \sigma_d(i_1)^{\globalWeakSeq^\Lsh *}$.
    Hence, we can apply the induction hypothesis on sub-interaction $i_1$ to reveal the existence of $i_1'$ such that $i_1 \xrightarrow{a} i_1'$ and $t' \in \sigma_d(i_1')$. By definition of the execution relation "$\rightarrow$", this implies that $loop_H(i_1) \xrightarrow{a} seq(i_1',loop_H(i_1))$. By definition of $\sigma_d$, given that $t' \in \sigma_d(i_1')$ and that $t \in \{t'\} \globalWeakSeq \sigma_d(i_1)^{\globalWeakSeq^\Lsh *}$, we have that $t \in \sigma_d(seq(i_1',loop_H(i_1)))$. We therefore have identified $i' = seq(i_1',loop_H(i_1))$ which satisfies the property.
    \item if $i$ is of the form $loop_S(i_1)$ then $a.t \in \sigma_d(i)$ means that $a.t \in \sigma_d(i_1)^{\globalWeakSeq *}$. By definition there exists $n > 0$ such that $a.t \in \sigma_d(i_1)^{\globalWeakSeq n}$. As a result, we can identify traces $t_1$ through $t_n$ such that for any $j \in [1,n]$, $t_j \in \sigma_d(i_1)$ and $a.t \in \{t_1\} \globalWeakSeq \cdots \globalWeakSeq \{t_n\}$. By definition of the weak sequencing operator, action $a$ is taken from a certain $t_j$ with $j \in [1,n]$ which is therefore of the form $t_j = a.t_j'$ and we have, for any $k<j$, $\neg(t_k \doubleVerticalTimes \theta(a))$ (otherwise we could not take $a$ from $t_j$) and $t \in \{t_1\} \globalWeakSeq \cdots \globalWeakSeq \{t_{j-1}\} \globalWeakSeq \{t_j'\} \globalWeakSeq \{t_{j+1}\} \globalWeakSeq \cdots \globalWeakSeq \{t_n\}$. We can then remark the following:
    \begin{itemize}
        \item considering $i_0'$ such that $loop_S(i_1) \isPruneOf{\theta(a)} i_0'$ (which existence is guaranteed by Lem.\ref{lem:pruning_existence_unicity} given that a loop always evades any lifeline), because, for any $k<j$, we have $\neg(t_k \doubleVerticalTimes \theta(a))$ then, as per Lem.\ref{lem:sem_de_pruned}, for all $k<j$, we have $t_k \in \sigma_d(i_0')$. Then:
        \begin{itemize}
            \item If all the $t_k$ are the empty trace then $\{t_1\} \globalWeakSeq \cdots \globalWeakSeq \{t_{j-1}\} = \{\epsilon\} \subset \sigma_d(i_0')$ ($\sigma_d(i_0')$ contains at least $\epsilon$ because $loop_S(i_1)$ does)
            \item If at least one $t_k$ is not the empty trace then $i_0'$ is a non empty $loop_S$, and, because it is a $loop_S$, given that for all $k<j$, we have $t_k \in \sigma_d(i_0')$ then $\{t_1\} \globalWeakSeq \cdots \globalWeakSeq \{t_{j-1}\} \subset \sigma_d(i_0')$ (because a $loop_S$ is closed under repetition by  $\globalWeakSeq$)
        \end{itemize}
        Hence $\{t_1\} \globalWeakSeq \cdots \globalWeakSeq \{t_{j-1}\} \subset \sigma_d(i_0')$ and therefore $t \in \sigma_d(i_0') \globalWeakSeq \{t_j'\} \globalWeakSeq \{t_{j+1}\} \globalWeakSeq \cdots \globalWeakSeq \{t_n\}$
        \item given that, for any $k>j$ we have that $t_k \in \sigma_d(i)$ and because $i$ is a loop ($i = loop_S(i_1)$), then $\{t_{j+1}\} \globalWeakSeq \cdots \globalWeakSeq \{t_n\} \subset \sigma_d(i)$ and therefore $t \in \sigma_d(i_0') \globalWeakSeq \{t_j'\} \globalWeakSeq \sigma_d(i)$
        \item given that $t_j = a.t_j' \in \sigma_d(i_1)$ we can apply the induction hypothesis on sub-interaction $i_1$ to reveal the existence of $i_1'$ such that $i_1 \xrightarrow{a} i_1'$ and $t_j' \in \sigma_d(i_1')$. By definition of the execution relation "$\rightarrow$", this implies that $loop_S(i_1) \xrightarrow{a} seq(i_0',seq(i_1',loop_S(i_1)))$
        \item finally, given that we have shown that $t \in \sigma_d(i_0') \globalWeakSeq \{t_j'\} \globalWeakSeq \sigma_d(i)$ and because $t_j' \in \sigma_d(i_1')$, by definition of $\sigma_d$ we can conclude that $t \in \sigma_d(seq(i_0',seq(i_1',loop_S(i_1))))$. Therefore we have identified $i' = seq(i_0',seq(i_1',loop_S(i_1)))$ which satisfies the property.
    \end{itemize}
\end{itemize}
\end{proof}

\begin{theorem*}[\textbf{Th.\ref{th:sem_de_included_in_sem_op}}]
For any $i \in \mathbb{I}_\Omega$ we have $\sigma_o(i) \supset \sigma_d(i)$
\end{theorem*}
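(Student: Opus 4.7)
The plan is to mirror the structure of the proof of Th.\ref{th:sem_op_included_in_sem_de}, but now using Lem.\ref{lem:sem_de_execute2} in place of Lem.\ref{lem:sem_de_execute1}. Specifically, I would fix an arbitrary $i \in \mathbb{I}_\Omega$, take a trace $t \in \sigma_d(i)$, and show $t \in \sigma_o(i)$ by induction on the length of $t$. The key observation is that the two introduction rules defining $\sigma_o$ in Def.\ref{def:operational_semantics} (the empty-trace rule via $\downarrow$, and the prepending rule via $\rightarrow$) exactly match the two ``shapes'' a trace can take ($\epsilon$ or $a.t'$), and each shape has a counterpart lemma relating it to $\sigma_d$.

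For the base case $t = \epsilon$, the hypothesis $\epsilon \in \sigma_d(i)$ gives $i \downarrow$ by Lem.\ref{lem:sem_de_terminates}, and then the first rule of Def.\ref{def:operational_semantics} directly yields $\epsilon \in \sigma_o(i)$.

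For the inductive case $t = a.t'$, we apply Lem.\ref{lem:sem_de_execute2} to the hypothesis $a.t' \in \sigma_d(i)$ to obtain some $i' \in \mathbb{I}_\Omega$ such that $i \xrightarrow{a} i'$ and $t' \in \sigma_d(i')$. Since $t'$ is strictly shorter than $t$, the induction hypothesis applies and gives $t' \in \sigma_o(i')$. The second rule of Def.\ref{def:operational_semantics} then assembles these two facts into $a.t' \in \sigma_o(i)$, which is the desired conclusion.

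There is no real obstacle at this stage: all the difficult case analysis -- in particular the delicate handling of weak sequencing and of the three variants of loops -- has already been absorbed into Lem.\ref{lem:sem_de_execute2}. The only point that deserves a brief mention is that the induction is on the \emph{trace} $t$ (not on the term $i$), so the interaction $i'$ provided by Lem.\ref{lem:sem_de_execute2} need not be structurally smaller than $i$ -- indeed for loops it typically is not -- but it is paired with the strictly shorter trace $t'$, which is what makes the induction well-founded.
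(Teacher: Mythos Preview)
Your proposal is correct and matches the paper's proof essentially step for step: induction on the trace $t$, with the base case handled by Lem.\ref{lem:sem_de_terminates} and the inductive step by Lem.\ref{lem:sem_de_execute2} followed by the second rule of Def.\ref{def:operational_semantics}. Your closing remark about why the induction must be on $t$ rather than on $i$ is a welcome clarification that the paper leaves implicit.
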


\begin{proof}
Let us consider $i \in \mathbb{I}_\Omega$ and $t \in \sigma_d(i)$ and let us reason by induction on the trace $t$.
\begin{itemize}
    \item If $t = \epsilon$, the fact that $t = \epsilon \in \sigma_d(i)$ implies, as per Lem.\ref{lem:sem_de_terminates}, that $i\downarrow$. Then, by definition of $\sigma_o$, this means that $\epsilon \in \sigma_o(i)$.
    \item If $t = a.t'$ then, the fact that $a.t' \in \sigma_d(i)$ implies, as per Lem.\ref{lem:sem_de_execute2}, that there exists an interaction $i'$ such that $i \xrightarrow{a} i'$ and $t' \in \sigma_d(i')$. Let us therefore consider such an interaction $i'$. By the induction hypothesis on trace $t'$, we have $(t' \in \sigma_d(i')) \Rightarrow (t' \in \sigma_o(i'))$. As a result we have $i \xrightarrow{a} i'$ and $t' \in \sigma_o(i')$. By definition of the operational semantics, this implies that $a.t' \in \sigma_o(i)$. Hence the property holds.
\end{itemize}
\end{proof}

\end{document}